\title{On the hardness of finding normal surfaces}
\date{\today}
\author{
 Benjamin A. Burton\\
 The University of Queensland\\
 \email{bab}{maths.uq.edu.au}
 \and
 Alexander He\\
 The University of Queensland\\
 \email{a.he}{uqconnect.edu.au}
}
\setlist[itemize]{leftmargin=*, noitemsep}
\setlist[enumerate]{leftmargin=*, noitemsep}
\setlist[description]{leftmargin=*, labelwidth=*, noitemsep}
\theoremstyle{plain}
\newtheorem{theorem}{Theorem}
\newtheorem{lemma}[theorem]{Lemma}
\newtheorem{proposition}[theorem]{Proposition}
\newtheorem{construction}[theorem]{Construction}
\newtheorem{observation}[theorem]{Observation}
\theoremstyle{definition}
\newtheorem{problem}[theorem]{Problem}
\newcommand{\Integer}{\mathbb{Z}}
\newcommand{\Real}{\mathbb{R}}
\newcommand{\rvec}[1]{\mathbf{#1}}
\newcommand{\mbold}[1]{\mbox{\boldmath$#1$}}
\newcommand{\sfrac}[3][1.5pt]{\tfrac{\hspace{#1}#2\hspace{#1}}{\hspace{#1}#3\hspace{#1}}}
\newcommand{\scap}[1]{{\normalfont\textsc{#1}}}
\DeclarePairedDelimiter\paren{(}{)}
\DeclarePairedDelimiter\pparen{\big(}{\big)}
\DeclarePairedDelimiter\curly{\{}{\}}
\tikzset{
	midarrow/.style n args={2}{
		postaction={
			decorate,
			decoration={
				markings,
				mark=at position #1 with {\arrow{#2}}
			}
		}
	}
}
\tikzset{
	outlined/.style n args={1}{
		preaction={draw, line width=#1, white}
	}
}
\newcommand{\whitearrow}{Latex[length=2mm,width=3mm,fill=white]}
\newcommand{\blackarrow}{Latex[length=2mm,width=3mm]}
\newcommand{\stealtharrow}{Stealth[length=3mm,width=3mm]}
\newcommand{\dirarrows}{\stealtharrow\stealtharrow\stealtharrow}
\renewenvironment{proof}[1][\proofname] {\par\pushQED{\qed}\normalfont\topsep6\p@\@plus6\p@\relax\trivlist\item[\hskip\labelsep\itshape\bfseries#1\@addpunct{.}]\ignorespaces}{\popQED\endtrivlist\@endpefalse}
\newcommand{\email}[2]{\href{mailto:#1@#2}{\textsf{#1\hspace{1pt}$@$\hspace{1pt}#2}}}
\begin{document}

\maketitle

\begin{abstract}
For many fundamental problems in computational topology,
such as unknot recognition and $3$-sphere recognition,
the existence of a polynomial-time solution remains unknown.
A major algorithmic tool behind some of the best known algorithms for these problems is normal surface theory.
However, we currently have a poor understanding of
the computational complexity of problems in normal surface theory:
many such problems are still not known to have polynomial-time algorithms,
yet proofs of $\mathrm{NP}$-hardness also remain scarce.
We give three results that provide some insight on this front.

A number of modern normal surface theoretic algorithms depend critically on
the operation of finding a non-trivial normal sphere or disc in a $3$-dimensional triangulation.
We formulate an abstract problem that captures the algebraic and combinatorial aspects of this operation,
and show that this abstract problem is $\mathrm{NP}$-complete.
Assuming $\mathrm{P}\neq\mathrm{NP}$,
this result suggests that any polynomial-time procedure for finding a non-trivial normal sphere or disc
will need to exploit some geometric or topological intuition.

Another key operation, which applies to a much wider range of topological problems,
involves finding a vertex normal surface of a certain type.
We study two closely-related problems that can be solved using this operation.
For one of these problems, we give a simple alternative solution that runs in polynomial time;
for the other, we prove $\mathrm{NP}$-completeness.
\end{abstract}
\paragraph{Keywords}Computational topology, 3-manifolds, Normal surfaces, Computational complexity

\section{Introduction}\label{sec:intro}
Despite the central importance of topological problems like unknot recognition and $3$-sphere recognition,
it remains unknown whether these problems have polynomial-time solutions.
Since unknot recognition has been shown to be in both $\mathrm{NP}$
(by Hass, Lagarias and Pippenger \cite{HassLagariasPippenger99})
and $\mathrm{co\text{-}NP}$ (by Lackenby \cite{Lackenby16}, based on ideas first presented by Agol \cite{Agol02}),
there is reason to suspect that there might indeed exist a polynomial-time algorithm for unknot recognition.
The situation for $3$-sphere recognition is similar, though not quite as clear-cut:
Ivanov \cite{Ivanov08} and Schleimer \cite{Schleimer11} showed that $3$-sphere recognition is in $\mathrm{NP}$,
while Zentner \cite{Zentner18} showed that, assuming the Generalised Riemann Hypothesis,
$3$-sphere recognition is in $\mathrm{co\text{-}NP}$.

Although we do not yet have polynomial-time algorithms for topological problems like these, there has been significant progress in the development of algorithms that are both simple enough to have been implemented, and efficient enough to be useful in practice. Many of these practical algorithms are based on normal surface theory. For instance, normal surface theory forms the foundation for several important algorithms provided in the topological software package \texttt{Regina} \cite{Burton13Regina,Regina}, including algorithms for unknot recognition and $3$-sphere recognition.

The success of normal surface theory in producing practical algorithms is due in large part to a number of significant developments from the last two decades. Indeed, up until the early 2000s, the role of normal surface theory was essentially confined to proving decidability of topological problems. This is exemplified by Haken's famous 1961 paper \cite{Haken61}, which detailed the first ever algorithm for unknot recognition, making unknot recognition the first of several problems to be proven decidable using normal surface theory. In the four decades following Haken's pioneering work, although normal surface theory successfully produced algorithms for a number of topological problems, such algorithms were generally too intricate to be implemented, and too inefficient to be useful in practice.

Given that normal surface theory had been applied to so many problems,
finding a way to make these algorithms feasible to both implement and execute was an enormous achievement.
The breakthrough came in 2003, when Jaco and Rubinstein introduced their theory of \textbf{0-efficiency} \cite{JacoRubinstein03},
which drew on earlier unpublished work of Casson.
As part of this theory, Jaco and Rubinstein developed the \textbf{crushing procedure},
which provides a computationally efficient way to simplify a $3$-dimensional triangulation $\mathcal{T}$
by ``crushing away'' a non-trivial normal sphere or disc in $\mathcal{T}$.
This crushing procedure lies at the heart of many of the practical normal-surface-theoretic algorithms
that have been developed and implemented in the last two decades \cite{Burton13Regina,Burton14,BurtonOzlen12}.

For the crushing procedure to be useful, we need to have some algorithm to \emph{find} a non-trivial normal sphere or disc (or else show that no such surface exists). The traditional approach involves a technique known as ``vertex normal surface enumeration'', and relies on the following fact: if a triangulation contains any non-trivial normal spheres or discs, then at least one such surface must appear as a special type of normal surface known as a \textbf{vertex normal surface}. Since a triangulation always contains only finitely many vertex normal surfaces, it is enough to enumerate all vertex normal surfaces, and then check whether any of these surfaces is a non-trivial sphere or disc. Unfortunately, in the worst case, the number of vertex normal surfaces is exponential in the number of tetrahedra, which makes this traditional approach computationally expensive.

As yet, no polynomial-time algorithm for finding a non-trivial normal sphere or disc is known. However, there was some exciting progress in 2012, when the authors of \cite{BurtonOzlen12} introduced a ``tree traversal procedure'' for finding a non-trivial normal sphere or disc, which they used to give a new algorithm for unknot recognition. This unknot recognition algorithm is notable for the following two reasons.
\begin{itemize}
\item First, although this algorithm theoretically requires exponential time, it is the first unknot recognition algorithm to exhibit \emph{polynomial-time} behaviour under exhaustive experimentation. In particular, when the tree traversal procedure is replaced with techniques based on vertex normal surface enumeration, the experimental running times revert to an exponential-time profile. This suggests that the tree traversal procedure is, in some sense, ``polynomial-time in practice''.
\item Second, the tree traversal procedure is the only component of this unknot recognition algorithm that requires exponential time; indeed, every other component of the algorithm is provably polynomial-time. For the optimistic reader, this has the following consequence: if we could design a polynomial-time procedure for finding a non-trivial normal sphere or disc, then we would immediately obtain a polynomial-time algorithm for unknot recognition.
\end{itemize}

With this second point in mind, we have a very strong motivation to investigate whether there exists a polynomial-time algorithm for the problem of finding a non-trivial normal sphere or disc. Given that such an algorithm has so far been elusive, we have equally strong motivation to investigate the possibility that this problem is actually $\mathrm{NP}$-hard. Notably, a proof of $\mathrm{NP}$-hardness would have the following practical implication: if we hope to design a polynomial-time algorithm for unknot recognition, then we will probably need to use different techniques than those used in \cite{BurtonOzlen12}. Such a proof would also be theoretically significant, since $\mathrm{NP}$-hardness results remain rare in normal surface theory.

Although we do not prove any results that directly relate to
the problem of finding a non-trivial normal sphere or disc,
we do provide some insight by studying three related problems.
For each of these related problems, we give a ``conclusive'' complexity result:
we either exhibit a polynomial-time algorithm, or prove $\mathrm{NP}$-completeness.
In a nutshell, our three main results can be seen as ``reference points'' to guide
our intuition about the complexity of other problems in normal surface theory.

Our first result is directly motivated by the techniques used in \cite{BurtonOzlen12}.
In particular, the tree traversal procedure from \cite{BurtonOzlen12}
essentially treats the task of finding a non-trivial normal sphere or disc as
an optimisation problem over a certain set of linear and combinatorial constraints.
In section \ref{sec:absNormConOpt}, we introduce a particular abstract formulation of these constraints,
and show that the corresponding optimisation problem is $\mathrm{NP}$-complete.
This has a significant implication for anyone hoping to design a polynomial-time algorithm for
finding a non-trivial normal sphere or disc:
assuming $\mathrm{P}\neq\mathrm{NP}$, our result suggests that
such an algorithm would need to exploit some geometric or topological ideas,
not just the linear algebra that is currently used in the tree traversal procedure.

Our other two results deal with concrete problems from normal surface theory (as opposed to the first, more abstract result). Specifically, in section \ref{sec:splitting}, we study two problems involving a type of normal surface called a \textbf{spanning central surface}. Just like the problem of finding a non-trivial normal sphere or disc, our spanning central surface problems can both be solved by enumerating vertex normal surfaces. Motivated by this similarity, we prove the following pair of contrasting results: one of our spanning central surface problems has a simple polynomial-time solution, while the other is $\mathrm{NP}$-complete. The upshot is that we have identified two closely-related problems that sit near the threshold between ``easy'' and ``hard'', one on each side.

\section{Preliminaries}\label{sec:prelims}

In computational $3$-manifold topology, we need a discrete way to represent $3$-manifolds.
Since simplicial complexes can often be very large,
we use a more flexible structure known as a \textbf{(generalised)\footnote{
For brevity, we usually drop the word ``generalised''.} triangulation};
a triangulation $\mathcal{T}$ consists of $n$ tetrahedra
that have been ``glued'' together by affinely identifying some or all of their $4n$ triangular faces in pairs.
Such triangulations are ``generalised'' because we allow two faces of the same tetrahedron to be identified.
Moreover, we allow multiple edges of a single tetrahedron to become identified
(as a result of the face identifications), and likewise with vertices.

We tacitly assume that our triangulations are connected. We also insist that our triangulations never contain any \textbf{invalid edges}; an edge $e$ is invalid if, as a result of the face identifications, $e$ has been identified with itself in reverse.

If the underlying topological space of a triangulation $\mathcal{T}$ is
actually a topological $3$-manifold, then we call $\mathcal{T}$ a \textbf{3-manifold triangulation}.
One way to determine whether $\mathcal{T}$ is a $3$-manifold triangulation is
to examine its \textbf{vertex links};
given a vertex $v$ of $\mathcal{T}$,
the link of $v$ is defined to be the boundary of a small regular neighbourhood of $v$.
A triangulation is a $3$-manifold triangulation if and only if
every vertex link is either a sphere (if the vertex is internal) or a disc (if the vertex is on the boundary).

Having introduced triangulations, we devote the remainder of this section to
introducing the aspects of normal surface theory that are most relevant for our purposes;
a more comprehensive overview can be found in \cite{HassLagariasPippenger99}.
A \textbf{normal surface} in a triangulation $\mathcal{T}$ is a surface which:
\begin{itemize}
\item is properly embedded in $\mathcal{T}$;
\item intersects each simplex in $\mathcal{T}$ transversely; and
\item intersects each tetrahedron $\Delta$ of $\mathcal{T}$ in a finite (and possibly empty) collection of discs,
where each disc is a curvilinear triangle or quadrilateral whose vertices lie on different edges of $\Delta$.
\end{itemize}
The curvilinear triangles and quadrilaterals are collectively known as \textbf{elementary discs}.
In the literature, elementary discs are also often called ``normal discs'';
however, we reserve the words ``normal disc'' to mean an entire normal surface that forms a disc.

Under a \textbf{normal isotopy}, which is defined to be an ambient isotopy that preserves every simplex in a given triangulation, the number of times a normal surface intersects each simplex can never change. Within each tetrahedron, the elementary discs get divided into seven equivalence classes under normal isotopy. Each equivalence class is called an \textbf{elementary disc type} \cite{HassLagariasPippenger99}. As illustrated in Figure \ref{fig:elemDiscTypes}, there are four \textbf{triangle types} and three \textbf{quadrilateral types} in each tetrahedron.

\begin{figure}[htbp]
\centering
	\begin{subfigure}[t]{0.24\textwidth}
	\centering
		\begin{tikzpicture}[scale=0.5]
		
		% Tetrahedron with triangles
		\draw[thick, dashed] (-3,2) -- (3,2);
		
		\fill[pink] (-0.75,0.5) -- (0,1.375) -- (0.75,0.5);
		\draw[dashed] (-0.75,0.5) -- (0.75,0.5);
		\draw (-0.75,0.5) -- (0,1.375) -- (0.75,0.5);
		
		\fill[pink] (-2.25,1.5) -- (-1.5,2) -- (-2.25,2.875);
		\draw[dashed] (-2.25,1.5) -- (-1.5,2) -- (-2.25,2.875);
		\draw (-2.25,1.5) -- (-2.25,2.875);
		
		\fill[pink] (2.25,1.5) -- (1.5,2) -- (2.25,2.875);
		\draw[dashed] (2.25,1.5) -- (1.5,2) -- (2.25,2.875);
		\draw (2.25,1.5) -- (2.25,2.875);
		
		\fill[pink] (-0.75,4.625) -- (0,4.125) -- (0.75,4.625);
		\draw[dashed] (-0.75,4.625) -- (0.75,4.625);
		\draw (-0.75,4.625) -- (0,4.125) -- (0.75,4.625);
		
		\draw[thick] (0,0) -- (-3,2) -- (0,5.5) -- (3,2) -- cycle;
		\draw[thick] (0,0) -- (0,5.5);
		
		\end{tikzpicture}
	\end{subfigure}
	\begin{subfigure}[t]{0.24\textwidth}
	\centering
		\begin{tikzpicture}[scale=0.5]
		
		% Tetrahedron with quadrilateral 1
		\draw[thick, dashed] (-3,2) -- (3,2);
		
		\fill[pink] (1.5,1) -- (0,3.25) -- (-1.5,3.75) -- (-0.5,2);
		\draw[dashed] (-1.5,3.75) -- (-0.5,2) -- (1.5,1);
		\draw (1.5,1) -- (0,3.25) -- (-1.5,3.75);
		
		\draw[thick] (0,0) -- (-3,2);
		\draw[thick] (-3,2) -- (0,5.5);
		\draw[thick] (0,5.5) -- (3,2);
		\draw[thick] (3,2) -- (0,0);
		\draw[thick] (0,0) -- (0,5.5);
		
		\end{tikzpicture}
	\end{subfigure}
	\begin{subfigure}[t]{0.24\textwidth}
	\centering
		\begin{tikzpicture}[scale=0.5]
		
		% Tetrahedron with quadrilateral 2
		\draw[thick, dashed] (-3,2) -- (3,2);
		
		\fill[pink] (-1.5,1) -- (0,3.25) -- (1.5,3.75) -- (0.5,2);
		\draw[dashed] (1.5,3.75) -- (0.5,2) -- (-1.5,1);
		\draw (-1.5,1) -- (0,3.25) -- (1.5,3.75);
		
		\draw[thick] (0,0) -- (-3,2);
		\draw[thick] (-3,2) -- (0,5.5);
		\draw[thick] (0,5.5) -- (3,2);
		\draw[thick] (3,2) -- (0,0);
		\draw[thick] (0,0) -- (0,5.5);
		
		\end{tikzpicture}
	\end{subfigure}
	\begin{subfigure}[t]{0.24\textwidth}
	\centering
		\begin{tikzpicture}[scale=0.5]
		
		% Tetrahedron with quadrilateral 3
		\draw[thick, dashed] (-3,2) -- (3,2);
		
		\fill[pink] (1.5,1) -- (1.5,3.75) -- (-1.5,3.75) -- (-1.5,1);
		\draw[dashed] (-1.5,1) -- (1.5,1);
		\draw[dashed] (-1.5,3.75) -- (1.5,3.75);
		\draw (1.5,1) -- (1.5,3.75);
		\draw (-1.5,1) -- (-1.5,3.75);
		
		\draw[thick] (0,0) -- (-3,2) -- (0,5.5) -- (3,2) -- cycle;
		\draw[thick] (0,0) -- (0,5.5);
		
		\end{tikzpicture}
	\end{subfigure}
\caption{The seven elementary disc types.}
\label{fig:elemDiscTypes}
\end{figure}
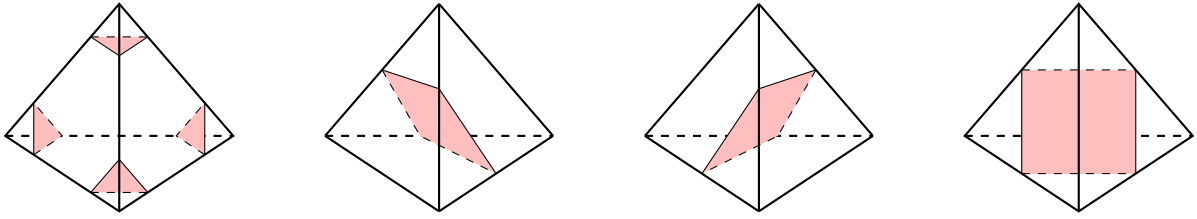

Let $\mathcal{T}$ be a triangulation with $n$ tetrahedra.
Any normal surface $S$ in $\mathcal{T}$ can be represented uniquely, up to normal isotopy,
by a vector $\rvec{v}(S)\in\Integer^{7n}$ that counts the number of elementary discs of each type in each tetrahedron.
The vector $\rvec{v}(S)$ is called the \textbf{standard vector representation} of $S$,
and its $7n$ integer coordinates are called \textbf{normal coordinates}.
The $4n$ coordinates that count triangles are called \textbf{triangle coordinates},
and the $3n$ coordinates that count quadrilaterals are called \textbf{quadrilateral coordinates}.

It is not immediately clear which points $\rvec{x}\in\Integer^{7n}$ actually represent normal surfaces. The most obvious necessary condition is the \textbf{non-negativity condition}, which requires that the coordinates of $\rvec{x}$ are all non-negative. It turns out that we only need two more necessary conditions, known as the matching equations and the quadrilateral constraints, to obtain a list of conditions which is also sufficient \cite{Haken61}. The following is a brief overview of these two conditions.
\begin{itemize}
\item For a collection of elementary discs to glue together to form a normal surface, they need to ``match up'' across pairs of identified triangular faces. An elementary disc always meets a triangular face $f$ of a tetrahedron in one of three possible types of \textbf{normal arcs}. Each normal arc type is ``parallel'' to one of the three edges of $f$, in the sense that the arc joins the \emph{other} two edges of $f$. Two normal arcs are of the same type if and only if they are related by a normal isotopy.

With this in mind, consider a triangular face $f_1$ of a tetrahedron $\Delta_1$, and let $e_1$ be one of the edges of $f_1$. Suppose $f_1$ is identified with another face $f_2$ of some tetrahedron $\Delta_2$. As a result of this face identification, $e_1$ becomes identified with an edge $e_2$ of $f_2$. The matching condition requires the number of normal arcs parallel to $e_1$ in $f_1$ to be equal to the number of normal arcs parallel to $e_2$ in $f_2$. To describe this constraint as a linear equation, we let:
	\begin{itemize}
	\item $t_1$ be the number of triangles in $\Delta_1$ giving rise to a normal arc parallel to $e_1$ in $f_1$;
	\item $q_1$ be the number of quadrilaterals in $\Delta_1$ giving rise to a normal arc parallel to $e_1$ in $f_1$;
	\item $t_2$ be the number of triangles in $\Delta_2$ giving rise to a normal arc parallel to $e_2$ in $f_2$; and
	\item $q_2$ be the number of quadrilaterals in $\Delta_2$ giving rise to a normal arc parallel to $e_2$ in $f_2$.
	\end{itemize}
Then, we require points $\rvec{x}\in\Integer^{7n}$ to satisfy the \textbf{matching equation}
\[
t_1+q_1=t_2+q_2.
\]
Figure \ref{fig:matchEqn} gives an example with $t_1=1$, $q_1=2$, $t_2=3$ and $q_2=0$, which clearly satisfies this matching equation.

Since there are $4n$ tetrahedron faces in total, up to $2n$ pairs of faces can be identified. Each such identification yields three matching equations, giving a total of at most $6n$ equations, with equality if and only if all $4n$ faces have been paired up.

\item The \textbf{quadrilateral constraints} require that each tetrahedron has at most one non-zero quadrilateral coordinate. This condition is necessary because in each tetrahedron, any two quadrilaterals of different types must always intersect. Such intersections need to be avoided because normal surfaces are (by definition) embedded surfaces.
\end{itemize}

\begin{figure}[htbp]
\centering
	\begin{tikzpicture}[scale=0.8]
	
	%%% Left tet
	\draw[thick, dashed] (-3,2) -- (3,2);
	
	% Quad
	\fill[pink] (1.5,1) -- (0,3.25) -- (-1.5,3.75) -- (-0.5,2);
	\draw[dashed] (-1.5,3.75) -- (-0.5,2) -- (1.5,1);
	\draw (0,3.25) -- (-1.5,3.75);
	\draw[ultra thick, red] (1.5,1) -- (0,3.25);
	
	% Quad
	\fill[pink] (1.125,0.75) -- (0,2.32) -- (-2,3.1667) -- (-1.3,2);
	\draw[dashed] (-2,3.1667) -- (-1.3,2) -- (1.125,0.75);
	\draw (0,2.32) -- (-2,3.1667);
	\draw[ultra thick, red] (1.125,0.75) -- (0,2.32);
	
	% Triangle
	\fill[pink] (-0.75,0.5) -- (0,1.5) -- (0.75,0.5);
	\draw[dashed] (-0.75,0.5) -- (0.75,0.5);
	\draw (-0.75,0.5) -- (0,1.5);
	\draw[ultra thick, red] (0,1.5) -- (0.75,0.5);
	
	%\draw[thick] (0,0) node[below]{1} -- (-3,2) node[left]{2} -- (0,5.5) node[above]{0} -- (3,2) node[right]{3} -- cycle;
	\draw[thick] (0,0) -- (-3,2);
	\draw[thick] (-3,2) -- (0,5.5);
	\draw[ultra thick, line cap=round] (0,5.5) -- (3,2);
	\draw[ultra thick, line cap=round] (3,2) -- (0,0);
	\draw[ultra thick, line cap=round] (0,0) -- (0,5.5);
	\node[above right, inner sep=0pt] at (1.5,3.75) {$e_1$};
	%%%

	%%% Right tet
	\begin{scope}[shift={(8,0)}]
	\draw[thick, dashed] (3,2) -- (-3,2);
	
	% Triangle
	\fill[pink] (-1.5,1) -- (0,3.25) -- (1.5,1);
	\draw[dashed] (-1.5,1) -- (1.5,1);
	\draw (1.5,1) -- (0,3.25);
	\draw[ultra thick, blue] (0,3.25) -- (-1.5,1);
	
	% Triangle
	\fill[pink] (-1.125,0.75) -- (0,2.32) -- (1.125,0.75);
	\draw[dashed] (-1.125,0.75) -- (1.125,0.75);
	\draw (1.125,0.75) -- (0,2.32);
	\draw[ultra thick, blue] (0,2.32) -- (-1.125,0.75);
	
	% Triangle
	\fill[pink] (-0.75,0.5) -- (0,1.5) -- (0.75,0.5);
	\draw[dashed] (-0.75,0.5) -- (0.75,0.5);
	\draw (0.75,0.5) -- (0,1.5);
	\draw[ultra thick, blue] (0,1.5) -- (-0.75,0.5);
	
	%\draw[thick] (0,0) node[below]{1} -- (-3,2) node[left]{2} -- (0,5.5) node[above]{0} -- (3,2) node[right]{3} -- cycle;
	\draw[ultra thick, line cap=round] (0,0) -- (-3,2);
	\draw[ultra thick, line cap=round] (-3,2) -- (0,5.5);
	\draw[thick] (0,5.5) -- (3,2);
	\draw[thick] (3,2) -- (0,0);
	\draw[ultra thick, line cap=round] (0,0) -- (0,5.5);
	\node[above left, inner sep=0pt] at (-1.5,3.75) {$e_2$};
	\end{scope}
	%%%

	%%% Face labels
	\node at (1.25,2.75) {$\mbold{f_1}$};
	\node at (6.75,2.75) {$\mbold{f_2}$};
	%%%
	
	\end{tikzpicture}
\caption{The matching equations assert that the number of red arcs should equal the number of blue arcs.}
\label{fig:matchEqn}
\end{figure}
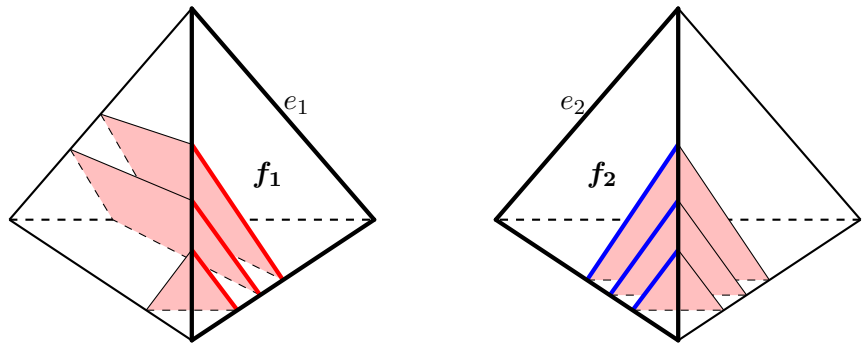

Any vector $\rvec{x}\in\Real^{7n}$ (including non-integer vectors) is called \textbf{admissible} if it simultaneously satisfies the non-negativity condition, the matching equations and the quadrilateral constraints. Haken showed that the admissible points in $\Integer^{7n}$ are precisely the points that represent normal surfaces.

\begin{theorem}[Haken]\label{thm:admissibility}
Let $\mathcal{T}$ be an $n$-tetrahedron triangulation. A vector $\rvec{x}\in\Integer^{7n}$ represents a normal surface in $\mathcal{T}$ (uniquely up to normal isotopy) if and only if $\rvec{x}$ is admissible \cite{Haken61,HassLagariasPippenger99}.
\end{theorem}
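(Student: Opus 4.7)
The forward direction is essentially immediate from the definition of a normal surface. If $S$ is a normal surface with vector $\rvec{v}(S)$, non-negativity holds because the coordinates simply count elementary discs; the quadrilateral constraints hold because any two quadrilateral types in a single tetrahedron meet transversely in an arc, contradicting embeddedness; and the matching equations hold because the arcs in which $S$ meets each triangular face must agree from either side of the face identification.

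For the converse, my plan is to build a normal surface from an admissible integer vector $\rvec{x}$, working tetrahedron by tetrahedron. First, within each tetrahedron $\Delta$, I would realise the prescribed counts as disjoint, parallel copies of the appropriate elementary discs: the four triangle types can each be pushed close to the corresponding vertex, and the quadrilateral constraint limits us to parallel copies of at most one quadrilateral type, which can be arranged to sit in the ``middle'' of $\Delta$, disjoint from the triangle corners. The outcome is a disjoint union of embedded discs inside $\Delta$, determined up to normal isotopy by the restriction of $\rvec{x}$ to $\Delta$.

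Next, I would use the matching equations to glue these local pieces along identified faces. On each face $f$, every elementary disc in the adjacent tetrahedron contributes a normal arc of a specific type, and the matching equations assert precisely that identified faces carry equal numbers of arcs of each corresponding type. Since all arcs of a given type on a single triangle are mutually parallel, I can pair them up across the identification in an essentially canonical way, so that the local collections fit together into a globally embedded surface $S$ with $\rvec{v}(S) = \rvec{x}$.

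Uniqueness up to normal isotopy then follows because the local disc pattern inside each $\Delta$ is determined by $\rvec{x}$ up to normal isotopy, and so is the arc pairing along each identified face. I expect the main technical obstacle to be making the gluing step rigorous: one has to check that the canonical arc pairings on each face are compatible with the local disc structures on both sides and introduce no branching or self-intersection. This boils down to a careful but finite case analysis of the possible arc configurations on a triangular face, together with the observation that two parallel arcs meeting along a common edge can always be joined to form a properly embedded sub-surface.
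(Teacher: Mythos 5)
The paper does not prove Theorem~\ref{thm:admissibility}; it records it as a known result of Haken, citing \cite{Haken61} and \cite{HassLagariasPippenger99}, so there is no in-paper proof to compare against. Your sketch follows the standard argument in those sources, and the overall structure is sound: the forward direction reads off the three conditions from the definition of a normal surface, and the converse positions the prescribed counts of elementary discs inside each tetrahedron (with the quadrilateral constraint guaranteeing disjointness locally) and then glues along identified faces using the matching equations.

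One point where your phrase ``essentially canonical'' hides some real content: on a face $f$, the arcs parallel to a fixed edge $e$ of $f$ come from two different sources, namely triangles surrounding the vertex of $f$ opposite $e$ and quadrilaterals of the appropriate type, and these arcs are linearly ordered by distance from that opposite vertex, with all triangle arcs lying strictly closer than all quadrilateral arcs. The matching equation only equates the totals $t_1 + q_1 = t_2 + q_2$ across the identification, not $t_1 = t_2$ and $q_1 = q_2$ separately, so in general a triangle arc on one side must be glued to a quadrilateral arc on the other. The pairing therefore has to be carried out by position in this linear order, not by elementary disc type. Once that convention is fixed, the arcs on the two sides can be made to coincide under the face identification by a normal isotopy of the local disc systems, the glued object is automatically a properly embedded surface, and uniqueness up to normal isotopy follows as you say because the linear order (and hence the forced pairing) is itself a normal-isotopy invariant.
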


Notice that the matching equations, together with the non-negativity condition, give a collection of homogeneous linear equations and inequalities over $\Real^{7n}$. The set of solutions to these linear constraints forms a polyhedral cone $\mathcal{C}\subset\Real^{7n}$ known as the \textbf{standard solution cone}. We define a \textbf{vertex normal surface} to be a normal surface $S$ such that:
\begin{itemize}
\item $\rvec{v}(S)$ lies on an extreme ray of the cone $\mathcal{C}$; and
\item there is no $q\in(0,1)$ such that $q\rvec{v}(S)$ is an integral point in $\Real^{7n}$.
\end{itemize}
Vertex normal surfaces were first introduced by Jaco and Oertel in 1984 (but they used the name ``fundamental edge surface'') \cite{JacoOertel84}. It turns out that a normal surface $S$ is a \textit{vertex}  normal surface if and only if the integer multiples of $\rvec{v}(S)$ are the only integral points $\rvec{x},\rvec{y}\in\mathcal{C}$ that can satisfy an equation of the form $k\rvec{v}(S)=\rvec{x}+\rvec{y}$, for some positive integer $k$ \cite{JacoTollefson95}.

For any vertex $v$ of a triangulation $\mathcal{T}$, the link of $v$ can always be represented as a normal surface.
Such a surface is built from the elementary triangles that ``surround'' $v$ in the triangulation
(see Figure \ref{fig:vertLink}).
Since vertex-linking normal surfaces always exist, we call a normal surface \textbf{non-trivial} if
it does not consist entirely of vertex links.

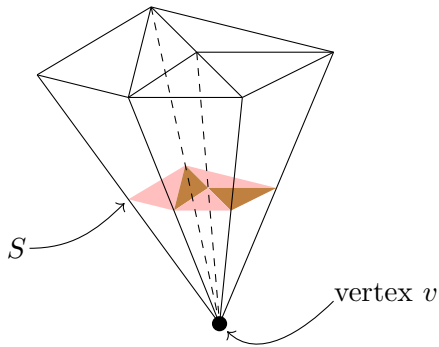
\begin{figure}[htbp]
\centering
	\begin{tikzpicture}[scale=1]
	% Triangles
	\fill[pink] (-0.45,2.1) -- (-0.15,1.8) -- (0.75,1.8) -- cycle;
	\fill[pink] (-0.6,1.5) -- (-1.2,1.65) -- (-0.45,2.1) -- cycle;
	\fill[brown] (-0.45,2.1) -- (-0.6,1.5) -- (-0.15,1.8) -- cycle;
	\fill[brown] (0.15,1.5) -- (-0.15,1.8) -- (0.75,1.8) -- cycle;
	\fill[pink] (0.15,1.5) -- (-0.6,1.5) -- (-0.15,1.8) -- cycle;
	
	% Dashed vertical lines
	\draw[dashed] (-0.3,3.6) -- (0,0);
	\draw[dashed] (-0.9,4.2) -- (0,0);
	
	% Solid vertical lines
	\draw (0,0) -- (0.3,3);
	\draw (0,0) -- (-1.2,3);
	\draw (0,0) -- (-2.4,3.3);
	\draw (0,0) -- (1.5,3.6);
	
	% (Solid) top lines
	\draw (-0.9,4.2) -- (-2.4,3.3);
	\draw (-0.9,4.2) -- (-1.2,3);
	\draw (-0.9,4.2) -- (-0.3,3.6);
	\draw (-0.9,4.2) -- (1.5,3.6);
	\draw (-0.3,3.6) -- (-1.2,3);
	\draw (-0.3,3.6) -- (0.3,3);
	\draw (-0.3,3.6) -- (1.5,3.6);
	\draw (0.3,3) -- (-1.2,3);
	\draw (0.3,3) -- (1.5,3.6);
	\draw (-2.4,3.3) -- (-1.2,3);
	
	% Vertex and surface labels
	\fill[black] (0,0) circle[radius=0.1];
	\draw[->] (1.5,0.3) node[above right, inner sep=0pt]{vertex $v$} to[out=225,in=315] (0.1,-0.1);
	\draw[->] (-2.5,1) node[left, inner sep=1pt]{$S$} to[out=0,in=225] (-1.25,1.6);
	\end{tikzpicture}
\caption{Building a vertex-linking normal surface $S$ from triangles.}
\label{fig:vertLink}
\end{figure}

If $\mathcal{T}$ is a one-vertex triangulation, then there is of course only one vertex-linking normal surface;
this surface must have every triangle coordinate equal to $1$, and every quadrilateral coordinate equal to $0$.
Any \emph{connected} non-trivial normal surface in $\mathcal{T}$ must therefore have
at least one of its triangle coordinates set to $0$.
To see why, observe that if we have some non-trivial normal surface $S$ in $\mathcal{T}$ such that
every triangle coordinate is non-zero,
then one of the connected components of $S$ must be vertex-linking.
But since $S$ is non-trivial, $S$ must also include at least one quadrilateral,
and this quadrilateral must belong to a second component of $S$, so $S$ must be disconnected \cite{BurtonOzlen12}.

\section{Abstract normal constraint optimisation}\label{sec:absNormConOpt}

In section \ref{sec:intro},
we mentioned that the authors of \cite{BurtonOzlen12} introduced the first unknot recognition algorithm
to exhibit polynomial-time behaviour under exhaustive experimentation.
In particular, we noted that almost every component of this algorithm runs in polynomial time,
with the only exception being
the exponential-time tree traversal procedure for finding a non-trivial normal sphere or disc.
As we discuss in more detail shortly,
this tree traversal procedure essentially consists of two subroutines,
and it is really the first subroutine that is the exponential-time bottleneck.
Roughly speaking,
the bottleneck involves solving a certain combinatorial optimisation problem over normal coordinates.
Our main contribution in this section is to show that
an abstract version of this optimisation problem is $\mathrm{NP}$-complete.
This result suggests that if we want to design a polynomial-time algorithm for
finding a non-trivial normal sphere or disc,
then such an algorithm would probably need to exploit some geometric or topological ideas,
not just the linear algebra that is currently used in \cite{BurtonOzlen12}.

We begin by giving a precise formulation of the aforementioned optimisation problem over normal coordinates, as this informs how we formulate our abstract problem. Since this optimisation problem would be somewhat unmotivated if we simply presented it without further context, we take a more circuitous route. Specifically, we give a rough outline of the unknot recognition algorithm from \cite{BurtonOzlen12}, with a particular focus on describing how this algorithm boils down to solving the optimisation problem in question.
\begin{itemize}
\item The input for unknot recognition is a knot $K$ embedded in the topological $3$-sphere $S^3$.
To apply normal surface theory, we convert the knot $K$ into a $3$-manifold known as the \textbf{knot exterior};
in essence, the knot exterior $\overline{K}$ is the manifold that is left over after ``drilling out'' $K$ from $S^3$.
The first step of the algorithm in \cite{BurtonOzlen12} is essentially to convert the knot $K$ into
a one-vertex triangulation $\mathcal{T}$ of $\overline{K}$.
\item It turns out that the unknottedness of $K$ can be certified by the existence of a particular normal disc in $\mathcal{T}$. Using the crushing procedure that we briefly mentioned in section \ref{sec:intro}, the problem of determining whether $\mathcal{T}$ contains such a disc can essentially be boiled down to the problem of determining whether $\mathcal{T}$ contains any non-trivial normal sphere or disc. (This is a dramatic oversimplification; see \cite{BurtonOzlen12} for a comprehensive explanation of how this works.)
\item As we have already mentioned, the authors of \cite{BurtonOzlen12} use the tree traversal procedure to find a non-trivial normal sphere or disc in $\mathcal{T}$ (or else show that no such surface exists). This procedure basically consists of two subroutines.
	\begin{enumerate}[(1)]
	\item For each choice of triangle coordinate $t$ in $\mathcal{T}$, the first subroutine searches for a (possibly disconnected) normal surface $S$ in $\mathcal{T}$ that satisfies the following two conditions.
		\begin{enumerate}[(i)]
		\item The surface $S$ has positive Euler characteristic.
		\item The triangle coordinate $t$ is set to $0$ in $S$.
		\end{enumerate}
	If such a surface $S$ exists, then at least one of its connected components must have positive Euler characteristic. 
	\item If subroutine (1) finds an appropriate surface $S$ (for some choice of triangle coordinate $t$), the second subroutine deletes ``unwanted'' elementary discs from $S$, until the remaining elementary discs form a connected normal surface $S'$ with positive Euler characteristic. Since $\overline{K}\subset S^3$, and since the projective plane cannot be embedded in $S^3$, the surface $S'$ must either be a sphere or a disc. Moreover, since $\mathcal{T}$ is a one-vertex triangulation, and since the triangle coordinate $t$ is set to $0$ in $S'$, we see that $S'$ must be a \emph{non-trivial} normal sphere or disc.
	\end{enumerate}
\end{itemize}

As mentioned earlier,
subroutine (1) of the tree traversal procedure is the sole source of
the exponential running time of the unknot recognition algorithm from \cite{BurtonOzlen12}.
This subroutine essentially treats the Euler characteristic as
a \emph{linear} objective function that needs to be maximised.
This approach relies on the following fact:
given an $n$-tetrahedron triangulation $\mathcal{T}$,
there exists a homogeneous linear function $\chi:\Integer^{7n}\to\Integer$
such that if $S$ is a normal surface in $\mathcal{T}$,
then $\chi\pparen{\rvec{v}(S)}$ equals the Euler characteristic of $S$.
See \cite{BurtonOzlen12} for one possible choice of $\chi$.

Recall from section \ref{sec:prelims} that the normal surfaces in a triangulation correspond precisely to the normal coordinate vectors that are admissible. Thus, subroutine (1) boils down to repeatedly solving the following combinatorial optimisation problem, once for each choice of triangle coordinate $t$ (for a total of at most $4n$ repetitions, where $n$ is the number of tetrahedra in the input triangulation).

\begin{problem}\label{prbm:oneVertFixedTri}
\hspace{2em}
\begin{description}[nosep]
\item[\texttt{INSTANCE:}] An $n$-tetrahedron one-vertex triangulation $\mathcal{T}$, and a fixed triangle coordinate $t$ from one of the $n$ tetrahedra.
\item[\texttt{QUESTION:}] Does there exist a vector $\rvec{x}\in\Integer^{7n}$ such that $\chi(\rvec{x})>0$, subject to the constraints that $\rvec{x}$ is admissible and $t$ is set to $0$?
\end{description}
\end{problem}

We emphasise again that solving Problem \ref{prbm:oneVertFixedTri} is the critical exponential-time bottleneck for the unknot recognition algorithm from \cite{BurtonOzlen12}. If we could find a polynomial-time algorithm for this problem, then we would immediately get a polynomial-time algorithm for unknot recognition. So, with the goal of gaining some insight into the computational complexity of Problem \ref{prbm:oneVertFixedTri}, we devote the remainder of this section to:
\begin{itemize}
\item formulating an abstract problem that captures the algebraic and combinatorial aspects of Problem \ref{prbm:oneVertFixedTri}; and
\item proving that our abstract problem is $\mathrm{NP}$-hard.
\end{itemize}
As we hinted earlier, this result has a significant practical implication: assuming $\mathrm{P}\neq\mathrm{NP}$, a polynomial-time algorithm for Problem \ref{prbm:oneVertFixedTri} cannot rely solely on exploiting the algebraic and combinatorial aspects of the problem; instead, such an algorithm would probably need to exploit some geometric or topological ideas that are \emph{not} captured by our abstract problem.

From now until the end of this section, we will primarily be discussing ``abstract'' analogues of notions from normal surface theory, such as ``abstract normal coordinates'' and ``abstract matching equations''. For brevity, we will usually omit the word ``abstract''. So, to avoid confusion, whenever we need to discuss the ``concrete'' counterparts of these new abstract notions, we will explicitly use descriptions such as ``concrete normal coordinates'' and ``concrete matching equations''.

To formulate our algebraic abstraction, we start by defining an \textbf{abstract tetrahedron} to be a tuple
\[
T=(x_1,x_2,x_3,x_4,x_5,x_6,x_7)
\]
of seven non-negative integer variables called \textbf{(abstract) normal coordinates}. For $i\in\curly{1,2,3}$, $x_i$ is called an \textbf{(abstract) quadrilateral coordinate}. For $i\in\curly{4,5,6,7}$, $x_i$ is called an \textbf{(abstract) triangle coordinate}. The quadrilateral coordinates $x_1,x_2,x_3$ must satisfy the \textbf{(abstract) quadrilateral constraint}, which says that no two of these quadrilateral coordinates may be simultaneously non-zero.

Suppose we have $n$ abstract tetrahedra $T_1,\ldots,T_n$, where for each $k=1,\ldots,n$ we write
\[
T_k=\paren{x_{k,1},x_{k,2},\ldots,x_{k,7}}.
\]
All of these coordinates together determine a vector
\[
\rvec{x} = (x_{1,1},\ldots,x_{1,7};\;x_{2,1},\ldots,x_{2,7};\;\ldots\ldots;\;x_{n,1},\ldots,x_{n,7}) \in\Integer^{7n}.
\]
Of course, not every point $\rvec{x}\in\Integer^{7n}$ determines a ``valid'' assignment of values to our normal coordinates. To get a valid assignment, we require every coordinate of $\rvec{x}$ to be non-negative, and we require the quadrilateral constraint to be satisfied in each $T_k$, $k=1,\ldots,n$.

An \textbf{(abstract) matching equation} is an equation of the form $q+t=q'+t'$, where $q$ and $q'$ are both quadrilateral coordinates, and $t$ and $t'$ are both triangle coordinates. We will call a collection $M$ of matching equations \textbf{compatible} if $M$ satisfies the following two conditions:
\begin{itemize}
\item every quadrilateral coordinate appears at most four times in the equations in $M$; and
\item every triangle coordinate appears at most three times in the equations in $M$.
\end{itemize}
Note that we consider an equation of the form $q+t=q+t'$ to contain \emph{two} occurrences of the quadrilateral coordinate $q$. Similarly, an equation of the form $q+t=q'+t$ contains \emph{two} occurrences of the triangle coordinate $t$.

This notion of compatibility is motivated by two simple geometric observations. First, since a quadrilateral gives rise to four normal arcs, a concrete quadrilateral coordinate can only be involved in at most four concrete matching equations. Similarly, since a triangle gives rise to three normal arcs, a concrete triangle coordinate can only be involved in at most three concrete matching equations.

Recall from section \ref{sec:prelims} that a triangulation yields at most $6n$ concrete matching equations, with equality precisely when all $4n$ tetrahedron faces have been paired up. We get an analogue of this property as an easy consequence of the compatibility conditions.

\begin{lemma}\label{lem:compatible6n}
Let $M$ be a compatible collection of abstract matching equations. Then $M$ contains at most $6n$ equations. Moreover, $M$ contains exactly $6n$ equations if and only if every triangle coordinate appears in exactly three equations and every quadrilateral coordinate appears in exactly four equations.
\end{lemma}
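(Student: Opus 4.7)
The plan is a straightforward double-counting argument based on the structure of a matching equation.

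Each matching equation $q + t = q' + t'$ contains exactly two occurrences of quadrilateral coordinates (namely $q$ and $q'$) and exactly two occurrences of triangle coordinates (namely $t$ and $t'$). Hence, if $|M| = m$, then summing over all equations in $M$, the total number of quadrilateral-coordinate occurrences in $M$ is exactly $2m$, and the total number of triangle-coordinate occurrences in $M$ is also exactly $2m$.

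Now I count the same quantities a second way, from the side of the coordinates themselves. The $n$ abstract tetrahedra together contribute $3n$ quadrilateral coordinates and $4n$ triangle coordinates. By the first compatibility condition, each of the $3n$ quadrilateral coordinates occurs at most $4$ times in $M$, so the total number of quadrilateral-coordinate occurrences is at most $12n$. By the second compatibility condition, each of the $4n$ triangle coordinates occurs at most $3$ times, so the total number of triangle-coordinate occurrences is also at most $12n$. Combining with the first count, either inequality gives $2m \leq 12n$, i.e.\ $m \leq 6n$.

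For the equality clause, observe that $m = 6n$ forces $2m = 12n$ simultaneously for the quadrilateral count and the triangle count. Since each quadrilateral coordinate contributes at most $4$ to a sum capped at $12n$ over $3n$ coordinates, equality can occur only if every quadrilateral coordinate appears in exactly $4$ equations; similarly, $2m = 12n$ on the triangle side forces every triangle coordinate to appear in exactly $3$ equations. Conversely, if every quadrilateral coordinate appears in exactly $4$ equations and every triangle coordinate appears in exactly $3$ equations, then the total quadrilateral-coordinate occurrence count is $12n = 2m$, which gives $m = 6n$. The only subtle point is remembering the author's convention that an equation like $q + t = q + t'$ counts $q$ twice, but this is precisely the convention under which ``total number of occurrences'' equals $2m$, so no obstacle arises.
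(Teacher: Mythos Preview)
Your argument is correct and is essentially the same double-counting as in the paper: each equation contributes two triangle (and two quadrilateral) occurrences, and the compatibility caps give $2m\le 12n$, with equality forcing every coordinate to hit its cap. The only cosmetic difference is that you run the quadrilateral and triangle counts in parallel from the outset, whereas the paper uses the triangle count alone for $m\le 6n$ and brings in the quadrilateral count only for the equality clause.
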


\begin{proof}
Each of the $4n$ triangle coordinates can appear at most three times, for a total of up to $12n$ appearances of triangle coordinates. Since every matching equation includes two appearances of triangle coordinates, $M$ can contain at most $6n$ equations. In particular, if every triangle coordinate appears exactly three times, then there are exactly $12n$ appearances of triangle coordinates, and hence there must be exactly $6n$ equations.

On the other hand, suppose $M$ contains exactly $6n$ equations. Since each equation contains two appearances of triangle coordinates, there are a total of $12n$ appearances of triangle coordinates. But each of the $4n$ triangle coordinates can only appear at most three times, so they must all appear exactly three times. Similarly, since each equation contains two appearances of quadrilateral coordinates, there are a total of $12n$ appearances of quadrilateral coordinates. So, since each of the $3n$ quadrilateral coordinates can only appear at most four times, they must all appear exactly four times.
\end{proof}

With this in mind, fix a compatible collection $M$ of matching equations. We say that a point $\rvec{x}\in\Integer^{7n}$ is \textbf{$M$-admissible} if:
\begin{itemize}
\item the coordinates of $\rvec{x}$ are all non-negative;
\item $\rvec{x}$ satisfies the quadrilateral constraints; and
\item $\rvec{x}$ satisfies every equation in $M$.
\end{itemize}
The notion of $M$-admissibility mirrors the notion of admissibility discussed in section \ref{sec:prelims}.

Recalling that the Euler characteristic of a normal surface $S$ can be expressed as a homogeneous linear function $\chi$ of the concrete normal coordinates of $S$, where the coefficients in $\chi$ are all integers, we introduce the following abstraction of Problem \ref{prbm:oneVertFixedTri}.

\begin{problem}[\scap{Abstract normal constraint optimisation}]\label{prbm:absNormConOpt}
\hspace{2em}
\begin{description}[nosep]
\item[\texttt{INSTANCE:}]$n$ abstract tetrahedra, a compatible collection $M$ of matching equations, a homogeneous linear function $\chi:\Integer^{7n}\to\Integer$ (which can only have integer coefficients), and a fixed triangle coordinate $t$ from one of the $n$ abstract tetrahedra.
\item[\texttt{QUESTION:}]Does there exist a vector $\rvec{x}\in\Integer^{7n}$ such that $\chi(\rvec{x})>0$, subject to the constraints that $\rvec{x}$ is $M$-admissible and $t$ is set to $0$?
\end{description}
\end{problem}

This problem turns out to be $\mathrm{NP}$-complete
(see Theorem \ref{thm:abstractNPhard} below).
Since this problem has been deliberately formulated to
mimic the key algebraic and combinatorial aspects of Problem \ref{prbm:oneVertFixedTri},
this result therefore sheds some light on the complexity of Problem \ref{prbm:oneVertFixedTri}.
Having said this, it is important to note that Problem \ref{prbm:absNormConOpt}
does not capture all of the geometric and topological restrictions that
are inherent to Problem \ref{prbm:oneVertFixedTri}.
Indeed, the $\mathrm{NP}$-hardness part of our proof relies on
being able to construct instances of Problem \ref{prbm:absNormConOpt} that
violate these geometric and topological restrictions.
Thus, our proof does not readily extend to a proof that Problem \ref{prbm:oneVertFixedTri} is $\mathrm{NP}$-hard.

In detail,
there are two main distinctions between Problems \ref{prbm:oneVertFixedTri} and \ref{prbm:absNormConOpt},
both of which we discuss below.
Given that a polynomial-time algorithm for Problem \ref{prbm:oneVertFixedTri} would
represent a major breakthrough in computational topology,
we also discuss how these distinctions should inform future efforts to design such an algorithm.
\begin{itemize}
\item First, given a triangulation $\mathcal{T}$, recall that the concrete matching equations come directly from the face identifications in $\mathcal{T}$. This means that a collection of concrete matching equations must always have a somewhat rigid combinatorial structure. Apart from our notion of compatibility, most of this structure is \emph{not} captured by our abstract matching equations.

For instance, given a concrete matching equation $q+t=q'+t'$, the coordinates $q$ and $t$ must both come from the same tetrahedron (and likewise for $q'$ and $t'$). This property is not required in our abstract matching equations. Indeed, our proof of $\mathrm{NP}$-hardness of Problem \ref{prbm:absNormConOpt} relies on having access to equations that violate this property; in particular, see equation (d) in the proof of Lemma \ref{lem:repVars}, as well as equations (d)--(g) in the proof of Lemma \ref{lem:quadUnif}. Concrete matching equations also always come in groups of three, since a face identification always yields three equations; this is another property that is not reflected in our abstract matching equations.

The key takeaway is that the hardness of Problem \ref{prbm:absNormConOpt} could well be due, in large part, to the fact that our abstract matching equations fail to capture most of the inherent geometric properties of concrete matching equations. Thus, exploiting such geometric properties could be crucial for anyone hoping to design a polynomial-time algorithm for Problem \ref{prbm:oneVertFixedTri}.

\item Second, recall that in Problem \ref{prbm:oneVertFixedTri}, the function $\chi$ carries important topological information, because for any normal surface $S$, $\chi\pparen{\rvec{v}(S)}$ must equal the Euler characteristic of $S$. In contrast, Problem \ref{prbm:absNormConOpt} only requires $\chi$ to be \emph{some} homogeneous linear function with integer coefficients, and there is no need for this function to have any kind of topological significance.

This distinction could well be crucial to the hardness of Problem \ref{prbm:absNormConOpt}, as our proof of $\mathrm{NP}$-hardness exploits the ability to choose an essentially arbitrary function $\chi$. This also suggests that if we want to design a polynomial-time algorithm for Problem \ref{prbm:oneVertFixedTri}, then it may not be enough to simply treat $\chi$ as a homogeneous linear objective function, as is done in \cite{BurtonOzlen12}; instead, we may need to exploit some more specific properties of the Euler characteristic.
\end{itemize}

With all this in mind,
we now discuss our strategy for proving that Problem \ref{prbm:absNormConOpt} is
$\mathrm{NP}$-hard (this is the more intricate half of our $\mathrm{NP}$-completeness proof).
Given that most of the constraints in this problem are
homogeneous linear constraints involving only integer coefficients,
it is worth noting that these constraints can be handled using techniques from linear programming
(rather than integer programming).
More specifically, we can first use linear programming techniques to find
a \emph{rational} solution $\rvec{q}$ to these linear constraints,
and then exploit homogeneity by scaling $\rvec{q}$ to obtain the desired integral solution $\rvec{x}$.
From this perspective,
the quadrilateral constraints stand out as playing a key role in the hardness of Problem \ref{prbm:absNormConOpt},
since these are the only constraints that are non-linear.
It is therefore unsurprising that the quadrilateral constraints lie at the heart of our proof of $\mathrm{NP}$-hardness.

More broadly, our proof strategy is to give a reduction from a variant of satisfiability called \scap{monotone one-in-three satisfiability}. This problem is well-known, but we give the following detailed formulation anyway, so as to establish our notation.

A \textbf{clause} is a triple $c=\paren{u_1,u_2,u_3}$ of distinct Boolean variables, and we say that $c$ is \textbf{satisfied} if exactly one of its three variables is true. Given a set $C=\curly{c_1,\ldots,c_n}$ of such clauses, let $V$ be the set of all variables that appear in $C$. A \textbf{truth assignment} for $V$ is a map $t:V\to\curly{0,1}$, where we interpret $t(v)=1$ to mean ``$v$ is true'', and $t(v)=0$ to mean ``$t$ is false''. A clause is said to be \textbf{satisfied} under $t$ if exactly one of its variables is true under $t$. A collection $C$ of clauses is \textbf{satisfiable} if there is a truth assignment that satisfies every clause of $C$ in this way.

\begin{problem}[\scap{Monotone one-in-three satisfiability}]\label{prbm:m1in3SAT}
\hspace{2em}
\begin{description}[nosep]
\item[\texttt{INSTANCE:}]A collection $C=\curly{c_1,\ldots,c_n}$ of clauses (where each clause $c_k$ is a triple $\paren{u_{k,1},u_{k,2},u_{k,3}}$ of distinct variables from $V=\curly{v_1,\ldots,v_m}$).
\item[\texttt{QUESTION:}]Is $C$ satisfiable?
\end{description}
\end{problem}

Problem \ref{prbm:m1in3SAT} was proven $\mathrm{NP}$-complete by Schaefer in 1978, as an application of his general \emph{dichotomy theorem for satisfiability} \cite{Schaefer78}. Note that Schaefer simply called this problem ``one-in-three satisfiability''; we include the specifier ``monotone'' to emphasise that the clauses never contain any negated variables.

As mentioned above, we prove that Problem \ref{prbm:absNormConOpt} is $\mathrm{NP}$-hard by giving a polynomial reduction from Problem \ref{prbm:m1in3SAT}. To this end, suppose we are given any collection of clauses $C=\curly{c_1,\ldots,c_n}$, where each clause $c_k$ is a triple $\paren{u_{k,1},u_{k,2},u_{k,3}}$ of distinct Boolean variables. The main idea of our reduction is to represent each clause $c_k$ as an abstract tetrahedron $T_k=\paren{x_{k,i}}_{i=1}^7$, where for each $i=1,2,3$ we consider the variable $u_{k,i}$ to be ``true'' if and only if the quadrilateral coordinate $x_{k,i}$ is non-zero. For this to work, our construction needs to enforce the following two conditions.
\begin{enumerate}[(1)]
\item If $u_{k,i}$ and $u_{\ell,j}$ are two occurrences of the same variable, then the corresponding quadrilateral coordinates must be equal.
\item In each $T_k$, at least one of the three quadrilateral coordinates $x_{k,1},x_{k,2},x_{k,3}$ must be non-zero.
\end{enumerate}
Recall that each $T_k$ can only have at most one of its quadrilateral coordinates being non-zero, due to the quadrilateral constraints. Together with condition (2), this will force exactly one quadrilateral coordinate to be non-zero in each abstract tetrahedron $T_k$, which corresponds to the requirement that exactly one variable is true in each clause $c_k$.

To enforce condition (1), we introduce $n-1$ abstract tetrahedra
\[
S_k=\pparen{w_{k,i}}_{i=1}^7, \quad k\in\{1,\ldots,n-1\}.
\]
We then construct a particular collection $M_1$ of matching equations, such that if these equations are satisfied, then we must have $x_{k,i}=x_{\ell,j}$ for all $k,\ell\in\curly{1,\ldots,n}$ and $i,j\in\curly{1,2,3}$ such that $u_{k,i}$ and $u_{\ell,j}$ are two occurrences of the same variable. The details of this construction are captured in Lemma \ref{lem:repVars}.

To enforce condition (2), we introduce $n-1$ abstract tetrahedra
\[
U_k=\pparen{y_{k,i}}_{i=1}^7, \quad k\in\{1,\ldots,n-1\}.
\]
We then construct a particular collection $M_2$ of matching equations, such that if these equations are satisfied, then we must have
\[
x_{1,1}+x_{1,2}+x_{1,3} = x_{2,1}+x_{2,2}+x_{2,3} = \cdots = x_{n,1}+x_{n,2}+x_{n,3}.
\]
This constraint, combined with a carefully chosen function $\chi$, will be enough to enforce condition (2). The details of this construction are captured in Lemma \ref{lem:quadUnif}.

With these ideas in mind, we now present the proofs of Lemmas \ref{lem:repVars} and \ref{lem:quadUnif}.
We then finish this section by applying these two lemmas to prove that
\scap{abstract normal constraint optimisation} (Problem \ref{prbm:absNormConOpt})
is $\mathrm{NP}$-complete.

\begin{lemma}\label{lem:repVars}
Given:
\begin{itemize}
\item a collection $C=\curly{c_1,\ldots,c_n}$ of clauses (as in Problem \ref{prbm:m1in3SAT});
\item $n$ abstract tetrahedra $T_k=\paren{x_{k,i}}_{i=1}^7$, $k\in\{1,\ldots,n\}$; and
\item $n-1$ abstract tetrahedra $S_k=\paren{w_{k,i}}_{i=1}^7$, $k\in\{1,\ldots,n-1\}$;
\end{itemize}
we can construct a collection $M_1$ of matching equations such that:
\begin{itemize}
\item the equations in $M_1$ are satisfied if and only if we have:
	\begin{itemize}
	\item $w_{k,4}=w_{k,5}=w_{k,6}=w_{k,7}$ for all $k\in\curly{1,\ldots,n-1}$, and
	\item $x_{k,i}=x_{\ell,j}$ for all $k,\ell\in\curly{1,\ldots,n}$ and $i,j\in\curly{1,2,3}$ such that $u_{k,i}$ and $u_{\ell,j}$ are two occurrences of the same variable;
	\end{itemize}
\item for all $k\in\curly{1,\ldots,n}$ and $i\in\curly{1,2,3}$, the quadrilateral coordinate $x_{k,i}$ appears at most twice among the equations in $M_1$;
\item for all $k\in\curly{1,\ldots,n}$ and $i\in\curly{4,5,6,7}$, the triangle coordinate $x_{k,i}$ never appears among the equations in $M_1$;
\item for all $k\in\curly{1,\ldots,n-1}$ and $i\in\curly{1,2,3}$, the quadrilateral coordinate $w_{k,i}$ appears twice among the equations in $M_1$; and
\item for all $k\in\curly{1,\ldots,n-1}$ and $i\in\curly{4,5,6,7}$, the triangle coordinate $w_{k,i}$ appears at most three times among the equations in $M_1$.
\end{itemize}
Moreover, this can be done in $O(n^2)$ time.
\end{lemma}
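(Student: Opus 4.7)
The plan is to build $M_1$ in two layers. The first layer consists of equations internal to each $S_k$ that force $w_{k,4} = w_{k,5} = w_{k,6} = w_{k,7}$, and the second layer uses pairs of these ``now-equal'' triangle coordinates to bridge equalities between the $x$-quadrilateral coordinates corresponding to repeated occurrences of the same Boolean variable.

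For the first layer, for each $k \in \{1, \ldots, n-1\}$ I would add to $M_1$ the three equations
\[
w_{k,1} + w_{k,4} = w_{k,1} + w_{k,5}, \qquad w_{k,2} + w_{k,5} = w_{k,2} + w_{k,6}, \qquad w_{k,3} + w_{k,6} = w_{k,3} + w_{k,7}.
\]
These chain together to yield the desired equality of all four of $w_{k,4}, w_{k,5}, w_{k,6}, w_{k,7}$. They use each of $w_{k,1}, w_{k,2}, w_{k,3}$ exactly twice (saturating the quadrilateral-appearance budget of $S_k$), and they use $w_{k,4}, w_{k,5}, w_{k,6}, w_{k,7}$ with multiplicities $1, 2, 2, 1$ respectively, leaving residual capacity $2 + 1 + 1 + 2 = 6$ triangle-appearance slots at each $S_k$. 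This is exactly enough to fit three further equations at $S_k$, each of the form $x_{p,i} + w_{k,a} = x_{q,j} + w_{k,b}$ with distinct $a, b \in \{4,5,6,7\}$, using for instance the pairs $\{w_{k,4}, w_{k,5}\}$, $\{w_{k,6}, w_{k,7}\}$, and $\{w_{k,4}, w_{k,7}\}$.

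For the second layer, I would partition the $3n$ variable occurrences in $C$ according to which Boolean variable each represents. If a variable $v$ has $r_v$ occurrences, then linking the corresponding $x$-coordinates in a single chain of equalities requires $r_v - 1$ bridging equations, for a total of $\sum_v (r_v - 1) = 3n - m$ bridging equations, where $m$ is the number of distinct variables. Since every clause has three distinct variables (and $n \ge 1$), we have $m \ge 3$, so at most $3(n-1)$ bridging equations are needed; this matches exactly the capacity produced by the first layer, so the bridging equations can be distributed across the $S_k$'s at no more than three per $S_k$. Each bridging equation $x_{p,i} + w_{k,a} = x_{q,j} + w_{k,b}$ forces $x_{p,i} = x_{q,j}$, because $w_{k,a} = w_{k,b}$ by the first layer.

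The appearance bounds are then straightforward to verify: each $x_{k,i}$ with $i \in \{1,2,3\}$ lies in at most two bridging equations (as an endpoint or interior vertex of its chain), no $x_{k,i}$ with $i \in \{4,5,6,7\}$ appears at all, every $w_{k,i}$ with $i \in \{1,2,3\}$ appears exactly twice, and every $w_{k,i}$ with $i \in \{4,5,6,7\}$ appears at most three times by the residual-capacity count. Bucketing the $3n$ occurrences by variable and emitting the equations is comfortably within $O(n^2)$ time. The main delicate point I anticipate is the packing argument in the second layer: we must distribute at most $3(n-1)$ bridging equations across the $n-1$ tetrahedra $S_k$ while respecting the residual pattern $(2,1,1,2)$ at each $S_k$, and we must then exhibit three concrete pairs of triangle coordinates within each $S_k$ (as in the scheme above) so that no triangle coordinate exceeds its allowance of three total appearances.
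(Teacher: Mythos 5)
Your proposal is correct and follows essentially the same two-layer strategy as the paper: a first layer of equations that forces $w_{k,4}=w_{k,5}=w_{k,6}=w_{k,7}$ inside each $S_k$, and a second layer of chain-linking equations of the form $x_{p,i}+w_{k,a}=x_{q,j}+w_{k,b}$ that then force the desired $x$-equalities. The only differences are cosmetic: you use the chain $w_{k,4}=w_{k,5}$, $w_{k,5}=w_{k,6}$, $w_{k,6}=w_{k,7}$ (residual $(2,1,1,2)$) where the paper uses $w_{k,4}=w_{k,5}$, $w_{k,4}=w_{k,7}$, $w_{k,6}=w_{k,7}$ (residual $(1,2,2,1)$), and you assign bridging equations to the $S_k$ by a greedy packing argument, whereas the paper gets the packing for free by placing the equation with source occurrence $u_{k,i}$ in $S_k$ using the pair $\{w_{k,i+3},w_{k,i+4}\}$, which automatically yields at most three per $S_k$ and makes the appearance counts immediate without a separate allocation step.
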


\begin{proof}
For each $k\in\{1,\ldots,n-1\}$, we add the matching equations
\begin{align*}
w_{k,1}+w_{k,4} &= w_{k,1}+w_{k,5}; \tag{a} \\
w_{k,2}+w_{k,4} &= w_{k,2}+w_{k,7}; \tag{b} \\
w_{k,3}+w_{k,6} &= w_{k,3}+w_{k,7} \tag{c}
\end{align*}
to $M_1$. Observe that these equations reduce to
\[
w_{k,4} = w_{k,5} = w_{k,6} = w_{k,7}.
\]

Now, given any fixed variable $v\in V$, let $u_{k_1,i_1},u_{k_2,i_2},\ldots,u_{k_m,i_m}$ denote all the occurrences of $v$. We would like to force
\[
x_{k_j,i_j}=x_{k_{j+1},i_{j+1}}
\]
for each $j\in\{1,\ldots,m-1\}$, because this would imply
\[
x_{k_1,i_1} = x_{k_2,i_2} = \cdots = x_{k_m,i_m}.
\]
Since the triangle coordinates $w_{k_j,i_j+3}$ and $w_{k_j,i_j+4}$ must be equal, we can do this by adding the equation
\[
x_{k_j,i_j} + w_{k_j,i_j+3} = x_{k_{j+1},i_{j+1}} + w_{k_j,i_j+4}
\]
to $M_1$, for each $j\in\{1,\ldots,m\}$. By doing the same thing for each variable $v\in V$, we can force $x_{k,i}=x_{\ell,j}$ for all $k,\ell\in\curly{1,\ldots,n}$ and $i,j\in\curly{1,2,3}$ such that $u_{k,i}$ and $u_{\ell,j}$ are two occurrences of the same variable. We describe a polynomial-time procedure to add all these equations to $M_1$.

For each fixed $u_{k,i}$, where $k\in\curly{1,\ldots,n-1}$ and $i\in\curly{1,2,3}$, we perform a search that finishes by adding at most one equation to $M_1$. More specifically, for $\ell\in\{k+1,\ldots,n\}$ and $j\in\{1,2,3\}$, we sequentially check each $u_{\ell,j}$ to see whether $u_{k,i}$ and $u_{\ell,j}$ are two occurrences of the same variable. If so, we stop searching, and add the matching equation
\[
x_{k,i}+w_{k,i+3} = x_{\ell,j}+w_{k,i+4} \tag{d}
\]
to $M_1$. After checking $O(n)$ coordinates, we either:
\begin{itemize}
\item find such a $u_{\ell,j}$, and impose the corresponding equation; or
\item conclude that no such $u_{\ell,j}$ exists.
\end{itemize}
We perform this search once for each $u_{k,i}$, which requires $O(n^2)$ steps in total. When this whole procedure is finished, we get a collection $M_1$ of matching equations that are satisfied if and only if we have:
\begin{itemize}
\item $w_{k,4}=w_{k,5}=w_{k,6}=w_{k,7}$ for all $k\in\curly{1,\ldots,n-1}$; and
\item $x_{k,i}=x_{\ell,j}$ for all $k,\ell\in\curly{1,\ldots,n}$ and $i,j\in\curly{1,2,3}$ such that $u_{k,i}$ and $u_{\ell,j}$ are two occurrences of the same variable.
\end{itemize}
Moreover, given any fixed $k\in\{1,\ldots,n\}$, observe that:
\begin{itemize}
\item for each $i\in\{1,2,3\}$, the quadrilateral coordinate $x_{k,i}$ appears at most twice among equations of type (d), and nowhere else in $M_1$; and
\item for each $i\in\{4,5,6,7\}$, the triangle coordinate $x_{k,i}$ never appears among the equations in $M_1$.
\end{itemize}
To see that:
\begin{itemize}
\item for all $k\in\curly{1,\ldots,n-1}$ and $i\in\curly{1,2,3}$, the quadrilateral coordinate $w_{k,i}$ appears twice among the equations in $M_1$; and
\item for all $k\in\curly{1,\ldots,n-1}$ and $i\in\curly{4,5,6,7}$, the triangle coordinate $w_{k,i}$ appears at most three times among the equations in $M_1$;
\end{itemize}
consider any fixed $k\in\{1,\ldots,n-1\}$, and observe that:
\begin{itemize}
\item the quadrilateral coordinate $w_{k,1}$ appears twice in an equation of type (a), and nowhere else in $M_1$;
\item the quadrilateral coordinate $w_{k,2}$ appears twice in an equation of type (b), and nowhere else in $M_1$;
\item the quadrilateral coordinate $w_{k,3}$ appears twice in an equation of type (c), and nowhere else in $M_1$;
\item the triangle coordinate $w_{k,4}$ appears once in an equation of type (a), once in an equation of type (b), possibly once more in an equation of type (d) (since $i+3=4$ when $i=1$), and nowhere else in $M_1$;
\item the triangle coordinate $w_{k,5}$ appears once in an equation of type (a), possibly twice more in equations of type (d) (since $i+3=5$ when $i=2$, and $i+4=5$ when $i=1$), and nowhere else in $M_1$;
\item the triangle coordinate $w_{k,6}$ appears once in an equation of type (c), possibly twice more in equations of type (d) (since $i+3=6$ when $i=3$, and $i+4=6$ when $i=2$), and nowhere else in $M_1$; and
\item the triangle coordinate $w_{k,7}$ appears once in an equation of type (b), once in an equation of type (c), possibly once more in an equation of type (d) (since $i+4=7$ when $i=3$), and nowhere else in $M_1$.\qedhere
\end{itemize}
\end{proof}

% One input argument for punctuation.
\newcommand{\QuadUnifEqns}[1]{\begin{align*}
y_{k,1}&=y_{k,2}=y_{k,5}=y_{k,6}=0, \tag{i} \\
x_{k,4} &= x_{k,3}, \tag{ii} \\
x_{k+1,6} &= x_{k+1,3}, \tag{iii} \\
x_{k,5} &= x_{k,2}+x_{k,3}, \tag{iv} \\
x_{k+1,7} &= x_{k+1,2}+x_{k+1,3}, \tag{v} \\
x_{k,1}+x_{k,2}+x_{k,3} &= x_{k+1,1}+x_{k+1,2}+x_{k+1,3}#1 \tag{vi}
\end{align*}}

\begin{lemma}\label{lem:quadUnif}
Given:
\begin{itemize}
\item $n$ abstract tetrahedra $T_k=\paren{x_{k,i}}_{i=1}^7$, $k\in\{1,\ldots,n\}$; and
\item $n-1$ abstract tetrahedra $U_k=\paren{y_{k,i}}_{i=1}^7$, $k\in\{1,\ldots,n-1\}$;
\end{itemize}
we can construct a collection $M_2$ of matching equations such that:
\begin{itemize}
\item if we impose $y_{1,5}=0$, then the equations in $M_2$ are satisfied if and only if for all $k\in\{1,\ldots,n-1\}$ we have
\QuadUnifEqns{;}
\item for all $k\in\curly{1,\ldots,n}$ and $i\in\curly{1,\ldots,7}$, the coordinate $x_{k,i}$ appears at most twice among the equations in $M_2$;
\item for all $k\in\curly{1,\ldots,n-1}$ and $i\in\curly{1,2,3}$, the quadrilateral coordinate $y_{k,i}$ appears at most four times among the equations in $M_2$; and
\item for all $k\in\curly{1,\ldots,n-1}$ and $i\in\curly{4,5,6,7}$, the triangle coordinate $y_{k,i}$ appears at most three times among the equations in $M_2$.
\end{itemize}
Moreover, this can be done in $O(n)$ time.
\end{lemma}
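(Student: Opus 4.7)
The plan is to construct $M_2$ block-by-block: for each $k \in \{1,\ldots,n-1\}$, I would emit a constant-size block of matching equations involving only coordinates of $T_k$, $T_{k+1}$, and the auxiliary tetrahedron $U_k$. Each block must do two jobs at once. First, under the inductive hypothesis $y_{k,5}=0$ (with base case the imposed $y_{1,5}=0$), the block must force conditions (i)--(vi) indexed by $k$. Second, the block must also force $y_{k+1,5}=0$, so that the induction can proceed to the next block.

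For the zeroing of $y_{k,1}, y_{k,2}, y_{k,6}$ and the propagation $y_{k,5}=0 \Rightarrow y_{k+1,5}=0$, I would use equations of the form $q+t=q+t'$ (same quadrilateral on both sides), which simply assert equalities $t=t'$ between triangle coordinates of $U_k$; chaining such equalities ties $y_{k,5}$ to $y_{k+1,5}$ and to the other triangles and quadrilaterals of $U_k$ we want to zero out (the quadrilaterals $y_{k,1}, y_{k,2}$ get zeroed via equations in which they appear as the shared $q$). For equations (ii), (iii), which are single equalities between a triangle and a quadrilateral of $T_k$ or $T_{k+1}$, I would use equations of the form $x_{k,3}+y_{k,5} = y_{k,1}+x_{k,4}$, which reduce to (ii) precisely when the relevant $y$-coordinates vanish. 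For (iv), (v), and (vi), which each contain a sum of two (or three) quadrilaterals, I would split them into two matching equations joined through an auxiliary triangle coordinate from $U_k$ serving as a bridge; for instance, $x_{k,5} = x_{k,2}+x_{k,3}$ is encoded by one equation expressing an auxiliary triangle of $U_k$ as $x_{k,5}-x_{k,2}$ and a second equation equating that auxiliary triangle to $x_{k,3}$, both again relying on vanishing $y$-coordinates to cancel slack. Equation (vi) is handled similarly, now linking $T_k$ with $T_{k+1}$ through $U_k$.

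The main obstacle will be the bookkeeping: the appearance bounds are tight --- at most twice for each $x_{k,i}$, at most four times for each quadrilateral $y_{k,i}$, at most three times for each triangle $y_{k,i}$ --- and since the same $x_{k,i}$ can be referenced by both block $k-1$ and block $k$, its total count across both blocks must still respect the ``at most twice'' bound. After drafting the equations of a single block, I would tabulate the appearances of every coordinate (much as in the case analysis at the end of the proof of Lemma \ref{lem:repVars}) and then adjust which $y_{k,i}$ is used as slack versus as bridge until all bounds are simultaneously met. Once the equations are fixed, the forward direction (that (i)--(vi) imply all equations in $M_2$) is an immediate substitution; the reverse direction (that $y_{1,5}=0$ together with $M_2$ forces (i)--(vi)) follows by induction on $k$, reading off each equation in turn; and the $O(n)$ time bound is automatic because each block has constant size and the blocks can be emitted independently.
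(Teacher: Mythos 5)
Your high-level plan --- a constant-size block per $k$ that zeros out $y_{k,1},y_{k,2},y_{k,5},y_{k,6}$, propagates $y_{k,5}=0\Rightarrow y_{k+1,5}=0$, and then encodes (ii)--(vi) --- matches the paper's. But there are two places where the mechanism as described would not go through.

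First, your method for zeroing the \emph{quadrilateral} coordinates $y_{k,1},y_{k,2}$ is wrong. You propose equations of the form $q+t=q+t'$ (same quadrilateral on both sides), and say that $y_{k,1},y_{k,2}$ ``get zeroed via equations in which they appear as the shared $q$.'' But an equation of that form reduces to $t=t'$ regardless of the value of $q$ --- it places \emph{no} constraint whatsoever on the shared quadrilateral. To kill a quadrilateral coordinate you must use the opposite degenerate template: $q+t=q'+t$ (same \emph{triangle} on both sides), which forces $q=q'$. Then, since $q$ and $q'$ are two distinct quadrilateral coordinates of the same abstract tetrahedron $U_k$, the \emph{quadrilateral constraint} forces both to vanish. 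The quadrilateral constraint is the crucial nonlinear ingredient here, and your proposal does not invoke it. The paper's equation (b), namely $y_{k,1}+y_{k,4}=y_{k,2}+y_{k,4}$, is exactly this trick.

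Second, your encoding of the sums (iv)--(vi) via a fresh ``auxiliary triangle of $U_k$ serving as a bridge'' overspends the appearance budget for the $T_k$ quadrilaterals. Take $x_{k,3}$ for a middle index $k$: it must appear in the equation producing (ii) in block $k$, and it must appear again in the equation producing (iii) in block $k-1$; that already uses up its allowance of two appearances. Your encoding of (iv) then requires one more equation equating the auxiliary bridge to $x_{k,3}$, pushing the count to three. The paper sidesteps this by choosing $x_{k,4}$ itself as the bridge: the equation $y_{k,2}+x_{k,5}=x_{k,2}+x_{k,4}$ (reducing to $x_{k,5}=x_{k,2}+x_{k,4}$) combines with the already-present (ii) $x_{k,4}=x_{k,3}$ to yield (iv), with no new reference to $x_{k,3}$. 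That is, the bridge coordinate is chosen so that one of the two bridging equations coincides with an equation you needed anyway. Your ``tabulate and adjust'' step would have to discover essentially this reuse; as stated, the budget does not close.
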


\begin{proof}
For each $k\in\{1,\ldots,n-2\}$, we add the matching equation
\[
y_{k,2}+y_{k,6} = y_{k+1,1}+y_{k+1,5} \tag{a}
\]
to $M_2$. For each $k\in\{1,\ldots,n-1\}$, we also add the following matching equations to $M_2$:
\begin{align*}
y_{k,1}+y_{k,4} &= y_{k,2}+y_{k,4}; \tag{b} \\
y_{k,3}+y_{k,5} &= y_{k,3}+y_{k,6}; \tag{c} \\
y_{k,1}+x_{k,4} &= x_{k,3}+y_{k,5}; \tag{d} \\
y_{k,1}+x_{k+1,6} &= x_{k+1,3}+y_{k,6}; \tag{e} \\
y_{k,2}+x_{k,5} &= x_{k,2}+x_{k,4}; \tag{f} \\
y_{k,2}+x_{k+1,7} &= x_{k+1,2}+x_{k+1,6}; \tag{g} \\
x_{k,1}+x_{k,5} &= x_{k+1,1}+x_{k+1,7}. \tag{h}
\end{align*}

We claim that if we impose $y_{1,5}=0$, then the equations in $M_2$ are satisfied if and only if equations (i)--(vi) are satisfied for all $k\in\{1,\ldots,n-1\}$. To see this, first observe that the equations of type (b) reduce to $y_{k,1}=y_{k,2}=0$, since the quadrilateral coordinates $y_{k,1}$ and $y_{k,2}$ cannot be simultaneously non-zero, due to the quadrilateral constraint. This means that the equations of type (a) reduce to $y_{k,6}=y_{k+1,5}$. Together with the equations of type (c), which reduce to $y_{k,5}=y_{k,6}$, we therefore have
\[
y_{1,5}=y_{1,6}=y_{2,5}=y_{2,6}=\cdots=y_{n-1,5}=y_{n-1,6}.
\]
So, by imposing $y_{1,5}=0$, we get $y_{k,5}=y_{k,6}=0$ for all $k\in\{1,\ldots,n-1\}$. So, we have
\[
y_{k,1}=y_{k,2}=y_{k,5}=y_{k,6}=0
\]
for each $k\in\{1,\ldots,n-1\}$. As a result, the equations of type (d), (e), (f), (g) and (h) reduce to:
\begin{align*}
x_{k,4} &= x_{k,3}; \\
x_{k+1,6} &= x_{k+1,3}; \\
x_{k,5} &= x_{k,2}+x_{k,4} = x_{k,2}+x_{k,3}; \\
x_{k+1,7} &= x_{k+1,2}+x_{k+1,6} = x_{k+1,2}+x_{k+1,3}; \\
x_{k,1}+x_{k,2}+x_{k,3} &= x_{k,1}+x_{k,5} = x_{k+1,1}+x_{k+1,7} = x_{k+1,1}+x_{k+1,2}+x_{k+1,3};
\end{align*}
for all $k\in\{1,\ldots,n-1\}$. Thus, we recover equations (i)--(vi).

To see that for all $k\in\curly{1,\ldots,n}$ and $i\in\curly{1,\ldots,7}$, the coordinate $x_{k,i}$ appears at most twice among the equations in $M_2$, consider any fixed $k\in\{1,\ldots,n\}$ and observe that:
\begin{itemize}
\item the quadrilateral coordinate $x_{k,1}$ appears at most twice among equations of type (h), and nowhere else in $M_2$;
\item the quadrilateral coordinate $x_{k,2}$ appears at most once in an equation of type (f), possibly once more in an equation of type (g), and nowhere else in $M_2$;
\item the quadrilateral coordinate $x_{k,3}$ appears at most once in an equation of type (d), possibly once more in an equation of type (e), and nowhere else in $M_2$;
\item the triangle coordinate $x_{k,4}$ appears at most once in an equation of type (d), possibly once more in an equation of type (f), and nowhere else in $M_2$;
\item the triangle coordinate $x_{k,5}$ appears at most once in an equation of type (f), possibly once more in an equation of type (h), and nowhere else in $M_2$;
\item the triangle coordinate $x_{k,6}$ appears at most once in an equation of type (e), possibly once more in an equation of type (g), and nowhere else in $M_2$; and
\item the triangle coordinate $x_{k,7}$ appears at most once in an equation of type (g), possibly once more in an equation of type (h), and nowhere else in $M_2$.
\end{itemize}
To see that:
\begin{itemize}
\item for all $k\in\curly{1,\ldots,n-1}$ and $i\in\curly{1,2,3}$, the quadrilateral coordinate $y_{k,i}$ appears at most four times among the equations in $M_2$; and
\item for all $k\in\curly{1,\ldots,n-1}$ and $i\in\curly{4,5,6,7}$, the triangle coordinate $y_{k,i}$ appears at most three times among the equations in $M_2$;
\end{itemize}
consider any $k\in\{1,\ldots,n-1\}$, and observe that:
\begin{itemize}
\item the quadrilateral coordinate $y_{k,1}$ only appears in equations of type (a), (b), (d) and (e), and only at most once for each type;
\item the quadrilateral coordinate $y_{k,2}$ only appears in equations of type (a), (b), (f) and (g), and only at most once for each type;
\item the quadrilateral coordinate $y_{k,3}$ appears twice in an equation of type (c), and nowhere else in $M_2$;
\item the triangle coordinate $y_{k,4}$ appears twice in an equation of type (b), and nowhere else in $M_2$;
\item the triangle coordinate $y_{k,5}$ only appears in equations of type (a), (c) and (d), and only at most once for each type;
\item the triangle coordinate $y_{k,6}$ only appears in equations of type (a), (c) and (e), and only at most once for each type; and
\item the triangle coordinate $y_{k,7}$ never appears in $M_2$.
\end{itemize}
Finally, note that constructing all the equations in $M_2$ requires $O(n)$ steps.
\end{proof}

\newcommand{\thmAbstractNPhard}{\scap{Abstract normal constraint optimisation}
(Problem \ref{prbm:absNormConOpt}) is $\mathrm{NP}$-complete.}
\begin{theorem}\label{thm:abstractNPhard}
\thmAbstractNPhard
\end{theorem}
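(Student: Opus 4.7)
The plan is to prove NP-completeness in two steps: membership, then hardness via a reduction from \scap{monotone one-in-three satisfiability} (Problem \ref{prbm:m1in3SAT}). For membership, a certificate is simply a vector $\rvec{x}\in\Integer^{7n}$, and the only non-obvious point is to bound its bit-length. Once the set of non-zero quadrilateral coordinates is fixed (by guessing, or equivalently by enumerating), the remaining non-negativity condition, matching equations, and strict inequality $\chi(\rvec{x})>0$ form a rational linear program whose matching-equation coefficients all have magnitude at most $1$. By Hadamard's inequality applied to basic feasible solutions, if this LP has any rational solution then it has one whose entries have polynomially many bits; clearing denominators yields an integer solution of polynomial bit-length, and checking all constraints of Problem \ref{prbm:absNormConOpt} is then polynomial.

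For hardness, given clauses $C=\curly{c_1,\ldots,c_n}$ over variables $V$, I would build the following instance of Problem \ref{prbm:absNormConOpt}: the $n$ abstract tetrahedra $T_k$ representing clauses, the $n-1$ auxiliary tetrahedra $S_k$ supplied by Lemma \ref{lem:repVars}, and the $n-1$ auxiliary tetrahedra $U_k$ supplied by Lemma \ref{lem:quadUnif}; the matching set $M=M_1\cup M_2$, where $M_1,M_2$ come from those two lemmas; the linear function $\chi(\rvec{x})=x_{1,1}+x_{1,2}+x_{1,3}$; and the fixed triangle coordinate $t=y_{1,5}$ from $U_1$. Since the $S_k$ and $U_k$ coordinate families are disjoint, the only overlap between $M_1$ and $M_2$ is on the $T_k$ coordinates; adding the occurrence bounds from the two lemmas gives at most $2+2=4$ for each quadrilateral coordinate $x_{k,i}$ with $i\in\curly{1,2,3}$, and at most $0+2=2$ for each triangle coordinate $x_{k,i}$ with $i\in\curly{4,5,6,7}$, so $M$ is compatible. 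The whole construction runs in $O(n^2)$ time.

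Correctness would then follow almost immediately from the two lemmas. If $\rvec{x}$ is $M$-admissible with $y_{1,5}=0$ and $\chi(\rvec{x})>0$, then Lemma \ref{lem:repVars} forces equal quadrilateral coordinates for equal variable occurrences, while Lemma \ref{lem:quadUnif} forces $x_{k,1}+x_{k,2}+x_{k,3}$ to take a common strictly positive value $s$ for every $k$; the quadrilateral constraint then forces exactly one of $x_{k,1},x_{k,2},x_{k,3}$ to be non-zero for each $k$, and setting $\sigma(u_{k,i})=1$ iff $x_{k,i}>0$ gives a well-defined truth assignment that satisfies $C$. Conversely, given a satisfying $\sigma$, I would set $x_{k,i}=\sigma(u_{k,i})$ for $i\in\curly{1,2,3}$, fill in the triangle coordinates $x_{k,4},\ldots,x_{k,7}$ using equations (ii)--(v) of Lemma \ref{lem:quadUnif}, and set every $S_k$ and $U_k$ coordinate to $0$; this yields an $M$-admissible vector with $\chi(\rvec{x})=1>0$ and $y_{1,5}=0$.

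All the heavy combinatorial work has been absorbed into Lemmas \ref{lem:repVars} and \ref{lem:quadUnif}, so the main obstacle at the top level is really just the bookkeeping for compatibility of $M_1\cup M_2$ and confirming that the choice $\chi(\rvec{x})=x_{1,1}+x_{1,2}+x_{1,3}$ together with $t=y_{1,5}$ really does propagate the positivity of one quadrilateral coordinate in $T_1$ to every other $T_k$; both are routine given the setup above. The one subtlety worth flagging is the NP-membership argument, where some care is needed to ensure that the certificate vector can be chosen with polynomial bit-length despite the presence of the non-linear quadrilateral constraints.
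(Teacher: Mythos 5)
Your proposal is correct and follows essentially the same strategy as the paper: NP membership by bounding the bit-length of an extremal/basic solution to the homogeneous linear system once the quadrilateral pattern is fixed (the paper invokes Lemma~6.1 of Hass--Lagarias--Pippenger rather than citing Hadamard directly, but the substance is the same), and NP-hardness via the same reduction from monotone one-in-three satisfiability built from Lemmas~\ref{lem:repVars} and~\ref{lem:quadUnif}, with $M=M_1\cup M_2$, $t=y_{1,5}$, and the same compatibility bookkeeping. The only cosmetic difference is that you take $\chi(\rvec{x})=x_{1,1}+x_{1,2}+x_{1,3}$ rather than the paper's $\chi(\rvec{x})=\sum_{k=1}^n(x_{k,1}+x_{k,2}+x_{k,3})$; both work because equation~(vi) of Lemma~\ref{lem:quadUnif} propagates positivity of $x_{1,1}+x_{1,2}+x_{1,3}$ to every $k$ once $y_{1,5}=0$ is imposed.
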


\begin{proof}
To show that Problem \ref{prbm:absNormConOpt} is in $\mathrm{NP}$,
we use essentially the same ideas as those used by Hass, Lagarias and Pippenger \cite{HassLagariasPippenger99}
to show that unknot recognition is in $\mathrm{NP}$.
Given a proposed solution $\rvec{x}$ to an instance $I$ of Problem \ref{prbm:absNormConOpt},
it is easy to check whether $t$ is set to $0$ in $\rvec{x}$.
However, to guarantee that we can decide in polynomial time
whether $\rvec{x}$ is $M$-admissible and whether $\chi(\rvec{x})>0$,
we need to bound the size of $\rvec{x}$.

To this end, consider the set $\mathcal{C}\subset\mathbb{R}^{7n}$ of all vectors satisfying both
the non-negativity condition and the equations from $M$.
Note that $\mathcal{C}$ is a polyhedral cone,
and is analogous to the standard solution cone from the theory of concrete normal coordinates.
We can then define a \emph{vertex solution}
(analogous to a vertex normal surface)
to be a vector $\rvec{x}\in\mathbb{R}^{7n}$ such that:
\begin{itemize}
\item $\rvec{x}$ lies on an extreme ray of the cone $\mathcal{C}$;
\item there is no $q\in(0,1)$ such that $q\rvec{x}$ is an integral point in $\mathbb{R}^{7n}$; and
\item $\rvec{x}$ satisfies the quadrilateral constraints.
\end{itemize}

We claim that if some solution $\rvec{x}$ exists for an instance $I$ of Problem \ref{prbm:absNormConOpt},
then the same instance $I$ must have a \emph{vertex} solution.
Indeed, since $\rvec{x}$ lies in the cone $\mathcal{C}$,
we can write $\rvec{x}=q_1\rvec{v}_1+\cdots+q_k\rvec{v}_k$,
where each vector $\rvec{v}_i$ lies on an extreme ray of $\mathcal{C}$,
and each coefficient $q_i$ is positive.
Because everything is non-negative,
the fact that $\rvec{x}$ is a solution to $I$ immediately implies that
each $\rvec{v}_i$ satisfies the quadrilateral constraints,
and hence that each $\rvec{v}_i$ is (up to scaling by a positive number) a vertex solution;
similar reasoning tells us that $t$ must be set to $0$ in each $\rvec{v}_i$.
Moreover, since the function $\chi$ is linear, we have
\[
0 < \chi(\rvec{x}) = q_1\chi(\rvec{v}_1) + \cdots + q_k\chi(\rvec{v}_k),
\]
which tells us that
there must be at least one vertex solution $\rvec{v}_i$ satisfying $\chi(\rvec{v}_i)>0$.

Thus, it suffices to bound the size of a \emph{vertex} solution $\rvec{x}$ for Problem \ref{prbm:absNormConOpt}.
We can do this using Lemma 6.1 from \cite{HassLagariasPippenger99}.
Although this lemma is stated in the setting of concrete matching equations,
its proof only uses algebraic properties that are captured by our abstract matching equations.
We can therefore apply this lemma in our setting,
which yields the following bound:
each of the $7n$ entries of $\rvec{x}$ can be written using at most $7n$ bits.
This bound is enough to guarantee that we can decide in polynomial time
whether $\rvec{x}$ is a solution to Problem \ref{prbm:absNormConOpt},
and hence completes the proof that Problem \ref{prbm:absNormConOpt} is in $\mathrm{NP}$.

To show that Problem \ref{prbm:absNormConOpt} is also $\mathrm{NP}$-hard,
we give a reduction from \scap{monotone one-in-three satisfiability}
(Problem \ref{prbm:m1in3SAT}).
Suppose we are given any collection of clauses $C=\curly{c_1,\ldots,c_n}$,
where each clause $c_k$ is
a triple $\paren{u_{k,1},u_{k,2},u_{k,3}}$ of distinct Boolean variables.
We construct a corresponding instance of \scap{abstract normal constraint optimisation}, as follows.
\begin{itemize}
\item Construct the following $p=3n-2$ abstract tetrahedra:
	\begin{itemize}
	\item $S_k=\paren{w_{k,i}}_{i=1}^7$, $k\in\{1,\ldots,n-1\}$;
	\item $T_k=\paren{x_{k,i}}_{i=1}^7$, $k\in\{1,\ldots,n\}$; and
	\item $U_k=\paren{y_{k,i}}_{i=1}^7$, $k\in\{1,\ldots,n-1\}$.
	\end{itemize}
\item Construct a collection $M=M_1\cup M_2$ of matching equations, where $M_1$ is the collection of matching equations given in Lemma \ref{lem:repVars}, and $M_2$ is the collection of matching equations given in Lemma \ref{lem:quadUnif}.
\item Choose the homogeneous linear function $\chi:\Integer^{7p}\to\Integer$ defined by
\[
\chi(\rvec{x}) = \sum_{k=1}^n\pparen{x_{k,1}+x_{k,2}+x_{k,3}}
\]
for all
\begin{align*}
\rvec{x} = \pparen{
& w_{1,1},\ldots,w_{1,7};\; \ldots;\; w_{n-1,1},\ldots,w_{n-1,7}; \\
& x_{1,1},\ldots,x_{1,7};\; \ldots;\; x_{n,1},\ldots,x_{n,7}; \\
& y_{1,1},\ldots,y_{1,7};\; \ldots;\; y_{n-1,1},\ldots,y_{n-1,7}} \in \Integer^{7p}.
\end{align*}
\item Fix $t=y_{1,5}$.
\end{itemize}
By Lemmas \ref{lem:repVars} and \ref{lem:quadUnif}, the entire construction requires $O(n^2)$ steps in total. To see that $M$ is actually a compatible collection of matching equations, recall from Lemmas \ref{lem:repVars} and \ref{lem:quadUnif} that:
\begin{itemize}
\item for each $k\in\{1,\ldots,n-1\}$ and $i\in\{1,2,3\}$, the quadrilateral coordinate $w_{k,i}$ appears twice in $M_1$, and never appears in $M_2$;
\item for each $k\in\{1,\ldots,n-1\}$ and $i\in\{4,5,6,7\}$, the triangle coordinate $w_{k,i}$ appears at most three times in $M_1$, and never appears in $M_2$;
\item for each $k\in\{1,\ldots,n\}$ and $i\in\{1,2,3\}$, the quadrilateral coordinate $x_{k,i}$ appears at most twice in $M_1$, and appears at most twice in $M_2$;
\item for each $k\in\{1,\ldots,n\}$ and $i\in\{4,5,6,7\}$, the triangle coordinate $x_{k,i}$ never appears in $M_1$, and appears at most twice in $M_2$;
\item for each $k\in\{1,\ldots,n-1\}$ and $i\in\{1,2,3\}$, the quadrilateral coordinate $y_{k,i}$ never appears in $M_1$, and appears at most four times in $M_2$; and
\item for each $k\in\{1,\ldots,n-1\}$ and $i\in\{4,5,6,7\}$, the triangle coordinate $y_{k,i}$ never appears in $M_1$, and appears at most three times in $M_2$.
\end{itemize}
So, it only remains to show that $C$ is satisfiable if and only if there exists an $M$-admissible point $\rvec{x}\in\Integer^{7p}$ such that $\chi(\rvec{x})>0$ and $t=0$.
\begin{itemize}
\item Suppose $C$ is satisfiable. Then we can fix some truth assignment such that exactly one variable is true in every clause in $C$. Consider the point
\begin{align*}
\rvec{x} = \pparen{
& w_{1,1},\ldots,w_{1,7};\; \ldots;\; w_{n-1,1},\ldots,w_{n-1,7}; \\
& x_{1,1},\ldots,x_{1,7};\; \ldots;\; x_{n,1},\ldots,x_{n,7}; \\
& y_{1,1},\ldots,y_{1,7};\; \ldots;\; y_{n-1,1},\ldots,y_{n-1,7}} \in \Integer^{7p}
\end{align*}
given by:
	\begin{itemize}
	\item $w_{k,i}=y_{k,i}=0$ for all $k\in\curly{1,\ldots,n-1}$ and $i\in\curly{1,\ldots,7}$ (in particular, $t=y_{1,5}=0$);
	\item $x_{k,i}=1$ for all $k\in\curly{1,\ldots,n}$ and $i\in\curly{1,2,3}$ such that $u_{k,i}$ is true;
	\item $x_{k,i}=0$ for all $k\in\curly{1,\ldots,n}$ and $i\in\curly{1,2,3}$ such that $u_{k,i}$ is false;
	\item $x_{k,4}=x_{k,6}=x_{k,3}$ for all $k\in\curly{1,\ldots,n}$; and
	\item $x_{k,5}=x_{k,7}=x_{k,2}+x_{k,3}$ for all $k\in\curly{1,\ldots,n}$.
	\end{itemize}
For each $k\in\{1,\ldots,n\}$, since exactly one of the variables $u_{k,1},u_{k,2},u_{k,3}$ is true, we must have exactly one of the quadrilateral coordinates $x_{k,1},x_{k,2},x_{k,3}$ equal to $1$, and the other two quadrilateral coordinates equal to $0$. In particular, this means that
\[
x_{k,1}+x_{k,2}+x_{k,3} = 1
\]
for all $k\in\curly{1,\ldots,n}$.

With this in mind, we claim that $\rvec{x}$ is $M$-admissible. To see this, first note that every coordinate of $\rvec{x}$ is clearly non-negative. Additionally, observe that for each $k\in\{1,\ldots,n-1\}$, the quadrilateral coordinates in $S_k=\paren{w_{k,i}}_{i=1}^7$ and $U_k=\paren{y_{k,i}}_{i=1}^7$ are all zero, so we can immediately see that the quadrilateral constraints are satisfied in $S_k$ and $U_k$. For $T_k=\paren{x_{k,i}}_{i=1}^7$, $k\in\{1,\ldots,n\}$, recall that exactly one of the quadrilateral coordinates $x_{k,1},x_{k,2},x_{k,3}$ is equal to $1$, and the other two quadrilateral coordinates are equal to $0$. Thus, the quadrilateral constraints are also satisfied in each $T_k$.

It remains to show that $\rvec{x}$ satisfies every equation in $M=M_1\cup M_2$. To this end, observe that:
	\begin{itemize}
	\item for all $k\in\curly{1,\ldots,n-1}$, we have $w_{k,4}=w_{k,5}=w_{k,6}=w_{k,7}=0$; and
	\item given any  $k,\ell\in\curly{1,\ldots,n}$ and $i,j\in\curly{1,2,3}$ such that $u_{k,i}$ and $u_{\ell,j}$ are two occurrences of the same variable, $u_{k,i}$ is true if and only if $u_{\ell,j}$ is true, which means that $x_{k,i}=x_{\ell,j}$.
	\end{itemize}
So, by Lemma \ref{lem:repVars}, $\rvec{x}$ satisfies the equations in $M_1$. Moreover, by construction, for all $k\in\curly{1,\ldots,n-1}$ we recover equations (i)--(vi) from Lemma \ref{lem:quadUnif}:
\QuadUnifEqns{.}
Thus, $\rvec{x}$ also satisfies the equations in $M_2$. Altogether, we see that $\rvec{x}$ is indeed $M$-admissible.

Finally, observe that
\[
\chi(\rvec{x}) = \sum_{k=1}^n\pparen{x_{k,1}+x_{k,2}+x_{k,3}} = \sum_{k=1}^n1 = n > 0.
\]
Thus, we have found an $M$-admissible point $\rvec{x}\in\Integer^{7p}$ such that $\chi(\rvec{x})>0$ and $t=0$.
\item Suppose
\begin{align*}
\rvec{x} = \pparen{
& w_{1,1},\ldots,w_{1,7};\; \ldots;\; w_{n-1,1},\ldots,w_{n-1,7}; \\
& x_{1,1},\ldots,x_{1,7};\; \ldots;\; x_{n,1},\ldots,x_{n,7}; \\
& y_{1,1},\ldots,y_{1,7};\; \ldots;\; y_{n-1,1},\ldots,y_{n-1,7}} \in \Integer^{7p}
\end{align*}
is an $M$-admissible point such that $\chi(\rvec{x})>0$ and $t=0$. For each $k\in\{1,\ldots,n\}$ and $i\in\{1,2,3\}$, take the variable $u_{k,i}$ to be true if and only if the quadrilateral coordinate $x_{k,i}$ is non-zero. To see that this gives a valid truth assignment, recall from Lemma \ref{lem:repVars} that the matching equations in $M_1\subset M$ ensure that for any $k,\ell\in\curly{1,\ldots,n}$ and $i,j\in\curly{1,2,3}$, if $u_{k,i}$ and $u_{\ell,j}$ are two occurrences of the same variable, then $x_{k,i}=x_{\ell,j}$.

By the quadrilateral constraints, for each fixed $k=1,\ldots,n$, at most one of the quadrilateral coordinates $x_{k,1},x_{k,2},x_{k,3}$ is non-zero. Thus, at most one of the three variables $u_{k,1},u_{k,2},u_{k,3}$ is true in each clause $c_k\in C$. We claim that no clause has all three variables false, and hence that exactly one variable must be true in every clause. To see this, suppose instead that for some $\ell\in\curly{1,\ldots,n}$, the variables $u_{\ell,1},u_{\ell,2},u_{\ell,3}$ are all false. This means that $x_{\ell,1}=x_{\ell,2}=x_{\ell,3}=0$.

Since we have forced $y_{1,5}=t=0$, and since the matching equations in $M_2\subset M$ are satisfied, recall from Lemma \ref{lem:quadUnif} that equation (vi) is satisfied for all $k\in\{1,\ldots,n-1\}$; that is, we have
\[
x_{k,1}+x_{k,2}+x_{k,3} = x_{k+1,1}+x_{k+1,2}+x_{k+1,3}
\]
for all $k\in\{1,\ldots,n-1\}$. Thus, we see that
\[
x_{k,1}+x_{k,2}+x_{k,3} = x_{\ell,1}+x_{\ell,2}+x_{\ell,3} = 0
\]
for all $k\in\{1,\ldots,n\}$. But this means that
\[
\chi(\rvec{x}) = \sum_{k=1}^n\pparen{x_{k,1}+x_{k,2}+x_{k,3}} = 0,
\]
contradicting the initial assumption that $\chi(\rvec{x})>0$. So, we must have a truth assignment such that exactly one variable is true in every clause in $C$. In other words, $C$ must be satisfiable.
\end{itemize}
Altogether, we conclude that our construction gives a polynomial reduction from \scap{monotone one-in-three satisfiability} (Problem \ref{prbm:m1in3SAT}) to \scap{abstract normal constraint optimisation} (Problem \ref{prbm:absNormConOpt}), and hence that Problem \ref{prbm:absNormConOpt} is $\mathrm{NP}$-hard.
\end{proof}

\section{Detecting splitting surfaces and connected spanning central surfaces}\label{sec:splitting}
In section \ref{sec:intro}, we mentioned that the problem of finding a non-trivial vertex normal sphere or disc can be solved using vertex normal surface enumeration. This exemplifies the central role that vertex normal surfaces play in a very wide range of algorithms based on normal surface theory. Thus, we have a strong incentive to investigate the computational complexity of problems that can be solved by enumerating vertex normal surfaces. In this section, we consider two closely-related problems of this type. We show that one of these problems can be solved in polynomial time, while the other is $\mathrm{NP}$-complete.

Before we formulate these two problems, we make a brief diversion to:
define the type of normal surface that lies at the heart of both of these problems,
and then discuss the significance of such surfaces.
A \textbf{spanning central surface} in a triangulation $\mathcal{T}$ is a normal surface that meets each tetrahedron of $\mathcal{T}$ in precisely one elementary disc.
We focus on spanning central surfaces that are \emph{connected},
as cutting the ambient triangulation $\mathcal{T}$ along such a surface can
provide a useful decomposition of the underlying $3$-manifold of $\mathcal{T}$.

More specifically, cutting a $3$-manifold triangulation $\mathcal{T}$ along a connected spanning central surface always decomposes $\mathcal{T}$ into one or two pieces, where each piece is homeomorphic to a regular neighbourhood of some $2$-dimensional complex. To illustrate why this happens, we first examine how a spanning central surface $S$ cuts through a single tetrahedron $\Delta$. If $S$ meets $\Delta$ in a triangle, then this triangle separates a vertex $v$ of $\Delta$ from its opposite face $f$, as illustrated in Figure \ref{subfig:cutTetTri}. Cutting along $S$ therefore divides $\Delta$ into the following two regions:
\begin{itemize}
\item a tetrahedron attached to $v$, which we can view as a regular neighbourhood of $v$ in $\Delta$; and
\item a truncated tetrahedron attached to $f$, which we can view as a regular neighbourhood of $f$ in $\Delta$.
\end{itemize}
On the other hand, if $S$ meets $\Delta$ in a quadrilateral, then this quadrilateral separates two opposite edges $e_1$ and $e_2$ of $\Delta$, as illustrated in Figure \ref{subfig:cutTetQuad}. Cutting along $S$ therefore divides $\Delta$ into the following two regions:
\begin{itemize}
\item a wedge attached to $e_1$, which we can view as a regular neighbourhood of $e_1$ in $\Delta$; and
\item a wedge attached to $e_2$, which we can view as a regular neighbourhood of $e_2$ n $\Delta$.
\end{itemize}

\begin{figure}[htbp]
\centering
	\begin{subfigure}[t]{0.45\textwidth}
	\centering
		\begin{tikzpicture}[scale=0.4]
		
		%%%%%%
		% Truncated tetrahedron (bottom).
		
		% Face fill.
		\fill[cyan] (0,0) -- (6,1) -- (9,7) -- cycle;		
		
		% Back tetrahedron edge.
		\draw[thick, line cap=round, dashed] (0,0) -- (9,7);	
		
		% Surface fill.
		\fill[pink] (2.4,5.4) -- (4.8,5.8) -- (6,8.2) -- cycle;
		
		% Front tetrahedron edges.
		\draw[thick, line cap=round] (0,0) -- (2.4,5.4);
		\draw[thick, line cap=round] (6,8.2) -- (9,7);
		\draw[thick, line cap=round] (9,7) -- (6,1);
		\draw[thick, line cap=round] (6,1) -- (0,0);
		\draw[thick, line cap=round] (4.8,5.8) -- (6,1);
		
		% Front surface edges.
		\draw[thick, line cap=round] (2.4,5.4) -- (4.8,5.8);
		\draw[thick, line cap=round] (4.8,5.8) -- (6,8.2);
		\draw[thick, line cap=round] (6,8.2) -- (2.4,5.4);
		
		%%%%%%
		% Mini-tetrahedron (top).
		
		\begin{scope}[shift={(0,2)}]
		
		% Surface fill.
		\fill[pink] (2.4,5.4) -- (4.8,5.8) -- (6,8.2) -- cycle;
		
		% Back surface edge.
		\draw[thick, line cap=round, dashed] (6,8.2) -- (2.4,5.4);
		
		% Front tetrahedron edges.
		\draw[thick, line cap=round] (2.4,5.4) -- (4,9);
		\draw[thick, line cap=round] (4,9) -- (6,8.2);
		\draw[thick, line cap=round] (4,9) -- (4.8,5.8);
		
		% Front surface edges.
		\draw[thick, line cap=round] (2.4,5.4) -- (4.8,5.8);
		\draw[thick, line cap=round] (4.8,5.8) -- (6,8.2);
		
		% Vertex.
		\fill[cyan] (4,9) circle (0.3);
		
		\end{scope}
		
		\end{tikzpicture}
	\subcaption{The pink triangle separates the blue vertex from the blue face.}
	\label{subfig:cutTetTri}
	\end{subfigure}
	\hspace{12pt}
	\begin{subfigure}[t]{0.45\textwidth}
	\centering
		\begin{tikzpicture}[scale=0.4]
		
		%%%%%%
		% Bottom wedge.
		
		% Back tetrahedron edge.
		\draw[thick, line cap=round, dashed] (0,0) -- (4.5,3.5);
		
		% Surface fill.
		\fill[pink] (2,4.5) -- (4.5,3.5) -- (7.5,4) -- (5,5) -- cycle;
		
		% Back surface edges.
		\draw[thick, line cap=round, dashed] (2,4.5) -- (4.5,3.5);
		\draw[thick, line cap=round, dashed] (4.5,3.5) -- (7.5,4);
		
		% Front tetrahedron edges.
		\draw[thick, line cap=round] (0,0) -- (2,4.5);
		\draw[thick, line cap=round] (7.5,4) -- (6,1);
		\draw[thick, line cap=round] (5,5) -- (6,1);
		\draw[ultra thick, line cap=round, cyan] (6,1) -- (0,0);
		
		% Front surface edges.
		\draw[thick, line cap=round] (7.5,4) -- (5,5);
		\draw[thick, line cap=round] (5,5) -- (2,4.5);
		
		%%%%%%
		% Top wedge.
		\begin{scope}[shift={(0,2)}]
		
		% Back tetrahedron edge.
		\draw[thick, line cap=round, dashed] (4.5,3.5) -- (9,7);
		
		% Surface fill.
		\fill[pink] (2,4.5) -- (4.5,3.5) -- (7.5,4) -- (5,5) -- cycle;
		
		% Front tetrahedron edges.
		\draw[thick, line cap=round] (2,4.5) -- (4,9);
		\draw[thick, line cap=round] (9,7) -- (7.5,4);
		\draw[thick, line cap=round] (4,9) -- (5,5);
		\draw[ultra thick, line cap=round, cyan] (4,9) -- (9,7);
		
		% Front surface edges.
		\draw[thick, line cap=round] (2,4.5) -- (4.5,3.5);
		\draw[thick, line cap=round] (4.5,3.5) -- (7.5,4);
		\draw[thick, line cap=round] (7.5,4) -- (5,5);
		\draw[thick, line cap=round] (5,5) -- (2,4.5);
		
		\end{scope}
		
		\end{tikzpicture}
	\subcaption{The pink quadrilateral separates the blue pair of opposite edges.}
	\label{subfig:cutTetQuad}
	\end{subfigure}
\caption{An elementary disc cuts a tetrahedron into two regions.}
\label{fig:cutTet}
\end{figure}
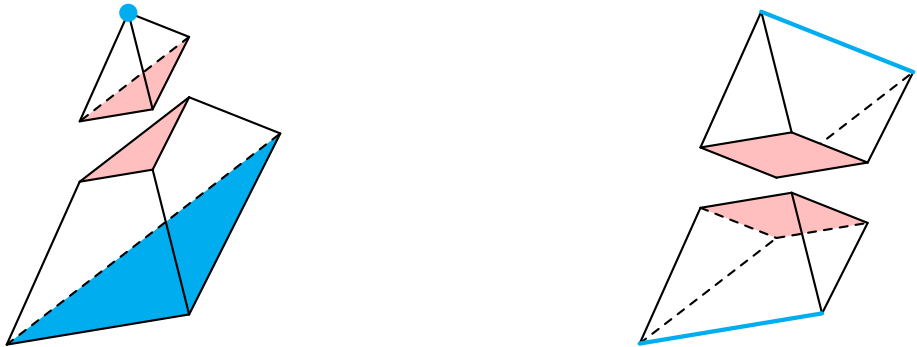

With this in mind, let $\mathcal{T}$ be a triangulation of a closed $3$-manifold, and suppose we cut $\mathcal{T}$ along a connected spanning central surface $S$. If $S$ is two-sided, then $\mathcal{T}$ decomposes into two bounded manifolds; otherwise, if $S$ is one-sided, then $\mathcal{T}$ ``unfolds'' into a single bounded manifold. Due to the way in which $S$ cuts through each tetrahedron of $\mathcal{T}$, the resulting bounded manifolds are essentially regular neighbourhoods of $2$-dimensional complexes built from triangles, edges and vertices. The structure of these complexes can often provide insight into the structure of the original $3$-manifold. In particular, it is worth mentioning the following two special cases.
\begin{itemize}
\item First, suppose $S$ is a spanning central sphere that bounds a $3$-ball in $\mathcal{T}$.
By the preceding discussion, deleting the interior of this $3$-ball from $\mathcal{T}$
leaves behind a regular neighbourhood of a $2$-dimensional complex.
This complex is therefore a \textbf{spine} of the underlying $3$-manifold of $\mathcal{T}$.
\item Second, suppose $S$ is a \textbf{splitting surface} in $\mathcal{T}$,
meaning that $S$ is a spanning central surface that is entirely composed of quadrilaterals.
Splitting surfaces were originally motivated by ideas discussed by Rubinstein in \cite{Rubinstein78},
with the first detailed discussion of these surfaces appearing in \cite{Burton03}.
Since a quadrilateral meets all four triangular faces of its ambient tetrahedron,
note that $S$ is automatically connected (assuming that $\mathcal{T}$ is connected) \cite{Burton03}.
By the preceding discussion, $S$ separates a pair of opposite edges in each tetrahedron of $\mathcal{T}$
(recall Figure \ref{subfig:cutTetQuad}).
Thus, if $S$ is two-sided, then cutting along $S$ decomposes $\mathcal{T}$ into two pieces,
where each piece is essentially a regular neighbourhood of a \emph{graph} built from the aforementioned edges.
In other words, $\mathcal{T}$ decomposes into a pair of handlebodies,
which means that $S$ realises a \textbf{Heegaard splitting} of the underlying $3$-manifold of $\mathcal{T}$.
By similar reasoning, if $S$ is one-sided, then cutting along $S$ ``unfolds'' $\mathcal{T}$ into a single handlebody,
in which case $S$ realises a \textbf{one-sided Heegaard splitting} of the underlying $3$-manifold of $\mathcal{T}$.
\end{itemize}

Given the significance of splitting surfaces, and more generally the significance of connected spanning central surfaces, it would be useful to understand the computational complexity of the following two problems.

\begin{problem}[\scap{Splitting surface}]\label{prbm:splittingSurf}
\hspace{2em}
\begin{description}[nosep]
\item[\texttt{INSTANCE:}]A triangulation $\mathcal{T}$.
\item[\texttt{QUESTION:}] Does $\mathcal{T}$ contain a splitting surface?
\end{description}
\end{problem}

\begin{problem}[\scap{Connected spanning central surface}]\label{prbm:spanningCentral}
\hspace{2em}
\begin{description}[nosep]
\item[\texttt{INSTANCE:}]A triangulation $\mathcal{T}$.
\item[\texttt{QUESTION:}]Does $\mathcal{T}$ contain a connected spanning central surface?
\end{description}
\end{problem}

As suggested earlier, our interest in these problems is also motivated by the fact that they can both be solved by enumerating vertex normal surfaces. For Problem \ref{prbm:splittingSurf}, this fact is a consequence of the following result: a splitting surface is always a vertex normal surface \cite{Burton03}. It turns out that this result extends to \emph{all} connected spanning central surfaces, which is why Problem \ref{prbm:spanningCentral} can be solved by enumerating vertex normal surfaces.

\begin{proposition}\label{prop:connCentralImpliesVertex}
Let $\mathcal{T}$ be a triangulation. If $S$ is a connected spanning central surface in $\mathcal{T}$, then $S$ is a vertex normal surface.
\end{proposition}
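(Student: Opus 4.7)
The plan is to use the equivalent characterization of vertex normal surfaces mentioned in section~\ref{sec:prelims}: a normal surface $S$ is a vertex normal surface if and only if whenever $k\rvec{v}(S)=\rvec{x}+\rvec{y}$ for some positive integer $k$ and integral points $\rvec{x},\rvec{y}$ in the standard solution cone $\mathcal{C}$, both $\rvec{x}$ and $\rvec{y}$ must be integer multiples of $\rvec{v}(S)$. So I fix such a decomposition and aim to show that the scaling factor is forced to be the same in every tetrahedron.

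Because $S$ is a spanning central surface, in each tetrahedron $\Delta_j$ of $\mathcal{T}$ exactly one of the seven coordinates of $\rvec{v}(S)$ is equal to $1$, with the other six equal to $0$. Non-negativity of the integer vectors $\rvec{x}$ and $\rvec{y}$ then forces them to vanish at every coordinate where $\rvec{v}(S)$ vanishes. Hence the restriction of $\rvec{x}$ to $\Delta_j$ can be written as $a_j\,\rvec{v}(S)|_{\Delta_j}$ for some non-negative integer $a_j$, and similarly $\rvec{y}|_{\Delta_j}=b_j\,\rvec{v}(S)|_{\Delta_j}$, with $a_j+b_j=k$.

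The remaining task is to prove that all the $a_j$ coincide, which is where connectedness of $S$ becomes essential. Consider two tetrahedra $\Delta_j$ and $\Delta_{j'}$ sharing a face $f$ under a gluing of $\mathcal{T}$, and suppose the elementary discs of $S$ in $\Delta_j$ and $\Delta_{j'}$ share a normal arc on $f$; this arc is parallel to some edge $e$ of $f$. The matching equation for $e$ applied to $\rvec{x}$ then reduces to $a_j=a_{j'}$, since on each side the single disc of $S$ contributes exactly one arc parallel to $e$ (or to its identified counterpart), scaled with multiplicity $a_j$ and $a_{j'}$ respectively. Now form a graph whose vertices are the elementary discs of $S$ and whose edges record shared normal arcs across identified faces; because $S$ is assembled from its elementary discs by gluing them along precisely these shared arcs, connectedness of $S$ as a topological surface implies connectedness of this graph. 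Propagating the equalities $a_j=a_{j'}$ along the graph forces all $a_j$ to equal a common value $a$, whence $\rvec{x}=a\rvec{v}(S)$ and $\rvec{y}=(k-a)\rvec{v}(S)$, as required.

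The main technical point to handle carefully is the matching-equation reduction in the previous paragraph: one must verify that when both adjacent discs of $S$ contribute a normal arc parallel to $e$, the matching equation applied to $\rvec{x}$ really collapses to $a_j=a_{j'}$, rather than to some more elaborate constraint coming from other elementary discs in the two tetrahedra. This follows cleanly from the observation that only the $S$-discs in $\Delta_j,\Delta_{j'}$ contribute to $\rvec{x}$ at all, but it should be written out explicitly. Everything else in the argument is bookkeeping.
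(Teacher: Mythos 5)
Your proof is correct and follows essentially the same line of reasoning as the paper's: both arguments observe that non-negativity confines the decomposition to the disc types used by $S$, that the matching equations across a shared face force the local scaling coefficients to agree, and that connectedness of $S$ propagates this equality to all tetrahedra. The only cosmetic difference is that you argue directly from the Jaco--Tollefson characterisation while the paper phrases the same computation as a proof by contradiction.
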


\begin{proof}
Let $S$ be a connected spanning central surface in $\mathcal{T}$.
Suppose, for the sake of contradiction, that $S$ is not a vertex normal surface.
Then there must exist two normal surfaces $M$ and $N$ in $\mathcal{T}$,
neither of which are multiples of $S$, such that $kS = M+N$ for some positive integer $k$.

Since $S$ is a spanning central surface, it meets each tetrahedron of $\mathcal{T}$ in exactly one elementary disc. For each tetrahedron $\Delta$ in $\mathcal{T}$, let:
\begin{itemize}
\item $\varepsilon_\Delta$ denote the elementary disc type in which $S$ meets $\Delta$; and
\item $m_\Delta$ denote the number of times $M$ meets $\Delta$ in a disc of type $\varepsilon_\Delta$.
\end{itemize}
With this in mind, consider any internal face $f$ in $\mathcal{T}$ that meets $S$, and let $\Delta$ and $\Delta'$ be the two (not necessarily distinct) tetrahedra that are glued together along $f$. Since the spanning central surface $S$ passes through $f$, observe that the disc types $\varepsilon_\Delta$ and $\varepsilon_{\Delta'}$ must both give rise to the same normal arc type in $f$. Thus, the matching equations in $f$ force $m_\Delta=m_{\Delta'}$.

Since $S$ is a connected surface, these equalities propagate so that we have $m_\Delta=m_{\Delta'}$ for any two (not necessarily adjacent) tetrahedra $\Delta$ and $\Delta'$ that meet $S$. But this is impossible, since we initially assumed that $M$ was not a multiple of $S$. So, we conclude that $S$ must be a vertex normal surface.
\end{proof}

With all this in mind, we now turn to the two main complexity results of this section. First, we show that Problem \ref{prbm:splittingSurf} has a simple polynomial-time solution. Independently, this first result is not particularly enlightening. Our interest in this result is mostly due to its stark contrast to the second main result: Problem \ref{prbm:spanningCentral} is $\mathrm{NP}$-complete. In essence, by allowing the surfaces of interest to have non-zero triangle coordinates, we have turned a computationally easy problem into a computationally hard problem. Since Problems \ref{prbm:splittingSurf} and \ref{prbm:spanningCentral} are so closely related, we can imagine that this pair of problems ``straddles'' the threshold between ``easy'' and ``hard'', with one problem lying on each side.

Our polynomial-time solution for Problem \ref{prbm:splittingSurf} is essentially just a series of three breadth-first searches.

\newcommand{\thmSplittingPoly}{Problem \ref{prbm:splittingSurf} has a polynomial-time algorithm.
That is, given an $n$-tetrahedron triangulation $\mathcal{T}$,
we can decide whether $\mathcal{T}$ has a splitting surface in time bounded by a polynomial in $n$.}
\begin{theorem}\label{thm:splittingPoly}
\thmSplittingPoly
\end{theorem}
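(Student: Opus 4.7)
The plan is to exploit the rigid combinatorial structure of splitting surfaces. Any splitting surface has every triangle coordinate equal to $0$ and exactly one non-zero quadrilateral coordinate (necessarily equal to $1$) in each of the $n$ tetrahedra, so the entire surface is determined by a map assigning one of three quadrilateral types to each tetrahedron. With no triangles present, the non-negativity condition and the quadrilateral constraints are automatic, and every matching equation across an identified face reduces to the requirement that the single normal arc on one side of the face has the same type as the single normal arc on the other side.

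The key combinatorial step is to verify that on any face $f$ of a tetrahedron $\Delta$, the three quadrilateral types of $\Delta$ produce three arcs parallel to three \emph{different} edges of $f$, so that all three arc types on $f$ are realised exactly once. Consequently, the quadrilateral type chosen in $\Delta$ is uniquely determined by the induced arc type on any single face of $\Delta$, and conversely. Thus each face identification yields a bijection between the three quadrilateral-type choices in its two (not necessarily distinct) adjacent tetrahedra, and a splitting surface corresponds exactly to a globally consistent choice of these bijectively related types.

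From here the algorithm is a straightforward propagation. Pick any starting tetrahedron $\Delta_0$ and, for each of its three quadrilateral types in turn, run a breadth-first search that propagates the forced quadrilateral type across each face identification, flagging a contradiction whenever a tetrahedron would receive two incompatible assignments (whether via two distinct BFS paths or via a self-identification of one of its own faces). If some starting choice yields a complete consistent assignment, that assignment defines a splitting surface by construction; if a splitting surface exists, then its restriction to $\Delta_0$ must be one of the three starting choices, so if all three BFS runs fail then no splitting surface exists. Each BFS inspects each tetrahedron and each of its four faces a constant number of times, so the total running time is $O(n)$, which is easily polynomial in $n$. The main point requiring a little care is the treatment of self-identifications (and of identifications among faces of the same tetrahedron at the very start), but these are handled uniformly by simply recording the contradictions they produce; I do not expect any genuine obstacle beyond establishing the bijection observation in the previous paragraph.
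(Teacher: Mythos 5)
Your proposal is correct and uses essentially the same approach as the paper's proof: both observe that the three quadrilateral types in a tetrahedron induce three distinct normal arc types on any face (so a choice of arc type on one face forces the quadrilateral type), and both run a breadth-first propagation from each of the three possible quadrilateral choices in an initial tetrahedron, checking for contradictions along the way. Your write-up makes the bijection observation slightly more explicit and flags the self-identification edge case, but the algorithm and its correctness argument are the same.
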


\begin{proof}
First, consider any two (not necessarily distinct) tetrahedra $\Delta$ and $\Delta'$
that are glued together along a triangular face $f$,
and suppose we have a single quadrilateral $q$ in $\Delta$.
Since quadrilaterals of different types in $\Delta'$ never share any normal arc types,
the quadrilateral $q$ can only match up across $f$ with one choice of quadrilateral in $\Delta'$.

With this in mind, fix an initial tetrahedron $\Delta_0$ in $\mathcal{T}$.
A splitting surface in $\mathcal{T}$ must meet $\Delta_0$ in exactly one of the three possible quadrilaterals.
By propagating the above reasoning through all pairs of adjacent tetrahedra,
we see that each initial choice of quadrilateral $q_0$ in $\Delta_0$
forces at most one choice of quadrilateral in \emph{every} other tetrahedron of $\mathcal{T}$.
Thus, to determine whether $q_0$ forms part of a splitting surface,
we simply use a breadth-first search to sequentially visit each tetrahedron $\Delta$ of $\mathcal{T}$,
checking at each step whether it is possible to insert a quadrilateral in $\Delta$
without violating any matching equations.
If we encounter some $\Delta$ in which no such quadrilateral can be inserted,
then we conclude that there is no splitting surface that meets $\Delta_0$ in the quadrilateral $q_0$.
On the other hand, if the search finishes and we find that we can insert a quadrilateral at every step,
then we will have constructed a splitting surface in $\mathcal{T}$.
Thus, by performing this breadth-first search once for each of the three possible choices of quadrilateral in $\Delta_0$,
we can determine in polynomial time whether $\mathcal{T}$ contains a splitting surface.
\end{proof}

The remainder of this section is devoted to proving that detecting connected spanning central surfaces is $\mathrm{NP}$-complete. In fact, we prove a slightly stronger result: the problem remains $\mathrm{NP}$-complete even if we restrict the input to be a triangulation of an orientable $3$-manifold (see Theorem \ref{thm:spanningCentralNPhard} below). Our proof strategy is to find a reduction from the graph-theoretic computational problem \scap{Hamiltonian cycle}. We restrict our attention to graphs that are $3$-regular, since detecting Hamiltonian cycles remains $\mathrm{NP}$-complete under this condition \cite{GareyJohnson79}. To avoid confusion with the vertices and edges of triangulations, we will refer to the vertices of graphs as \emph{nodes} and the edges of graphs as \emph{arcs}.

\begin{problem}[\scap{Hamiltonian cycle}]\label{prbm:HamCycle}
\hspace{2em}
\begin{description}[nosep]
\item[\texttt{INSTANCE:}]A $3$-regular graph $G$.
\item[\texttt{QUESTION:}]Does $G$ contain a Hamiltonian cycle?
\end{description}
\end{problem}

In essence, given any $3$-regular graph $G$, our goal is to build an orientable $3$-manifold triangulation $\mathcal{T}_G$, such that $\mathcal{T}_G$ contains a connected spanning central surface if and only if $G$ contains a Hamiltonian cycle. The key to this construction is the \textbf{node gadget}, a small triangulation which we use to represent the nodes in $G$. More specifically, each node in $G$ will be assigned a corresponding node gadget, and the arcs in $G$ will determine how we glue all the node gadgets together to form the triangulation $\mathcal{T}_G$.

Before we state and prove Theorem \ref{thm:spanningCentralNPhard}, we spend some time working through the construction of the node gadget. The first step is to construct a ``triangular solid torus'', which acts as a sort of precursor to the node gadget.

\begin{construction}[Triangular solid torus]\label{cons:triTorus}
To build the triangular solid torus, start with three tetrahedra:
\begin{itemize}
\item $\Delta_0$, with vertices labelled $A,B,C,D$;
\item $\Delta_1$, with vertices labelled $E,F,G,H$; and
\item $\Delta_2$, with vertices labelled $I,J,K,L$.
\end{itemize}
The idea is to form a solid torus by stacking these tetrahedra in a cycle, with $\Delta_1$ on top of $\Delta_0$, $\Delta_2$ on top of $\Delta_1$, and $\Delta_0$ on top of $\Delta_2$. More precisely, we glue the tetrahedra together using the following face identifications.
\begin{enumerate}[(1)]
\item $ABD \longleftrightarrow GFH$
\item $EGH \longleftrightarrow IKJ$
\item $BCD \longleftrightarrow IKL$
\end{enumerate}
As a result of these face identifications, observe that the triangular solid torus has three vertices, all of which are on the boundary.
\end{construction}

This construction is illustrated in Figure \ref{fig:triTorus}. In particular, note the arrows in Figure \ref{subfig:triTorus}, which indicate that the ``top'' (face $IKL$) has been identified with the ``bottom'' (face $BCD$). These arrows should be assumed to be present in all subsequent illustrations of the triangular solid torus; omitting the arrows simply helps to declutter some of the figures that appear later on.

\begin{figure}[htbp]
\centering
	\begin{subfigure}[t]{0.63\textwidth}
	\centering
		\begin{tikzpicture}[scale=0.3]
		
		% Bottom tet
		\begin{scope}[shift={(-11,0)}]
		\draw[very thick, dashed, line cap=round] (-5,-6) node[left] {$C$} -- (5,-6) node[right] {$D$};
		\draw[very thick, line cap=round] (-5,2) node[above] {$A$} -- (5,-6);
		\draw[very thick, line cap=round] (-5,-6) -- (0,-8) node[below] {$B$};
		\draw[very thick, line cap=round] (0,-8) -- (5,-6);
		\draw[very thick, line cap=round] (-5,-6) -- (-5,2);
		\draw[very thick, line cap=round] (-5,2) -- (0,-8);
		\end{scope}
		
		% Middle tet
		\draw[very thick, dashed, line cap=round] (-5,2) -- (5,-6);
		\draw[very thick, line cap=round] (0,-8) node[below] {$F$} -- (5,-6) node[right] {$H$};
		\draw[very thick, line cap=round] (-5,2) node[left] {$G$} -- (0,-8);
		\draw[very thick, line cap=round] (0,-8) -- (0,0) node[above] {$E$};
		\draw[very thick, line cap=round] (0,0) -- (5,-6);
		\draw[very thick, line cap=round] (0,0) -- (-5,2);
		
		% Top tet
		\begin{scope}[shift={(10,0)}]
		\draw[very thick, line cap=round] (-5,2) node[left] {$K$} -- (5,-6) node[below] {$J$};
		\draw[very thick, line cap=round] (0,0) node[above] {$I$} -- (5,-6);
		\draw[very thick, line cap=round] (5,-6) -- (5,2) node[right] {$L$};
		\draw[very thick, line cap=round] (0,0) -- (-5,2);
		\draw[very thick, line cap=round] (-5,2) -- (5,2);
		\draw[very thick, line cap=round] (5,2) -- (0,0);
		\end{scope}
		
		\end{tikzpicture}
	\subcaption{Tetrahedron vertex labels.}
	\end{subfigure}
	\hspace{24pt}
	\begin{subfigure}[t]{0.28\textwidth}
	\centering
		\begin{tikzpicture}[scale=0.4]
		
		\draw[very thick, dashed, line cap=round, midarrow={0.55}{\whitearrow\whitearrow}] (5,-6) -- (-5,-6);
		\draw[very thick, dashed, line cap=round] (-5,2) -- (5,-6);
		
		\draw[very thick, line cap=round, midarrow={0.55}{\whitearrow}] (-5,-6) -- (0,-8);
		\draw[very thick, line cap=round, midarrow={0.55}{\blackarrow}] (0,-8) -- (5,-6);
		
		\draw[very thick, line cap=round] (-5,-6) -- (-5,2);
		\draw[very thick, line cap=round] (-5,2) -- (0,-8);
		\draw[very thick, line cap=round] (0,-8) -- (0,0);
		\draw[very thick, line cap=round] (0,0) -- (5,-6);
		\draw[very thick, line cap=round] (5,-6) -- (5,2);
		
		\draw[very thick, line cap=round, midarrow={0.55}{\whitearrow}] (-5,2) -- (0,0);
		\draw[very thick, line cap=round, midarrow={0.55}{\whitearrow\whitearrow}] (5,2) -- (-5,2);
		\draw[very thick, line cap=round, midarrow={0.55}{\blackarrow}] (0,0) -- (5,2);
		
		\end{tikzpicture}
	\subcaption{The triangular solid torus.}
	\label{subfig:triTorus}
	\end{subfigure}
\caption{Construction of the triangular solid torus.}
\label{fig:triTorus}
\end{figure}
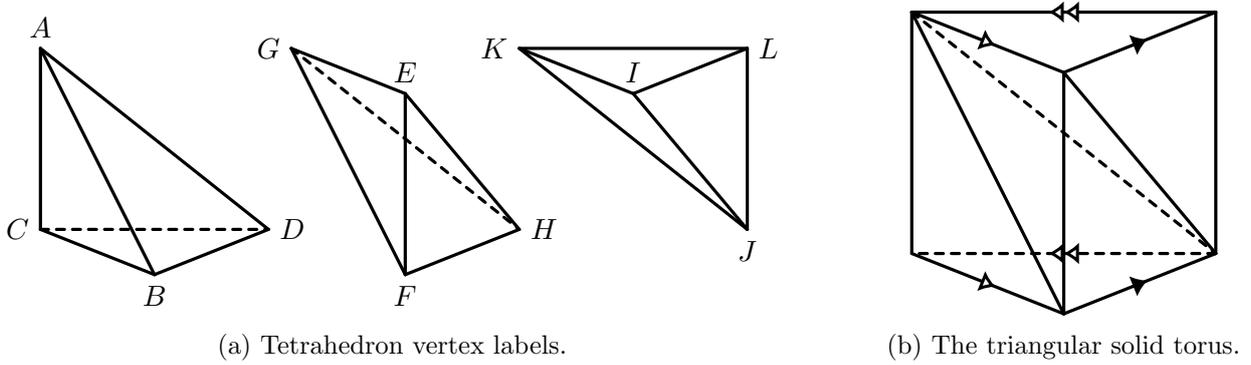

Because of its three-way symmetry,
the triangular solid torus seems particularly suitable for simulating nodes of degree $3$.
To pin down this symmetry more precisely,
we classify the edges of the triangular solid torus according to their \textbf{degree};
the degree of an edge $e$ in a triangulation $\mathcal{T}$ is the number of times
$e$ appears as an edge of some tetrahedron in $\mathcal{T}$.
As illustrated in Figure \ref{fig:triTorusDegrees}, the triangular solid torus has:
\begin{itemize}
\item three edges of degree $1$, which we will call \textbf{axis edges};
\item three edges of degree $2$, which we will call \textbf{minor edges}; and
\item three edges of degree $3$, which we will call \textbf{major edges}.
\end{itemize}
Note that these nine edges are all on the boundary of the triangular solid torus.

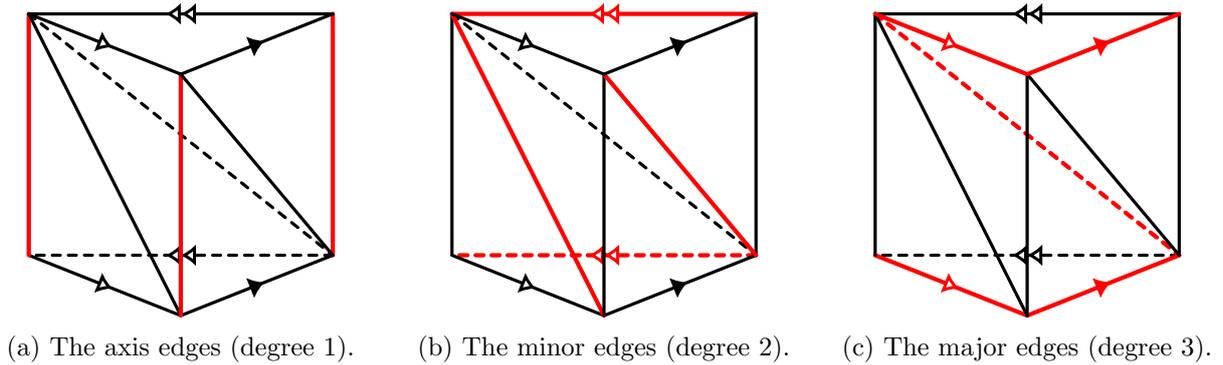
\begin{figure}[htbp]
\centering
	\begin{subfigure}[t]{0.32\textwidth}
	\centering
		\begin{tikzpicture}[scale=0.4]
		
		\draw[very thick, dashed, line cap=round, midarrow={0.55}{\whitearrow\whitearrow}] (5,-6) -- (-5,-6);
		\draw[very thick, dashed, line cap=round] (-5,2) -- (5,-6);
		
		\draw[very thick, line cap=round, midarrow={0.55}{\whitearrow}] (-5,-6) -- (0,-8);
		\draw[very thick, line cap=round, midarrow={0.55}{\blackarrow}] (0,-8) -- (5,-6);
		\draw[ultra thick, line cap=round, red] (-5,-6) -- (-5,2);
		\draw[ultra thick, line cap=round, red] (5,-6) -- (5,2);
		\draw[very thick, line cap=round, midarrow={0.55}{\whitearrow\whitearrow}] (5,2) -- (-5,2);
		
		\draw[very thick, line cap=round] (-5,2) -- (0,-8);
		\draw[ultra thick, line cap=round, red] (0,-8) -- (0,0);
		\draw[very thick, line cap=round] (0,0) -- (5,-6);
		\draw[very thick, line cap=round, midarrow={0.55}{\whitearrow}] (-5,2) -- (0,0);
		\draw[very thick, line cap=round, midarrow={0.55}{\blackarrow}] (0,0) -- (5,2);
		
		\end{tikzpicture}
	\subcaption{The axis edges (degree $1$).}
	\end{subfigure}
	\begin{subfigure}[t]{0.32\textwidth}
	\centering
		\begin{tikzpicture}[scale=0.4]
		
		\draw[ultra thick, dashed, line cap=round, red, midarrow={0.55}{\whitearrow\whitearrow}] (5,-6) -- (-5,-6);
		\draw[very thick, dashed, line cap=round] (-5,2) -- (5,-6);
		
		\draw[very thick, line cap=round, midarrow={0.55}{\whitearrow}] (-5,-6) -- (0,-8);
		\draw[very thick, line cap=round, midarrow={0.55}{\blackarrow}] (0,-8) -- (5,-6);
		\draw[very thick, line cap=round] (-5,-6) -- (-5,2);
		\draw[very thick, line cap=round] (5,-6) -- (5,2);
		\draw[ultra thick, line cap=round, red, midarrow={0.55}{\whitearrow\whitearrow}] (5,2) -- (-5,2);
		
		\draw[ultra thick, line cap=round, red] (-5,2) -- (0,-8);
		\draw[very thick, line cap=round] (0,-8) -- (0,0);
		\draw[ultra thick, line cap=round, red] (0,0) -- (5,-6);
		\draw[very thick, line cap=round, midarrow={0.55}{\whitearrow}] (-5,2) -- (0,0);
		\draw[very thick, line cap=round, midarrow={0.55}{\blackarrow}] (0,0) -- (5,2);
		
		\end{tikzpicture}
	\subcaption{The minor edges (degree $2$).}
	\end{subfigure}
	\begin{subfigure}[t]{0.32\textwidth}
	\centering
		\begin{tikzpicture}[scale=0.4]
		
		\draw[very thick, dashed, line cap=round, midarrow={0.55}{\whitearrow\whitearrow}] (5,-6) -- (-5,-6);
		\draw[ultra thick, dashed, line cap=round, red] (-5,2) -- (5,-6);
		
		\draw[ultra thick, line cap=round, red, midarrow={0.55}{\whitearrow}] (-5,-6) -- (0,-8);
		\draw[ultra thick, line cap=round, red, midarrow={0.55}{\blackarrow}] (0,-8) -- (5,-6);
		\draw[very thick, line cap=round] (-5,-6) -- (-5,2);
		\draw[very thick, line cap=round] (5,-6) -- (5,2);
		\draw[very thick, line cap=round, midarrow={0.55}{\whitearrow\whitearrow}] (5,2) -- (-5,2);
		
		\draw[very thick, line cap=round] (-5,2) -- (0,-8);
		\draw[very thick, line cap=round] (0,-8) -- (0,0);
		\draw[very thick, line cap=round] (0,0) -- (5,-6);
		\draw[ultra thick, line cap=round, red, midarrow={0.55}{\whitearrow}] (-5,2) -- (0,0);
		\draw[ultra thick, line cap=round, red, midarrow={0.55}{\blackarrow}] (0,0) -- (5,2);
		
		\end{tikzpicture}
	\subcaption{The major edges (degree $3$).}
	\end{subfigure}
\caption{The edges of the triangular solid torus all have degree $1$, $2$ or $3$.}
\label{fig:triTorusDegrees}
\end{figure}

Observe that for each pair of axis edges $e$ and $e'$, there is a single minor edge that joins an endpoint of $e$ to an endpoint of $e'$; similarly, there is a single major edge that joins an endpoint of $e$ to an endpoint of $e'$. These four edges bound a pair of triangular faces that glue together to form an annulus. The two triangular faces always come from two of the three tetrahedra in the triangular torus; we will say that the annulus ``excludes'' a tetrahedron $\Delta$ if it does \emph{not} include a triangular face from $\Delta$. As illustrated in Figure \ref{fig:triTorusAnnuli}, the boundary of the triangular solid torus consists of three annuli, each of which excludes a different tetrahedron. We number these annuli $0,1,2$ so that for each $i\in\curly{0,1,2}$, annulus $i$ excludes the tetrahedron $\Delta_i$. When discussing a particular annulus numbered $i$, we will often refer to the other two annuli as annulus $i+1$ and annulus $i-1$ (reducing modulo $3$ if necessary).

\begin{figure}[htbp]
\centering
	\begin{subfigure}[t]{0.3\textwidth}
	\centering
		\begin{tikzpicture}[scale=0.4]
		
		% Fill
		\fill[cyan] (5,2) -- (0,0) -- (0,-8) -- (5,-6);
		
		%%%%%
		% Triangular solid torus
		\draw[very thick, dashed, line cap=round, midarrow={0.55}{\whitearrow\whitearrow}] (5,-6) -- (-5,-6);
		\draw[very thick, dashed, line cap=round] (-5,2) -- (5,-6);
		
		\draw[very thick, line cap=round, midarrow={0.55}{\whitearrow}] (-5,-6) -- (0,-8);
		\draw[very thick, line cap=round, midarrow={0.55}{\blackarrow}, blue] (0,-8) -- (5,-6);
		\draw[very thick, line cap=round] (-5,-6) -- (-5,2);
		\draw[very thick, line cap=round, blue] (5,-6) -- (5,2);
		\draw[very thick, line cap=round, midarrow={0.55}{\whitearrow\whitearrow}] (5,2) -- (-5,2);
		
		\draw[very thick, line cap=round] (-5,2) -- (0,-8);
		\draw[very thick, line cap=round, blue] (0,-8) -- (0,0);
		\draw[very thick, line cap=round, blue] (0,0) -- (5,-6);
		\draw[very thick, line cap=round, midarrow={0.55}{\whitearrow}] (-5,2) -- (0,0);
		\draw[very thick, line cap=round, midarrow={0.55}{\blackarrow}, blue] (0,0) -- (5,2);
		
		%%%%%
		% Edge labels
		\node[right, blue] at (0,-4) {$a$};
		\node[left, blue] at (5,-2) {$b$};
		
		\end{tikzpicture}
	\subcaption{Annulus $0$ excludes $\Delta_0$.}
	\end{subfigure}
	\hspace{6pt}
	\begin{subfigure}[t]{0.3\textwidth}
	\centering
		\begin{tikzpicture}[scale=0.4]
		
		% Fill
		\fill[cyan] (-5,2) -- (5,2) -- (5,-6) -- (-5,-6);
		
		%%%%%
		% Triangular solid torus
		\draw[very thick, dashed, line cap=round, midarrow={0.55}{\whitearrow\whitearrow}, blue] (5,-6) -- (-5,-6);
		\draw[very thick, dashed, line cap=round, blue] (-5,2) -- (5,-6);
		
		\draw[very thick, line cap=round, midarrow={0.55}{\whitearrow}] (-5,-6) -- (0,-8);
		\draw[very thick, line cap=round, midarrow={0.55}{\blackarrow}] (0,-8) -- (5,-6);
		\draw[very thick, line cap=round, blue] (-5,-6) -- (-5,2);
		\draw[very thick, line cap=round, blue] (5,-6) -- (5,2);
		\draw[very thick, line cap=round, midarrow={0.55}{\whitearrow\whitearrow}, blue] (5,2) -- (-5,2);
		
		\draw[very thick, line cap=round] (-5,2) -- (0,-8);
		\draw[very thick, line cap=round] (0,-8) -- (0,0);
		\draw[very thick, line cap=round] (0,0) -- (5,-6);
		\draw[very thick, line cap=round, midarrow={0.55}{\whitearrow}] (-5,2) -- (0,0);
		\draw[very thick, line cap=round, midarrow={0.55}{\blackarrow}] (0,0) -- (5,2);
		
		%%%%%
		% Edge labels
		\node[left, blue] at (5,-2) {$a$};
		\node[right, blue] at (-5,-2) {$b$};
		
		\end{tikzpicture}
	\subcaption{Annulus $1$ excludes $\Delta_1$.}
	\end{subfigure}
	\hspace{6pt}
	\begin{subfigure}[t]{0.3\textwidth}
	\centering
		\begin{tikzpicture}[scale=0.4]
		
		% Fill
		\fill[cyan] (-5,2) -- (0,0) -- (0,-8) -- (-5,-6);
		
		%%%%%
		% Triangular solid torus
		\draw[very thick, dashed, line cap=round, midarrow={0.55}{\whitearrow\whitearrow}] (5,-6) -- (-5,-6);
		\draw[very thick, dashed, line cap=round] (-5,2) -- (5,-6);
		
		\draw[very thick, line cap=round, midarrow={0.55}{\whitearrow}, blue] (-5,-6) -- (0,-8);
		\draw[very thick, line cap=round, midarrow={0.55}{\blackarrow}] (0,-8) -- (5,-6);
		\draw[very thick, line cap=round, blue] (-5,-6) -- (-5,2);
		\draw[very thick, line cap=round] (5,-6) -- (5,2);
		\draw[very thick, line cap=round, midarrow={0.55}{\whitearrow\whitearrow}] (5,2) -- (-5,2);
		
		\draw[very thick, line cap=round, blue] (-5,2) -- (0,-8);
		\draw[very thick, line cap=round, blue] (0,-8) -- (0,0);
		\draw[very thick, line cap=round] (0,0) -- (5,-6);
		\draw[very thick, line cap=round, midarrow={0.55}{\whitearrow}, blue] (-5,2) -- (0,0);
		\draw[very thick, line cap=round, midarrow={0.55}{\blackarrow}] (0,0) -- (5,2);
		
		%%%%%
		% Edge labels
		\node[right, blue] at (-5,-2) {$a$};
		\node[left, blue] at (0,-4) {$b$};
		
		\end{tikzpicture}
	\subcaption{Annulus $2$ excludes $\Delta_2$.}
	\end{subfigure}
\caption{The three annuli on the boundary of the triangular solid torus.}
\label{fig:triTorusAnnuli}
\end{figure}
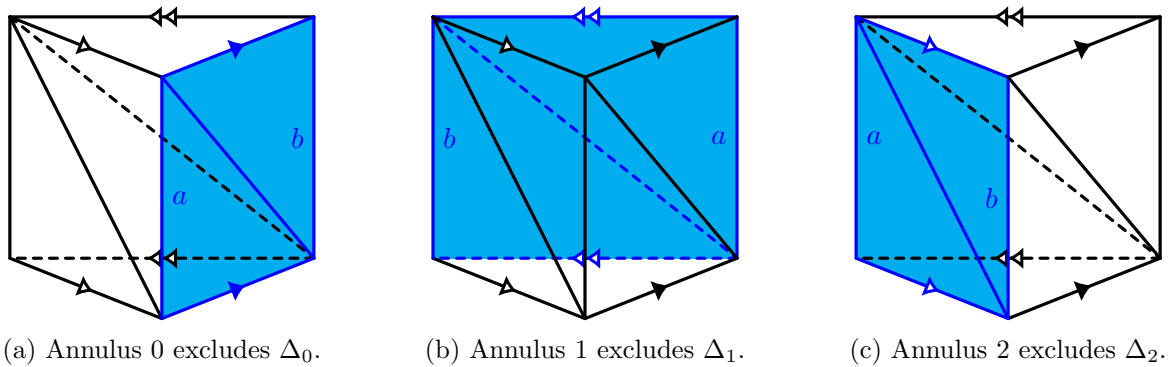

We can number the axis edges and the vertices of the triangular solid torus in a similar way. Since each axis edge meets a different tetrahedron, we can number them $0,1,2$ so that axis edge $i$ always meets the tetrahedron $\Delta_i$. Moreover, each vertex only meets one of the three axis edges; we number the vertices $0,1,2$ so that vertex $i$ always meets axis edge $i$. With this numbering, observe that for each $i\in\{0,1,2\}$, annulus $i$ meets axis edges $i+1$ and $i-1$.

Recall that our goal is to take any $3$-regular graph $G$ and build an orientable $3$-manifold triangulation $\mathcal{T}_G$, such that $\mathcal{T}_G$ contains a connected spanning central surface if and only if $G$ contains a Hamiltonian cycle. Since $G$ is $3$-regular, a Hamiltonian cycle in $G$ can pass through each node in precisely three possible ways; specifically, for each node $v$ in $G$, a Hamiltonian cycle in $G$ must contain precisely two of the three arcs incident with $v$. Our idea is to simulate this using spanning central surfaces that are incident with precisely two of the three annuli on the boundary of the triangular solid torus. This role is played perfectly by the three spanning central surfaces illustrated in Figure \ref{fig:triTorusTubes}. Note that each of these surfaces is topologically a tube. (A tube is topologically equivalent to an annulus, but since we have already reserved the word ``annulus'', we will exclusively use the word ``tube'' to describe these normal surfaces.)

\begin{figure}[htbp]
\centering
	\begin{subfigure}[t]{0.32\textwidth}
	\centering
		\begin{tikzpicture}[scale=0.4]
		
		\draw[thick, dashed, line cap=round, gray] (5,-6) -- (-5,-6);
		\draw[thick, dashed, line cap=round, gray] (-5,2) -- (5,-6);
		
		\draw[thick, line cap=round, gray] (-5,-6) -- (0,-8);
		\draw[thick, line cap=round, gray] (0,-8) -- (5,-6);
		\draw[thick, line cap=round, gray] (-5,-6) -- (-5,2);
		\draw[thick, line cap=round, gray] (5,-6) -- (5,2);
		\draw[thick, line cap=round, gray] (5,2) -- (-5,2);
		
		%%%%%%
		% Surface
		\fill[pink] (-3.75,1.5) -- (-2.5,2) -- (-2.5,-6) -- (-3.75,-6.5) -- cycle;
		%%%
		\draw[very thick, line cap=round] (-3.75,1.5) -- (-2.5,2);
		\draw[very thick, dashed, line cap=round] (-2.5,2) -- (-2.5,-6);
		\draw[very thick, dashed, line cap=round] (-2.5,-6) -- (-3.75,-6.5);
		\draw[very thick, line cap=round] (-3.75,-6.5) -- (-3.75,1.5);
		\draw[very thick, dashed, line cap=round] (-3.75,1.5) -- (-2.5,0);
		\draw[very thick, dashed, line cap=round] (-2.5,0) -- (-3.75,-0.5);
		%%%%%%
		
		\draw[thick, line cap=round, gray] (-5,2) -- (0,-8);
		\draw[thick, line cap=round, gray] (0,-8) -- (0,0);
		\draw[thick, line cap=round, gray] (0,0) -- (5,-6);
		\draw[thick, line cap=round, gray] (-5,2) -- (0,0);
		\draw[thick, line cap=round, gray] (0,0) -- (5,2);
		
		\end{tikzpicture}
	\end{subfigure}
	\begin{subfigure}[t]{0.32\textwidth}
	\centering
		\begin{tikzpicture}[scale=0.4]
		
		\draw[thick, dashed, line cap=round, gray] (5,-6) -- (-5,-6);
		\draw[thick, dashed, line cap=round, gray] (-5,2) -- (5,-6);
		
		\draw[thick, line cap=round, gray] (-5,-6) -- (0,-8);
		\draw[thick, line cap=round, gray] (0,-8) -- (5,-6);
		\draw[thick, line cap=round, gray] (-5,-6) -- (-5,2);
		\draw[thick, line cap=round, gray] (5,-6) -- (5,2);
		\draw[thick, line cap=round, gray] (5,2) -- (-5,2);
		
		%%%%%%
		% Surface
		\fill[pink] (-1.25,0.5) -- (1.25,0.5) -- (1.25,-7.5) -- (-1.25,-7.5) -- cycle;
		%%%
		\draw[very thick, line cap=round] (-1.25,0.5) -- (1.25,0.5);
		\draw[very thick, line cap=round] (1.25,0.5) -- (1.25,-7.5);
		\draw[very thick, dashed, line cap=round] (1.25,-7.5) -- (-1.25,-7.5);
		\draw[very thick, line cap=round] (-1.25,-7.5) -- (-1.25,0.5);
		\draw[very thick, dashed, line cap=round] (-1.25,0.5) -- (1.25,-1.5);
		\draw[very thick, dashed, line cap=round] (1.25,-7.5) -- (-1.25,-5.5);
		%%%%%%
		
		\draw[thick, line cap=round, gray] (-5,2) -- (0,-8);
		\draw[thick, line cap=round, gray] (0,-8) -- (0,0);
		\draw[thick, line cap=round, gray] (0,0) -- (5,-6);
		\draw[thick, line cap=round, gray] (-5,2) -- (0,0);
		\draw[thick, line cap=round, gray] (0,0) -- (5,2);
		
		\end{tikzpicture}
	\end{subfigure}
	\begin{subfigure}[t]{0.32\textwidth}
	\centering
		\begin{tikzpicture}[scale=0.4]
		
		\draw[thick, dashed, line cap=round, gray] (5,-6) -- (-5,-6);
		\draw[thick, dashed, line cap=round, gray] (-5,2) -- (5,-6);
		
		\draw[thick, line cap=round, gray] (-5,-6) -- (0,-8);
		\draw[thick, line cap=round, gray] (0,-8) -- (5,-6);
		\draw[thick, line cap=round, gray] (-5,-6) -- (-5,2);
		\draw[thick, line cap=round, gray] (5,-6) -- (5,2);
		\draw[thick, line cap=round, gray] (5,2) -- (-5,2);
		
		%%%%%%
		% Surface
		\fill[pink] (3.75,1.5) -- (2.5,2) -- (2.5,-6) -- (3.75,-6.5) -- cycle;
		%%%
		\draw[very thick, line cap=round] (3.75,1.5) -- (2.5,2);
		\draw[very thick, dashed, line cap=round] (2.5,2) -- (2.5,-6);
		\draw[very thick, dashed, line cap=round] (2.5,-6) -- (3.75,-6.5);
		\draw[very thick, line cap=round] (3.75,-6.5) -- (3.75,1.5);
		\draw[very thick, dashed, line cap=round] (3.75,-6.5) -- (2.5,-4);
		\draw[very thick, dashed, line cap=round] (2.5,-4) -- (3.75,-4.5);
		%%%%%%
		
		\draw[thick, line cap=round, gray] (-5,2) -- (0,-8);
		\draw[thick, line cap=round, gray] (0,-8) -- (0,0);
		\draw[thick, line cap=round, gray] (0,0) -- (5,-6);
		\draw[thick, line cap=round, gray] (-5,2) -- (0,0);
		\draw[thick, line cap=round, gray] (0,0) -- (5,2);
		
		\end{tikzpicture}
	\end{subfigure}
\caption{Three spanning central tubes in the triangular solid torus.}
\label{fig:triTorusTubes}
\end{figure}
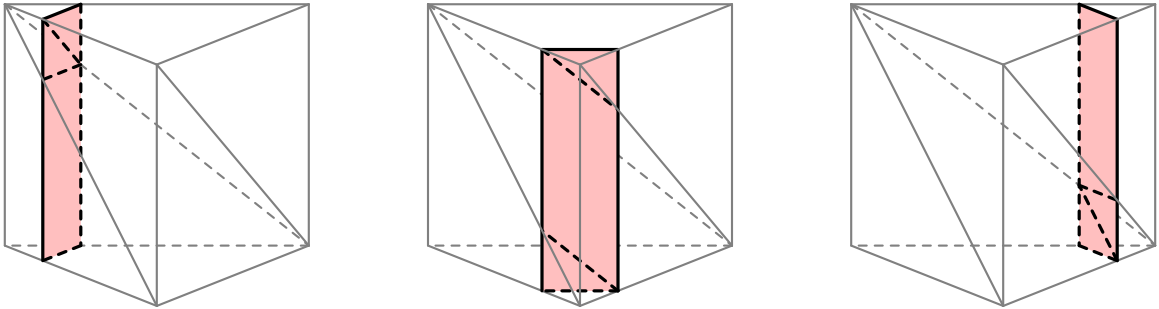

Unfortunately, the triangular solid torus also contains a number of other, unwanted spanning central surfaces. Some of these are illustrated in Figure \ref{fig:unwantedSpanning}. To circumvent this issue, we construct a ``triangular pillow'', which allows us to modify the triangular solid torus in a way that eliminates all but one of the unwanted surfaces.

\begin{figure}[htbp]
\centering
	\begin{subfigure}[t]{0.24\textwidth}
	\centering
		\begin{tikzpicture}[scale=0.4]
		
		\draw[thick, dashed, line cap=round, gray] (5,-6) -- (-5,-6);
		\draw[thick, dashed, line cap=round, gray] (-5,2) -- (5,-6);
		
		\draw[thick, line cap=round, gray] (-5,-6) -- (0,-8);
		\draw[thick, line cap=round, gray] (0,-8) -- (5,-6);
		\draw[thick, line cap=round, gray] (-5,-6) -- (-5,2);
		\draw[thick, line cap=round, gray] (5,-6) -- (5,2);
		\draw[thick, line cap=round, gray] (5,2) -- (-5,2);
		
		%%%%%%
		% Surface
		\fill[pink] (-5,-2) -- (0,-4) -- (5,-2) -- cycle;
		%%%
		\draw[very thick, line cap=round] (-5,-2) -- (0,-4);
		\draw[very thick, dashed, line cap=round] (5,-2) -- (-5,-2);
		\draw[very thick, line cap=round] (0,-4) -- (5,-2);
		\draw[very thick, dashed, line cap=round] (-2.5,-3) -- (0,-2);
		\draw[very thick, dashed, line cap=round] (2.5,-3) -- (0,-2);
		%%%%%%
		
		\draw[thick, line cap=round, gray] (-5,2) -- (0,-8);
		\draw[thick, line cap=round, gray] (0,-8) -- (0,0);
		\draw[thick, line cap=round, gray] (0,0) -- (5,-6);
		\draw[thick, line cap=round, gray] (-5,2) -- (0,0);
		\draw[thick, line cap=round, gray] (0,0) -- (5,2);
		
		\end{tikzpicture}
	\end{subfigure}
	\begin{subfigure}[t]{0.24\textwidth}
	\centering
		\begin{tikzpicture}[scale=0.4]
		
		\draw[thick, dashed, line cap=round, gray] (5,-6) -- (-5,-6);
		\draw[thick, dashed, line cap=round, gray] (-5,2) -- (5,-6);
		
		\draw[thick, line cap=round, gray] (-5,-6) -- (0,-8);
		\draw[thick, line cap=round, gray] (0,-8) -- (5,-6);
		\draw[thick, line cap=round, gray] (-5,-6) -- (-5,2);
		\draw[thick, line cap=round, gray] (5,-6) -- (5,2);
		\draw[thick, line cap=round, gray] (5,2) -- (-5,2);
		
		%%%%%%
		% Surface
		\fill[pink] (-5,-2) -- (-2.5,-7) -- (0,-6) -- cycle;
		\fill[pink] (0,-4) -- (5,-2) -- (0,2) -- (-2.5,1) -- cycle;
		%%%
		\draw[very thick, line cap=round] (-5,-2) -- (-2.5,-7);
		\draw[very thick, dashed, line cap=round] (-2.5,-7) -- (0,-6);
		\draw[very thick, dashed, line cap=round] (0,-6) -- (-5,-2);
		\draw[very thick, line cap=round] (0,-4) -- (5,-2);
		\draw[very thick, dashed, line cap=round] (5,-2) -- (0,2);
		\draw[very thick, line cap=round] (0,2) -- (-2.5,1);
		\draw[very thick, line cap=round] (-2.5,1) -- (0,-4);
		\draw[very thick, dashed, line cap=round] (-2.5,1) -- (2.5,-3);
		%%%%%%
		
		\draw[thick, line cap=round, gray] (-5,2) -- (0,-8);
		\draw[thick, line cap=round, gray] (0,-8) -- (0,0);
		\draw[thick, line cap=round, gray] (0,0) -- (5,-6);
		\draw[thick, line cap=round, gray] (-5,2) -- (0,0);
		\draw[thick, line cap=round, gray] (0,0) -- (5,2);
		
		\end{tikzpicture}
	\end{subfigure}
	\begin{subfigure}[t]{0.24\textwidth}
	\centering
		\begin{tikzpicture}[scale=0.4]
		
		\draw[thick, dashed, line cap=round, gray] (5,-6) -- (-5,-6);
		\draw[thick, dashed, line cap=round, gray] (-5,2) -- (5,-6);
		
		\draw[thick, line cap=round, gray] (-5,-6) -- (0,-8);
		\draw[thick, line cap=round, gray] (0,-8) -- (5,-6);
		\draw[thick, line cap=round, gray] (-5,-6) -- (-5,2);
		\draw[thick, line cap=round, gray] (5,-6) -- (5,2);
		\draw[thick, line cap=round, gray] (5,2) -- (-5,2);
		
		%%%%%%
		% Surface
		\fill[pink] (0,2) -- (2.5,1) -- (5,-2) -- cycle;
		\fill[pink] (-5,-2) -- (0,-4) -- (2.5,-7) -- (0,-6) -- cycle;
		%%%
		\draw[very thick, line cap=round] (0,2) -- (2.5,1);
		\draw[very thick, line cap=round] (2.5,1) -- (5,-2);
		\draw[very thick, dashed, line cap=round] (5,-2) -- (0,2);
		\draw[very thick, line cap=round] (-5,-2) -- (0,-4);
		\draw[very thick, line cap=round] (0,-4) -- (2.5,-7);
		\draw[very thick, dashed, line cap=round] (2.5,-7) -- (0,-6);
		\draw[very thick, dashed, line cap=round] (0,-6) -- (-5,-2);
		\draw[very thick, dashed, line cap=round] (2.5,-7) -- (-2.5,-3);
		%%%%%%
		
		\draw[thick, line cap=round, gray] (-5,2) -- (0,-8);
		\draw[thick, line cap=round, gray] (0,-8) -- (0,0);
		\draw[thick, line cap=round, gray] (0,0) -- (5,-6);
		\draw[thick, line cap=round, gray] (-5,2) -- (0,0);
		\draw[thick, line cap=round, gray] (0,0) -- (5,2);
		
		\end{tikzpicture}
	\end{subfigure}
	\begin{subfigure}[t]{0.24\textwidth}
	\centering
		\begin{tikzpicture}[scale=0.4]
		
		\draw[thick, dashed, line cap=round, gray] (5,-6) -- (-5,-6);
		\draw[thick, dashed, line cap=round, gray] (-5,2) -- (5,-6);
		
		\draw[thick, line cap=round, gray] (-5,-6) -- (0,-8);
		\draw[thick, line cap=round, gray] (0,-8) -- (5,-6);
		\draw[thick, line cap=round, gray] (-5,-6) -- (-5,2);
		\draw[thick, line cap=round, gray] (5,-6) -- (5,2);
		\draw[thick, line cap=round, gray] (5,2) -- (-5,2);
		
		%%%%%%
		% Surface
		\fill[pink] (2.5,-7) -- (-2.5,-7) -- (-5,-2) -- (0,-2) -- cycle;
		\draw[very thick, dashed, line cap=round] (2.5,-7) -- (-2.5,-7);
		\draw[very thick, line cap=round] (-2.5,-7) -- (-5,-2);
		\draw[very thick, dashed, line cap=round] (-5,-2) -- (0,-2);
		%%%
		\fill[pink] (0,-2) -- (2.5,-7) -- (0,-4) -- (-2.5,1) -- cycle;
		\draw[very thick, dashed, line cap=round] (0,-2) -- (2.5,-7);
		\draw[very thick, line cap=round] (2.5,-7) -- (0,-4);
		\draw[very thick, line cap=round] (0,-4) -- (-2.5,1);
		%%%
		\fill[pink] (0,-2) -- (-2.5,1) -- (2.5,1) -- (5,-2) -- cycle;
		\draw[very thick, dashed, line cap=round] (0,-2) -- (-2.5,1);
		\draw[very thick, line cap=round] (-2.5,1) -- (2.5,1);
		\draw[very thick, line cap=round] (2.5,1) -- (5,-2);
		\draw[very thick, dashed, line cap=round] (5,-2) -- (0,-2);
		%%%%%%
		
		\draw[thick, line cap=round, gray] (-5,2) -- (0,-8);
		\draw[thick, line cap=round, gray] (0,-8) -- (0,0);
		\draw[thick, line cap=round, gray] (0,0) -- (5,-6);
		\draw[thick, line cap=round, gray] (-5,2) -- (0,0);
		\draw[thick, line cap=round, gray] (0,0) -- (5,2);
		
		\end{tikzpicture}
	\end{subfigure}
\caption{Some unwanted spanning central surfaces in the triangular solid torus.}
\label{fig:unwantedSpanning}
\end{figure}
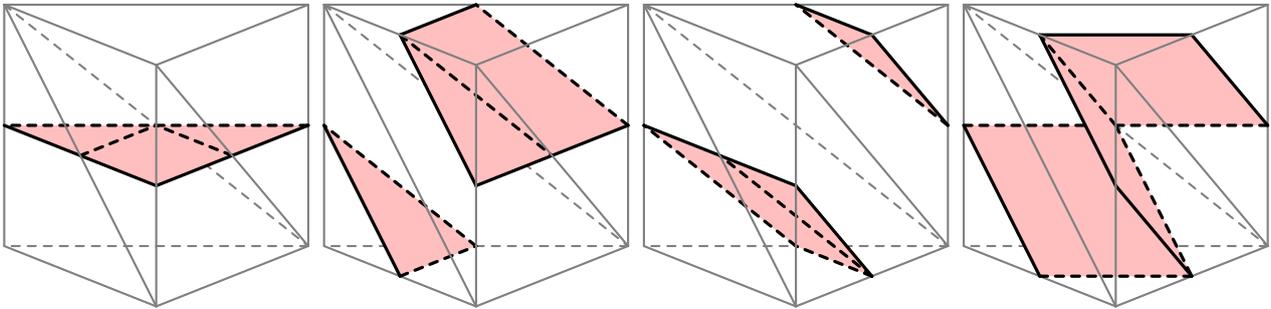

\begin{construction}[Triangular pillow]\label{cons:triPillow}
To build the triangular pillow, start with two tetrahedra:
\begin{itemize}
\item $\Delta_0$, with vertices labelled $A,B,C,D$; and
\item $\Delta_1$, with vertices labelled $E,F,G,H$.
\end{itemize}
We glue these tetrahedra together using the following face identifications.
\begin{enumerate}[(1)]
\item $ABD \longleftrightarrow EFG$
\item $ACD \longleftrightarrow EHG$
\item $BCD \longleftrightarrow FHG$
\end{enumerate}
As a result of these face identifications, observe that the triangular pillow has one internal vertex and three boundary vertices.
\end{construction}

This construction is illustrated in Figure \ref{fig:consTriPillow}. Note that the boundary faces are $ABC$ and $EFH$, with boundary edges:
\begin{itemize}
\item $AB\sim EF$, which we label edge $a$;
\item $BC\sim FH$, which we label edge $b$; and
\item $AC\sim EH$, which we label edge $c$.
\end{itemize}

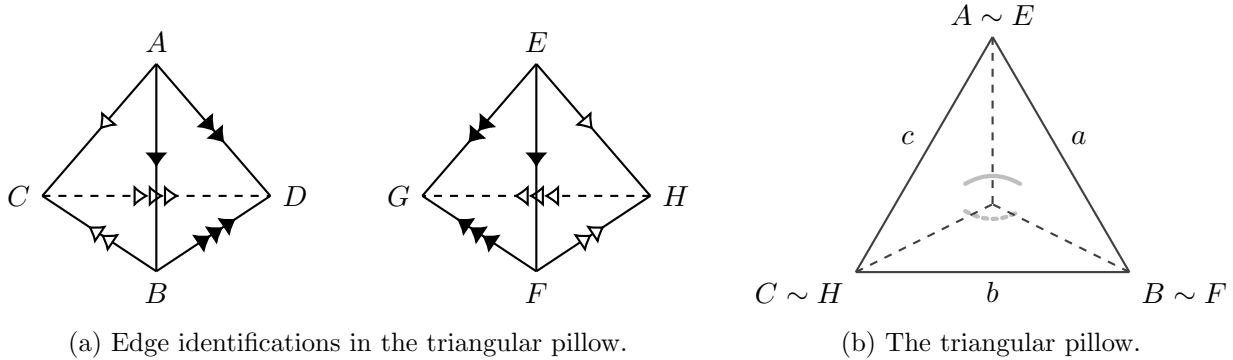
\begin{figure}[htbp]
\centering
	\begin{subfigure}[t]{0.59\textwidth}
	\centering
		\begin{tikzpicture}[scale=0.5]
		
		%%% Left tet
		% Vertices
		\node[inner sep=0pt, label=above:{$A$}] at (0,5.5) (A) {};
		\node[inner sep=0pt, label=below:{$B$}] at (0,0) (B) {};
		\node[inner sep=0pt, label=left:{$C$}] at (-3,2) (C) {};
		\node[inner sep=0pt, label=right:{$D$}] at (3,2) (D) {};
		
		% Edges
		\draw[thick, dashed, midarrow={0.6}{\whitearrow\whitearrow\whitearrow}] (C.center) -- (D.center);
		\draw[thick, midarrow={0.6}{\whitearrow\whitearrow}] (B.center) -- (C.center);
		\draw[thick, midarrow={0.5}{\whitearrow}] (A.center) -- (C.center);
		\draw[thick, midarrow={0.6}{\blackarrow\blackarrow}] (A.center) -- (D.center);
		\draw[thick, midarrow={0.7}{\blackarrow\blackarrow\blackarrow}] (B.center) -- (D.center);
		\draw[thick, midarrow={0.5}{\blackarrow}] (A.center) -- (B.center);
		%%%

		%%% Right tet
		\begin{scope}[shift={(10,0)}]
		% Vertices
		\node[inner sep=0pt, label=above:{$E$}] at (0,5.5) (E) {};
		\node[inner sep=0pt, label=below:{$F$}] at (0,0) (F) {};
		\node[inner sep=0pt, label=left:{$G$}] at (-3,2) (G) {};
		\node[inner sep=0pt, label=right:{$H$}] at (3,2) (H) {};
		
		% Edges
		\draw[thick, dashed, midarrow={0.6}{\whitearrow\whitearrow\whitearrow}] (H.center) -- (G.center);
		\draw[thick, midarrow={0.7}{\blackarrow\blackarrow\blackarrow}] (F.center) -- (G.center);
		\draw[thick, midarrow={0.6}{\blackarrow\blackarrow}] (E.center) -- (G.center);
		\draw[thick, midarrow={0.5}{\whitearrow}] (E.center) -- (H.center);
		\draw[thick, midarrow={0.6}{\whitearrow\whitearrow}] (F.center) -- (H.center);
		\draw[thick, midarrow={0.5}{\blackarrow}] (E.center) -- (F.center);
		\end{scope}
		%%%
		
		\end{tikzpicture}
	\subcaption{Edge identifications in the triangular pillow.}
	\label{fig:triPillowEdges}
	\end{subfigure}
	\begin{subfigure}[t]{0.39\textwidth}
	\centering
		\begin{tikzpicture}[scale=1.8]
		
		% Puff lines
		\draw[ultra thick, lightgray, line cap=round]
			(-0.2,0.65) to[out=30, in=150] (0.2,0.65);
		\draw[ultra thick, lightgray, line cap=round, dotted]
			(-0.2,0.45) to[out=330, in=210] (0.2,0.45);
		
		% Internal edges
		\draw[thick, darkgray, dashed] (1,0) -- (0,0.5);
		\draw[thick, darkgray, dashed] (-1,0) -- (0,0.5);
		\draw[thick, darkgray, dashed] (0,1.732) -- (0,0.5);
		
		% Pillow edges
		\draw[thick, darkgray] (1,0) -- (-1,0);
		\draw[thick, darkgray] (1,0) -- (0,1.732);
		\draw[thick, darkgray] (-1,0) -- (0,1.732);
		
		% Edge labels
		\node[above right] at (0.5,0.866) {$a$};
		\node[below] at (0,0) {$b$};
		\node[above left] at (-0.5,0.866) {$c$};
		
		% Vertex labels
		\node[label={90:$A\sim E$}, inner sep=0pt] at (0,1.732) {};
		\node[label={330:$B\sim F$}, inner sep=0pt] at (1,0) {};
		\node[label={210:$C\sim H$}, inner sep=0pt] at (-1,0) {};
		
		\end{tikzpicture}
	\subcaption{The triangular pillow.}
	\label{fig:triPillow}
	\end{subfigure}
\caption{Construction of the triangular pillow.}
\label{fig:consTriPillow}
\end{figure}

Before we explain how we use the triangular pillow to modify the triangular solid torus, it will be useful to first understand all the spanning central surfaces in the triangular pillow. By definition, such surfaces consist of one elementary disc in each of the tetrahedra $\Delta_0$ and $\Delta_1$. Observe that each of the seven choices of elementary disc in $\Delta_0$ can only match up with one of the seven choices of elementary disc in $\Delta_1$. Thus, the triangular pillow contains seven spanning central surfaces, all of which are connected: three discs that form the links of the boundary vertices, a sphere that forms the link of the internal vertex, and three discs built by attaching a quadrilateral in $\Delta_0$ to a quadrilateral in $\Delta_1$.

Each spanning central disc $D$ in the triangular pillow meets two of the three boundary edges; we will say that $D$ is ``parallel'' to the boundary edge it does \emph{not} meet. For each boundary edge $e$ in the triangular pillow, two of the six spanning central discs are ``parallel'' to $e$; one of these discs forms a vertex link, while the other is built from quadrilaterals. Since we really only care about how these discs meet the boundary of the triangular pillow, we can categorise them into three ``types'', as illustrated in Figure \ref{fig:triPillowDiscs}.

\begin{figure}[htbp]
\centering
	\begin{subfigure}[t]{0.3\textwidth}
	\centering
		\begin{tikzpicture}[scale=2]
		
		% Disc fill
		\fill[pink] (-0.5,0) to[out=90, in=330] (-0.75,0.433) to[out=255, in=165] cycle;
		
		% Back disc edges
		\draw[very thick, line cap=round, dashed] (-0.5,0) to[out=90, in=330] (-0.75,0.433);
		
		% Puff lines
		\draw[ultra thick, lightgray, line cap=round]
			(-0.2,0.65) to[out=30, in=150] (0.2,0.65);
		\draw[ultra thick, lightgray, line cap=round, dotted]
			(-0.2,0.45) to[out=330, in=210] (0.2,0.45);
		
		% Pillow edges
		\draw[thick, darkgray] (1,0) -- (-1,0);
		\draw[thick, darkgray] (1,0) -- (0,1.732);
		\draw[thick, darkgray] (-1,0) -- (0,1.732);
		
		% Edge labels
		\node[above right] at (0.5,0.866) {$a$};
		\node[below] at (0,0) {$b$};
		\node[above left] at (-0.5,0.866) {$c$};
		
		% Front disc edges
		\draw[very thick, line cap=round] (-0.75,0.433) to[out=255, in=165] (-0.5,0);
		
		\end{tikzpicture}
	\subcaption{Disc parallel to edge $a$.}
	\end{subfigure}
	\hspace{12pt}
	\begin{subfigure}[t]{0.3\textwidth}
	\centering
		\begin{tikzpicture}[scale=2]
		
		% Disc fill
		\fill[pink] (0.25,1.299) to[out=210, in=330] (-0.25,1.299) to[out=45, in=135] cycle;
		
		% Back disc edges
		\draw[very thick, line cap=round, dashed] (0.25,1.299) to[out=210, in=330] (-0.25,1.299);
		
		% Puff lines
		\draw[ultra thick, lightgray, line cap=round]
			(-0.2,0.65) to[out=30, in=150] (0.2,0.65);
		\draw[ultra thick, lightgray, line cap=round, dotted]
			(-0.2,0.45) to[out=330, in=210] (0.2,0.45);
		
		% Pillow edges
		\draw[thick, darkgray] (1,0) -- (-1,0);
		\draw[thick, darkgray] (1,0) -- (0,1.732);
		\draw[thick, darkgray] (-1,0) -- (0,1.732);
		
		% Edge labels
		\node[above right] at (0.5,0.866) {$a$};
		\node[below] at (0,0) {$b$};
		\node[above left] at (-0.5,0.866) {$c$};
		
		% Front disc edges
		\draw[very thick, line cap=round] (-0.25,1.299) to[out=45, in=135] (0.25,1.299);
		
		\end{tikzpicture}
	\subcaption{Disc parallel to edge $b$.}
	\end{subfigure}
	\hspace{12pt}
	\begin{subfigure}[t]{0.3\textwidth}
	\centering
		\begin{tikzpicture}[scale=2]
		
		% Disc fill
		\fill[pink] (0.5,0) to[out=90, in=210] (0.75,0.433) to[out=285, in=15] cycle;
		
		% Back disc edges
		\draw[very thick, line cap=round, dashed] (0.5,0) to[out=90, in=210] (0.75,0.433);
		
		% Puff lines
		\draw[ultra thick, lightgray, line cap=round]
			(-0.2,0.65) to[out=30, in=150] (0.2,0.65);
		\draw[ultra thick, lightgray, line cap=round, dotted]
			(-0.2,0.45) to[out=330, in=210] (0.2,0.45);
		
		% Pillow edges
		\draw[thick, darkgray] (1,0) -- (-1,0);
		\draw[thick, darkgray] (1,0) -- (0,1.732);
		\draw[thick, darkgray] (-1,0) -- (0,1.732);
		
		% Edge labels
		\node[above right] at (0.5,0.866) {$a$};
		\node[below] at (0,0) {$b$};
		\node[above left] at (-0.5,0.866) {$c$};
		
		% Front disc edges
		\draw[very thick, line cap=round] (0.75,0.433) to[out=285, in=15] (0.5,0);
		
		\end{tikzpicture}
	\subcaption{Disc parallel to edge $c$.}
	\end{subfigure}
\caption{The three ``types'' of spanning central disc in the triangular pillow.}
\label{fig:triPillowDiscs}
\end{figure}
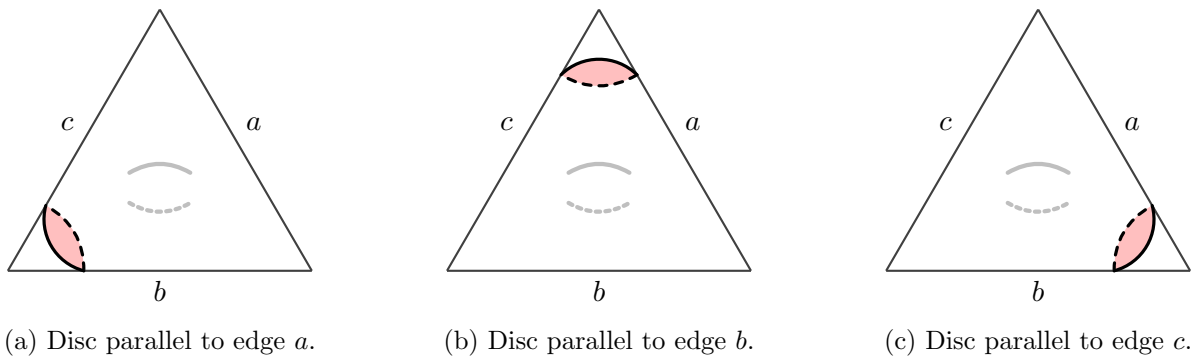

Our idea is to use the spanning central discs in the triangular pillow to mimic the normal arcs in a triangular face. To see how this works, imagine a triangulation $\mathcal{T}$ in which two tetrahedron faces $IJK$ and $LMN$ have been identified, and let $f$ denote the resulting internal face in $\mathcal{T}$. We can ``insert'' the triangular pillow from Construction \ref{cons:triPillow} by replacing the face identification $IJK\longleftrightarrow LMN$ with:
\begin{itemize}
\item $IJK\longleftrightarrow ABC$; and
\item $EFH\longleftrightarrow LMN$.
\end{itemize}
Having done this, we can imagine that the face $f$ has been ``inflated'' to become a pillow. Moreover, we can think of the discs passing through the pillow as thickened versions of the normal arcs in $f$.

\begin{observation}\label{obs:twoDiscs}
Since the triangular pillow has two discs ``parallel'' to each boundary edge, each normal arc in $f$ has two corresponding discs.
\end{observation}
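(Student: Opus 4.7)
The plan is to unpack the observation by tracing through the correspondence between normal arcs in $f$ and spanning central discs in the inserted triangular pillow. First I would recall the classification of normal arcs in a triangular face: each such arc type is characterized by the single edge of the face to which it is parallel (i.e., the edge that the arc does not meet). Since $f$ has three edges, there are exactly three normal arc types in $f$.

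Next, I would set up the correspondence between the edges of $f$ and the boundary edges of the pillow. After the insertion, the face $f$ is split into the two boundary faces $ABC$ and $EFH$ of the pillow (identified with $IJK$ and $LMN$ respectively), and the three boundary edges $a,b,c$ of the pillow arise from identifying the corresponding edge pairs of $ABC$ and $EFH$. Thus each of the three normal arc types in $f$ becomes, after inflation, an arc that is parallel to exactly one of the boundary edges $a,b,c$.

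Then I would invoke the enumeration carried out in the preceding paragraphs: the triangular pillow contains exactly seven spanning central surfaces, of which six are discs (the three boundary vertex links, together with the three discs built from pairs of matching quadrilaterals in $\Delta_0$ and $\Delta_1$). As illustrated in Figure \ref{fig:triPillowDiscs}, these six discs are grouped into three pairs according to which boundary edge they avoid, giving two discs parallel to each of $a$, $b$ and $c$ — one vertex-linking disc and one quadrilateral-based disc.

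Finally, I would combine these two facts: since a disc in the pillow is a thickening of a normal arc in $f$, and this thickening preserves the ``parallel to'' relationship with the relevant boundary edge, the two spanning central discs parallel to a given boundary edge are precisely the two thickenings of the corresponding normal arc type in $f$. There is no real obstacle here beyond stating the correspondence cleanly; the work has already been done in classifying the spanning central discs in the pillow, so this observation is essentially a bookkeeping step that will be used later when inserting pillows to cut down the unwanted spanning central surfaces of the triangular solid torus.
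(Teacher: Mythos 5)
Your proposal is correct and matches the paper's (implicit) reasoning: the observation is stated in the paper without a formal proof, as an immediate consequence of the preceding paragraph's classification of the six spanning central discs into three pairs indexed by the boundary edge they avoid, together with the identification of pillow boundary edges with edges of $f$. Your unpacking of the correspondence between normal arc types (one per edge of $f$), boundary edges $a,b,c$, and the two discs parallel to each is exactly the bookkeeping the paper leaves to the reader.
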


Let $\mathcal{T}'$ denote the triangulation that results from inserting the pillow into $\mathcal{T}$ in this way. A connected spanning central surface $S$ in $\mathcal{T}'$ must pass through both tetrahedra of the inserted pillow. Moreover, since $S$ is connected, $S$ cannot intersect the inserted pillow in the internal vertex-linking sphere. In other words, $S$ must pass through the inserted pillow in one of the six possible discs. The key takeaway is that instead of thinking of $S$ as a connected spanning central surface in $\mathcal{T}'$, we can essentially think of $S$ as a connected spanning central surface in $\mathcal{T}$ that passes through the face $f$. To put it another way, by inserting the pillow along the face $f$, we have eliminated all the connected spanning central surfaces in $\mathcal{T}$ that do \emph{not} pass through $f$. This idea is the key inspiration for the node gadget.

\begin{construction}[Node gadget]\label{cons:nodeGadget}
To build the node gadget, insert a copy of the triangular pillow (Construction \ref{cons:triPillow}) between each of the three pairs of identified faces in the triangular solid torus (Construction \ref{cons:triTorus}).
\end{construction}

Observe that the node gadget has three boundary vertices and nine boundary edges, all of which are inherited from the vertices and edges of the triangular solid torus. The nine boundary edges inherit the classifications (axis, minor, or major) from the edges of the triangular solid torus. Thus, the node gadget also inherits the three boundary annuli from the triangular solid torus, as well as all the labellings that we introduced. (For reference, see Construction \ref{cons:triTorus}, and the paragraphs following that construction.) In addition, note that the node gadget has three internal vertices and nine internal edges, which come from the three inserted pillows.

In effect, we can associate the connected spanning central surfaces in the node gadget with the
connected spanning central surfaces that pass through all three internal faces of the triangular solid torus.
In fact, we claim that every connected spanning central surface in the node gadget corresponds to
one of the four ``types'' of surface shown in Figure \ref{fig:nodeSurfs}.
To see why, recall from Observation \ref{obs:twoDiscs} that there are
two choices of disc in each of the three inserted pillows in the node gadget,
which means that each of the four surface types includes eight different surfaces;
thus, the node gadget must contain \emph{at least} 32 connected spanning central surfaces.
Using \texttt{Regina} \cite{Burton13Regina,Regina} to enumerate all vertex normal surfaces,
we find that the node gadget contains \emph{precisely} 32 connected spanning central surfaces.
Thus, we see that the four surface types shown in Figure \ref{fig:nodeSurfs}
capture all possible connected spanning central surfaces in the node gadget.

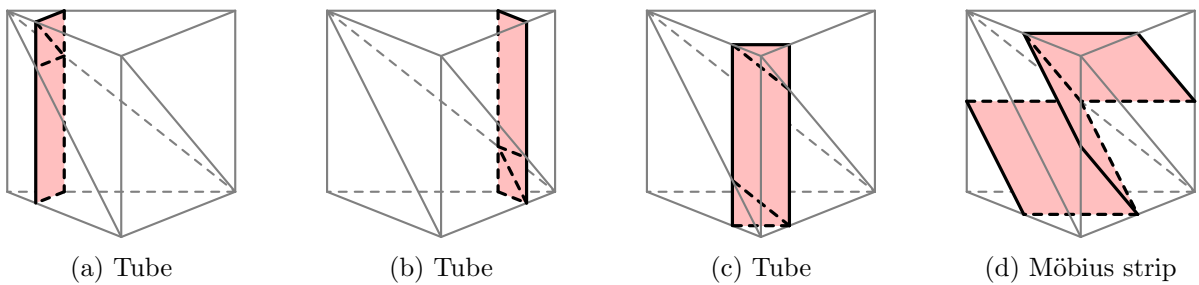
\begin{figure}[htbp]
\centering
	\begin{subfigure}[t]{0.24\textwidth}
	\centering
		\begin{tikzpicture}[scale=0.3]
		
		\draw[thick, dashed, line cap=round, gray] (5,-6) -- (-5,-6);
		\draw[thick, dashed, line cap=round, gray] (-5,2) -- (5,-6);
		
		\draw[thick, line cap=round, gray] (-5,-6) -- (0,-8);
		\draw[thick, line cap=round, gray] (0,-8) -- (5,-6);
		\draw[thick, line cap=round, gray] (-5,-6) -- (-5,2);
		\draw[thick, line cap=round, gray] (5,-6) -- (5,2);
		\draw[thick, line cap=round, gray] (5,2) -- (-5,2);
		
		%%%%%%
		% Surface
		\fill[pink] (-3.75,1.5) -- (-2.5,2) -- (-2.5,-6) -- (-3.75,-6.5) -- cycle;
		%%%
		\draw[very thick, line cap=round] (-3.75,1.5) -- (-2.5,2);
		\draw[very thick, dashed, line cap=round] (-2.5,2) -- (-2.5,-6);
		\draw[very thick, dashed, line cap=round] (-2.5,-6) -- (-3.75,-6.5);
		\draw[very thick, line cap=round] (-3.75,-6.5) -- (-3.75,1.5);
		\draw[very thick, dashed, line cap=round] (-3.75,1.5) -- (-2.5,0);
		\draw[very thick, dashed, line cap=round] (-2.5,0) -- (-3.75,-0.5);
		%%%%%%
		
		\draw[thick, line cap=round, gray] (-5,2) -- (0,-8);
		\draw[thick, line cap=round, gray] (0,-8) -- (0,0);
		\draw[thick, line cap=round, gray] (0,0) -- (5,-6);
		\draw[thick, line cap=round, gray] (-5,2) -- (0,0);
		\draw[thick, line cap=round, gray] (0,0) -- (5,2);
		
		\end{tikzpicture}
	\subcaption{Tube}
	\end{subfigure}
	\begin{subfigure}[t]{0.24\textwidth}
	\centering
		\begin{tikzpicture}[scale=0.3]
		
		\draw[thick, dashed, line cap=round, gray] (5,-6) -- (-5,-6);
		\draw[thick, dashed, line cap=round, gray] (-5,2) -- (5,-6);
		
		\draw[thick, line cap=round, gray] (-5,-6) -- (0,-8);
		\draw[thick, line cap=round, gray] (0,-8) -- (5,-6);
		\draw[thick, line cap=round, gray] (-5,-6) -- (-5,2);
		\draw[thick, line cap=round, gray] (5,-6) -- (5,2);
		\draw[thick, line cap=round, gray] (5,2) -- (-5,2);
		
		%%%%%%
		% Surface
		\fill[pink] (3.75,1.5) -- (2.5,2) -- (2.5,-6) -- (3.75,-6.5) -- cycle;
		%%%
		\draw[very thick, line cap=round] (3.75,1.5) -- (2.5,2);
		\draw[very thick, dashed, line cap=round] (2.5,2) -- (2.5,-6);
		\draw[very thick, dashed, line cap=round] (2.5,-6) -- (3.75,-6.5);
		\draw[very thick, line cap=round] (3.75,-6.5) -- (3.75,1.5);
		\draw[very thick, dashed, line cap=round] (3.75,-6.5) -- (2.5,-4);
		\draw[very thick, dashed, line cap=round] (2.5,-4) -- (3.75,-4.5);
		%%%%%%
		
		\draw[thick, line cap=round, gray] (-5,2) -- (0,-8);
		\draw[thick, line cap=round, gray] (0,-8) -- (0,0);
		\draw[thick, line cap=round, gray] (0,0) -- (5,-6);
		\draw[thick, line cap=round, gray] (-5,2) -- (0,0);
		\draw[thick, line cap=round, gray] (0,0) -- (5,2);
		
		\end{tikzpicture}
	\subcaption{Tube}
	\end{subfigure}
	\begin{subfigure}[t]{0.24\textwidth}
	\centering
		\begin{tikzpicture}[scale=0.3]
		
		\draw[thick, dashed, line cap=round, gray] (5,-6) -- (-5,-6);
		\draw[thick, dashed, line cap=round, gray] (-5,2) -- (5,-6);
		
		\draw[thick, line cap=round, gray] (-5,-6) -- (0,-8);
		\draw[thick, line cap=round, gray] (0,-8) -- (5,-6);
		\draw[thick, line cap=round, gray] (-5,-6) -- (-5,2);
		\draw[thick, line cap=round, gray] (5,-6) -- (5,2);
		\draw[thick, line cap=round, gray] (5,2) -- (-5,2);
		
		%%%%%%
		% Surface
		\fill[pink] (-1.25,0.5) -- (1.25,0.5) -- (1.25,-7.5) -- (-1.25,-7.5) -- cycle;
		%%%
		\draw[very thick, line cap=round] (-1.25,0.5) -- (1.25,0.5);
		\draw[very thick, line cap=round] (1.25,0.5) -- (1.25,-7.5);
		\draw[very thick, dashed, line cap=round] (1.25,-7.5) -- (-1.25,-7.5);
		\draw[very thick, line cap=round] (-1.25,-7.5) -- (-1.25,0.5);
		\draw[very thick, dashed, line cap=round] (-1.25,0.5) -- (1.25,-1.5);
		\draw[very thick, dashed, line cap=round] (1.25,-7.5) -- (-1.25,-5.5);
		%%%%%%
		
		\draw[thick, line cap=round, gray] (-5,2) -- (0,-8);
		\draw[thick, line cap=round, gray] (0,-8) -- (0,0);
		\draw[thick, line cap=round, gray] (0,0) -- (5,-6);
		\draw[thick, line cap=round, gray] (-5,2) -- (0,0);
		\draw[thick, line cap=round, gray] (0,0) -- (5,2);
		
		\end{tikzpicture}
	\subcaption{Tube}
	\end{subfigure}
	\begin{subfigure}[t]{0.24\textwidth}
	\centering
		\begin{tikzpicture}[scale=0.3]
		
		\draw[thick, dashed, line cap=round, gray] (5,-6) -- (-5,-6);
		\draw[thick, dashed, line cap=round, gray] (-5,2) -- (5,-6);
		
		\draw[thick, line cap=round, gray] (-5,-6) -- (0,-8);
		\draw[thick, line cap=round, gray] (0,-8) -- (5,-6);
		\draw[thick, line cap=round, gray] (-5,-6) -- (-5,2);
		\draw[thick, line cap=round, gray] (5,-6) -- (5,2);
		\draw[thick, line cap=round, gray] (5,2) -- (-5,2);
		
		%%%%%%
		% Surface
		\fill[pink] (2.5,-7) -- (-2.5,-7) -- (-5,-2) -- (0,-2) -- cycle;
		\draw[very thick, dashed, line cap=round] (2.5,-7) -- (-2.5,-7);
		\draw[very thick, line cap=round] (-2.5,-7) -- (-5,-2);
		\draw[very thick, dashed, line cap=round] (-5,-2) -- (0,-2);
		%%%
		\fill[pink] (0,-2) -- (2.5,-7) -- (0,-4) -- (-2.5,1) -- cycle;
		\draw[very thick, dashed, line cap=round] (0,-2) -- (2.5,-7);
		\draw[very thick, line cap=round] (2.5,-7) -- (0,-4);
		\draw[very thick, line cap=round] (0,-4) -- (-2.5,1);
		%%%
		\fill[pink] (0,-2) -- (-2.5,1) -- (2.5,1) -- (5,-2) -- cycle;
		\draw[very thick, dashed, line cap=round] (0,-2) -- (-2.5,1);
		\draw[very thick, line cap=round] (-2.5,1) -- (2.5,1);
		\draw[very thick, line cap=round] (2.5,1) -- (5,-2);
		\draw[very thick, dashed, line cap=round] (5,-2) -- (0,-2);
		%%%%%%
		
		\draw[thick, line cap=round, gray] (-5,2) -- (0,-8);
		\draw[thick, line cap=round, gray] (0,-8) -- (0,0);
		\draw[thick, line cap=round, gray] (0,0) -- (5,-6);
		\draw[thick, line cap=round, gray] (-5,2) -- (0,0);
		\draw[thick, line cap=round, gray] (0,0) -- (5,2);
		
		\end{tikzpicture}
	\subcaption{M\"{o}bius strip}
	\end{subfigure}
\caption{There are four ``types'' of connected spanning central surface in the node gadget: three types of tube, and one type of M\"{o}bius strip.}
\label{fig:nodeSurfs}
\end{figure}

As suggested earlier, our node gadget plays a prominent role in our proof of the following theorem.

\newcommand{\thmSpanningCentralNPhard}{\scap{Connected spanning central surface} (Problem \ref{prbm:spanningCentral}) is $\mathrm{NP}$-complete, and remains $\mathrm{NP}$-complete even if we restrict the input to be an orientable $3$-manifold triangulation.}
\begin{theorem}\label{thm:spanningCentralNPhard}
\thmSpanningCentralNPhard
\end{theorem}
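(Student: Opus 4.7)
The plan is to establish NP-membership and NP-hardness separately, with the orientability refinement handled inside the same construction. For membership in NP, I observe that a spanning central surface in an $n$-tetrahedron triangulation has standard vector representation in $\{0,1\}^{7n}$, since exactly one elementary disc lies in each tetrahedron; such a vector is a certificate of size $O(n)$. Verification of a proposed certificate $\mathbf{x}$ amounts to checking admissibility (non-negativity, matching equations, quadrilateral constraints), verifying that exactly one coordinate per tetrahedron is non-zero, and confirming connectedness via a breadth-first search on the graph whose vertices are the $n$ chosen elementary discs and whose edges correspond to pairs of discs sharing a normal arc across a triangular face. All of this runs in polynomial time.

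For NP-hardness I will complete the reduction from \scap{Hamiltonian cycle} in 3-regular graphs (Problem~\ref{prbm:HamCycle}) already sketched via the node gadget. Given $G=(V,E)$ that is $3$-regular, I construct $\mathcal{T}_G$ by taking, for each $v\in V$, a fresh copy $N_v$ of the node gadget of Construction~\ref{cons:nodeGadget}, together with a chosen bijection between the three arcs of $G$ incident with $v$ and the three boundary annuli of $N_v$. For each arc $\{u,v\}\in E$, I identify the corresponding annulus of $N_u$ with that of $N_v$ via an orientation-reversing combinatorial gluing that respects the axis/minor/major edge labels. A short lemma, verifiable by local inspection of the node gadget at its three boundary vertices, will show that $\mathcal{T}_G$ is a closed orientable $3$-manifold triangulation (no invalid edges and every vertex link a sphere); the polynomial bound on the construction is immediate.

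The easy direction of the correspondence takes a Hamiltonian cycle $C$ in $G$ to a connected spanning central surface in $\mathcal{T}_G$: at each node $v$, insert the tube in $N_v$ that uses exactly the two boundary annuli assigned to the two arcs of $C$ incident with $v$, avoiding the third annulus. The boundary curves of consecutive tubes match across each arc gluing (a fact I will verify using the explicit tube descriptions of Figure~\ref{fig:triTorusTubes} together with the way annuli are identified), so their union is a properly embedded normal surface; it is spanning central by construction, and connected because $C$ is a single cycle passing through every node.

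The reverse direction is where I expect the main obstacle. Given a connected spanning central surface $S$ in $\mathcal{T}_G$, the restriction of $S$ to each $N_v$ must be one of the $32$ connected spanning central surfaces of the node gadget, falling into the three tube types and the M\"obius strip type enumerated in Figure~\ref{fig:nodeSurfs}. The critical step is to rule out the M\"obius strip at every node gadget; I plan to do this by showing that the boundary pattern of the M\"obius strip on each annulus of $N_v$ differs combinatorially from the boundary pattern of any of the three tubes, so that the orientation-reversing annulus gluings prevent a M\"obius strip on one side of an arc gluing from matching up with any possible choice of surface on the other side. Once M\"obius strips are excluded, each $N_v$ carries a tube using exactly two of its three boundary annuli; the ``used'' annuli at each node correspond to two of the three arcs of $G$ incident with that node, and the induced $2$-regular subgraph $H\subseteq G$ must consist of a single cycle because connectedness of $S$ propagates across each arc gluing. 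Hence $H$ is a Hamiltonian cycle, completing the reduction.
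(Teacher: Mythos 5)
Your overall structure matches the paper's: certificate-based $\mathrm{NP}$-membership, reduction from \scap{Hamiltonian cycle} on $3$-regular graphs via the node gadget, tubes realise cycle edges, and you correctly identify that the essential difficulty in the reverse direction is excluding the M\"obius-strip type.

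However, there is a genuine gap precisely at the step you flag as the ``main obstacle''. You describe the gluing between two node gadgets as an ``orientation-reversing combinatorial gluing that \emph{respects} the axis/minor/major edge labels''; the most natural reading (axis\,$\leftrightarrow$\,axis, minor\,$\leftrightarrow$\,minor, major\,$\leftrightarrow$\,major) is exactly what one must \emph{not} do. The paper's gluing sends each major edge of one annulus to the \emph{minor} edge of the other, and this twist is load-bearing: the M\"obius strip crosses every axis edge and every major edge but no minor edge (Figure~\ref{fig:nodeMobius}). With a label-preserving gluing, a M\"obius strip in $\mathcal{N}_k$ would present precisely the boundary pattern needed to meet a M\"obius strip in $\mathcal{N}_\ell$ on the other side, so the union of M\"obius strips across \emph{all} node gadgets would form a connected spanning central surface in $\mathcal{T}_G$ regardless of whether $G$ has a Hamiltonian cycle --- the reduction would be unsound. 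Moreover, even granting the right gluing, the argument you sketch (``the boundary pattern of the M\"obius strip\ldots differs combinatorially from the boundary pattern of any of the three tubes'') only rules out a M\"obius-strip adjacent to a tube; you never address the M\"obius-vs-M\"obius case. The paper handles both at once: since the tubes never meet axis edges while the M\"obius strip does, M\"obius-vs-tube fails; and since the M\"obius strip crosses the major edge of $A_{k,i}$, the major-to-minor identification forces the surface on the other side to cross a \emph{minor} edge, which the M\"obius strip never does, so M\"obius-vs-M\"obius also fails. You should make the major/minor twist an explicit part of the construction and add the second half of the exclusion argument; without these, the reverse direction of the reduction does not go through.
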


\begin{proof}
We first show that Problem \ref{prbm:spanningCentral} is in $\mathrm{NP}$. Recall that a connected spanning central surface in an $n$-tetrahedron triangulation $\mathcal{T}$ consists of a single choice of elementary disc in each tetrahedron. Such a choice of elementary discs therefore forms a linear-sized certificate. We claim that such a certificate can be verified in polynomial time. Since $\mathcal{T}$ yields at most $6n$ matching equations, it is straightforward to check that the certificate defines a valid normal surface. This is already enough to verify that we have a spanning central surface. To check connectedness, fix one of the $n$ elementary discs $d$, and use a breadth-first search to visit all the elementary discs that are connected to $d$. The normal surface is connected if and only if this search manages to visit all $n$ elementary discs. Clearly, this entire verification process can be done in polynomial time. Thus, Problem \ref{prbm:spanningCentral} is in $\mathrm{NP}$.

To show that Problem \ref{prbm:spanningCentral} is $\mathrm{NP}$-complete, we give a polynomial reduction from \scap{Hamiltonian cycle} (Problem \ref{prbm:HamCycle}). Let $G$ be an arbitrary $3$-regular graph. Our goal is to build a corresponding triangulation $\mathcal{T}_G$, such that $\mathcal{T}_G$ contains a connected spanning central surface if and only if $G$ contains a Hamiltonian cycle. In short, we do this by assigning to each node of $G$ a copy of the node gadget, and then using the arcs of $G$ to determine how we glue together all our copies of the node gadget.

Letting $n$ denote the number of nodes in $G$, we label the nodes in $G$ by $u_0,\ldots,u_{n-1}$. We assign to each node $u_k$ a corresponding copy $\mathcal{N}_k$ of the node gadget. To help us describe how we glue together these copies of the node gadget, we introduce the following notation, for $k\in\{0,\ldots,n-1\}$ and $i\in\{0,1,2\}$.
\begin{itemize}
\item Let $e_{k,i}$ denote axis edge $i$ in $\mathcal{N}_k$.
\item Let $v_{k,i}$ denote boundary vertex $i$ in $\mathcal{N}_k$.
\item Let $A_{k,i}$ denote boundary annulus $i$ in $\mathcal{N}_k$. This annulus is made up of two triangles; let $T_{k,i}^+$ denote the triangle in $A_{k,i}$ that has $e_{k,i+1}$ as one of its edges, and let $T_{k,i}^-$ denote the triangle in $A_{k,i}$ that has $e_{k,i-1}$ as one of its edges.
\end{itemize}
As an additional aid for describing our gluing scheme, we assign a ``direction'' to each axis edge $e_{k,i}$. To do this, imagine an ant walking on the outside of $\mathcal{N}_k$. We choose the direction so that if the ant walks along $e_{k,i}$ in the assigned direction, then it will have triangle $T_{k,i+1}^-$ on its left and triangle $T_{k,i-1}^+$ on its right. This is illustrated in Figure \ref{fig:axisDirections}, where axis edges are drawn using red lines, and the chosen directions are indicated using triple arrowheads.

\begin{figure}[htbp]
\centering
	\begin{subfigure}[t]{0.5\textwidth}
	\centering
		\begin{tikzpicture}[scale=0.8]
		
		% Triangle fill
		\fill[cyan] (-4,2) -- (0,4) -- (0,0) -- cycle;
		\fill[lime] (4,2) -- (0,4) -- (0,0) -- cycle;
		
		% Edges
		\draw[ultra thick, red, midarrow={0.65}{\dirarrows}] (0,0) -- (0,4);
		\draw[very thick] (-4,2) -- (0,0);
		\draw[very thick] (-4,2) -- (0,4);
		\draw[very thick] (4,2) -- (0,0);
		\draw[very thick] (4,2) -- (0,4);
		
		% Vertices
		\fill[black] (0,0) circle (0.15);
		\fill[black] (0,4) circle (0.15);
		\fill[black] (-4,2) circle (0.15);
		\fill[black] (4,2) circle (0.15);
		
		% Edge labels
		\node[above left, anchor=south, rotate=26.35, inner sep=0.5pt] at (-2,3) {major edge};
		\node[below left, anchor=north, rotate=-26.35, inner sep=0.5pt] at (-2,1) {minor edge};
		\node[above right, anchor=south, rotate=-26.35, inner sep=0.5pt] at (2,3) {minor edge};
		\node[below right, anchor=north, rotate=26.35, inner sep=0.5pt] at (2,1) {major edge};
		
		% Vertex and triangle labels
		\node[above] at (0,4) {$v_{k,i}$};
		\node[below] at (0,0) {$v_{k,i}$};
		\node[text=blue] at (-1.6,2) {\Large $T_{k,i+1}^-$};
		\node[text=olive] at (1.6,2) {\Large $T_{k,i-1}^+$};
		
		\end{tikzpicture}
	\subcaption{An ant walking along $e_{k,i}$ in the assigned direction will have $T_{k,i+1}^-$ to its left and $T_{k,i-1}^+$ to its right.}
	\end{subfigure}
	\hspace{12pt}
	\begin{subfigure}[t]{0.4\textwidth}
	\centering
		\begin{tikzpicture}[scale=0.4]
		
		\draw[very thick, dashed, line cap=round, midarrow={0.55}{\whitearrow\whitearrow}] (5,-6) -- (-5,-6);
		\draw[very thick, dashed, line cap=round] (-5,2) -- (5,-6);
		
		\draw[very thick, line cap=round, midarrow={0.55}{\whitearrow}] (-5,-6) -- (0,-8);
		\draw[very thick, line cap=round, midarrow={0.55}{\blackarrow}] (0,-8) -- (5,-6);
		
		\draw[very thick, line cap=round, red, midarrow={0.65}{\dirarrows}] (-5,-6) -- (-5,2);
		\draw[very thick, line cap=round] (-5,2) -- (0,-8);
		\draw[very thick, line cap=round, red, midarrow={0.65}{\dirarrows}] (0,-8) -- (0,0);
		\draw[very thick, line cap=round] (0,0) -- (5,-6);
		\draw[very thick, line cap=round, red, midarrow={0.65}{\dirarrows}] (5,-6) -- (5,2);
		
		\draw[very thick, line cap=round, midarrow={0.55}{\whitearrow}] (-5,2) -- (0,0);
		\draw[very thick, line cap=round, midarrow={0.55}{\whitearrow\whitearrow}] (5,2) -- (-5,2);
		\draw[very thick, line cap=round, midarrow={0.55}{\blackarrow}] (0,0) -- (5,2);
		
		\end{tikzpicture}
	\subcaption{The assigned directions for the three axis edges of the node gadget.}
	\end{subfigure}
\caption{We assign a ``direction'' to each axis edge of the node gadget.}
\label{fig:axisDirections}
\end{figure}
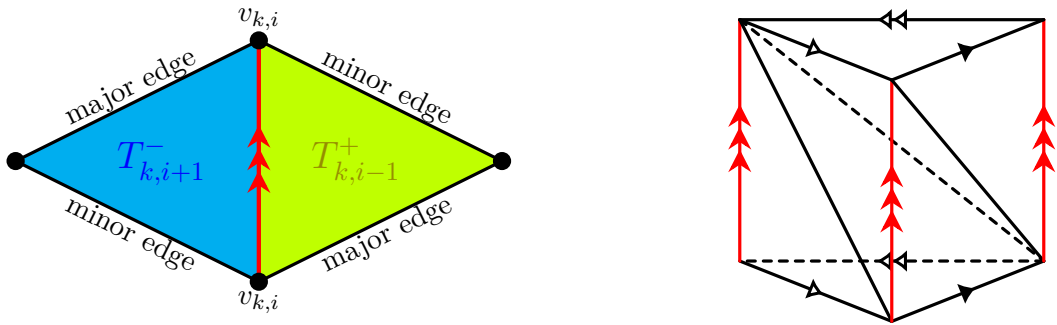

With this in mind, we build $\mathcal{T}_G$ by gluing together pairs of annuli in the following way. For each arc $\{u_k,u_\ell\}$ in $G$, choose an unglued annulus $A_{k,i}$ from $\mathcal{N}_k$ and an unglued annulus $A_{\ell,j}$ from $\mathcal{N}_\ell$.  We glue these two annuli together by:
\begin{itemize}
\item identifying the triangles $T_{k,i}^+$ and $T_{\ell,j}^-$ in such a way that the axis edges $e_{k,i+1}$ and $e_{\ell,j-1}$ get identified with matching directions;
\item identifying the triangles $T_{k,i}^-$ and $T_{\ell,j}^+$ in such a way that the axis edges $e_{k,i-1}$ and $e_{\ell,j+1}$ get identified with matching directions.
\end{itemize}
Observe that as a result of this gluing, the major edge in $A_{k,i}$ is identified with the \emph{minor} edge in $A_{\ell,j}$ to form a new internal edge; similarly, the minor edge in $A_{k,i}$ is identified with the \emph{major} edge in $A_{\ell,j}$ to form a new internal edge. Since $G$ is $3$-regular, it has a total of $\sfrac{3n}{2}$ arcs, so our construction of $\mathcal{T}_G$ can be done in $O(n)$ time.

We need to check that $\mathcal{T}_G$ is actually a ``valid'' triangulation, in the sense that it does not contain any invalid edges. It is clear that none of the internal edges of (the copies of) the node gadget are invalid. It is also clear that the edges of $\mathcal{T}_G$ formed by identifying major and minor edges will never be invalid. Finally, since we always identify axis edges with matching directions, it is impossible for any axis edge to end up being identified with itself in reverse. Thus, $\mathcal{T}_G$ is indeed a valid triangulation.

To show that this construction gives a polynomial reduction from Problem \ref{prbm:HamCycle} to Problem \ref{prbm:spanningCentral}, we prove that there exists a Hamiltonian cycle in $G$ if and only if there exists a connected spanning central surface in $\mathcal{T}_G$.
\begin{itemize}
\item Suppose $G$ contains a Hamiltonian cycle $H$ with node sequence
\[
\pparen{u_{k_0},u_{k_1},\ldots,u_{k_{n-1}},u_{k_0}}.
\]
Thus, for each $m\in\{0,\ldots,n-1\}$, there must exist indices $i_m,j_m\in\{0,1,2\}$ such that the node gadgets $\mathcal{N}_{k_m}$ and $\mathcal{N}_{k_{m+1}}$ (where $m+1$ is calculated modulo $n$) are glued together along the annuli $A_{k_m,i_m}$ and $A_{k_{m+1},j_m}$. Let $T_m$ denote the (type of) spanning central tube in $\mathcal{N}_{k_m}$ that meets the annuli $A_{k_m,i_m}$ and $A_{k_m,j_{m-1}}$.

The tubes $T_0,\ldots,T_{n-1}$ together meet each tetrahedron of $\mathcal{T}_G$ in precisely one elementary disc. So, if we can show that these tubes join together to form a connected normal surface in $\mathcal{T}_G$, then we will in fact have shown that these tubes form a connected spanning central surface. With this in mind, fix any $m\in\{0,\ldots,n-1\}$, and observe that $T_m$ meets the annulus $A_{k_m,i_m}$ in a curve that is ``parallel'' to the axis edges, in the following sense: it never meets the axis edges, and it meets the major and minor edges once each. Similarly, $T_{m+1}$ meets $A_{k_{m+1},j_m}$ in a curve that is ``parallel'' to the axis edges. Since the gluing of $A_{k_m,i_m}$ and $A_{k_{m+1},j_m}$ always identifies axis edges with other axis edges, we see that $T_m$ and $T_{m+1}$ are able to ``match up'' to form a piece of a normal surface. Thus, each tube in the (cyclic) sequence $T_0,\ldots,T_{n-1}$ matches up with the next, which results in a connected spanning central surface in $\mathcal{T}_G$.

\item Conversely, suppose $\mathcal{T}_G$ contains a connected spanning central surface $S$. Such a surface $S$ must pass through each copy of the node gadget in either:
	\begin{itemize}
	\item one of the three possible types of spanning central tube; or
	\item the one possible type of spanning central M\"{o}bius strip.
	\end{itemize}
We claim that $S$ actually cannot meet any of the node gadgets in the one type of M\"{o}bius strip. To see why, first observe that the M\"{o}bius strip meets the axis edges and major edges of the node gadget once each, but never meets any of the minor edges (see Figure \ref{fig:nodeMobius}). With this in mind, suppose for the sake of contradiction that $S$ intersects some $\mathcal{N}_k$, $k\in\{0,\ldots,n-1\}$, in the M\"{o}bius strip. Recall that the annulus $A_{k,0}$ must be glued to some other annulus $A_{\ell,i}$ in such a way that:
	\begin{itemize}
	\item axis edges are identified with other axis edges; and
	\item major edges are identified with \emph{minor} edges.
	\end{itemize}
Since the M\"{o}bius strip in $\mathcal{N}_k$ meets the axis edges and the major edges of $A_{k,0}$, we see that $S$ must intersect $\mathcal{N}_\ell$ in a surface that meets the axis edges and the minor edge of $A_{\ell,i}$. This is impossible, since the three types of tube in $\mathcal{N}_\ell$ never meet any axis edges, and the one type of M\"{o}bius strip in $\mathcal{N}_\ell$ never meets any minor edges.

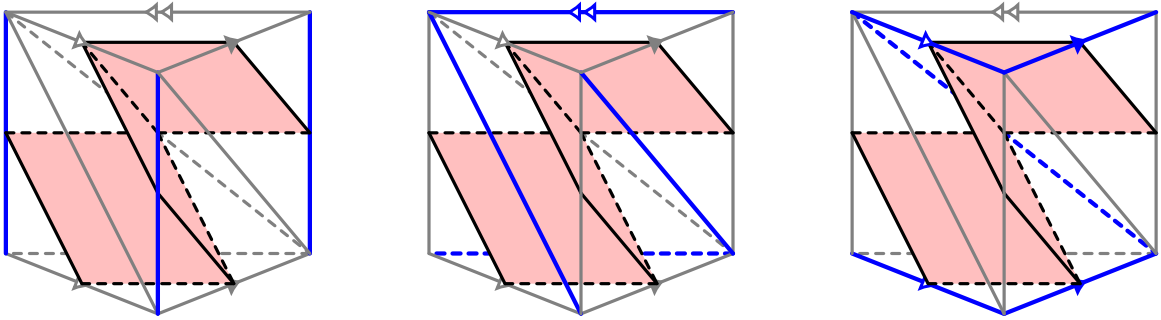
\begin{figure}[htbp]
\centering
	\begin{subfigure}[t]{0.3\textwidth}
	\centering
		\begin{tikzpicture}[scale=0.4]
		
		\draw[very thick, dashed, line cap=round, midarrow={0.55}{\whitearrow\whitearrow}, gray] (5,-6) -- (-5,-6);
		\draw[very thick, dashed, line cap=round, gray] (-5,2) -- (5,-6);
		
		\draw[very thick, line cap=round, midarrow={0.55}{\whitearrow}, gray] (-5,-6) -- (0,-8);
		\draw[very thick, line cap=round, midarrow={0.55}{\blackarrow}, gray] (0,-8) -- (5,-6);
		\draw[ultra thick, line cap=round, blue] (-5,-6) -- (-5,2);
		\draw[ultra thick, line cap=round, blue] (5,-6) -- (5,2);
		\draw[very thick, line cap=round, midarrow={0.55}{\whitearrow\whitearrow}, gray] (5,2) -- (-5,2);
		
		%%%%%%
		% Surface
		\fill[pink] (2.5,-7) -- (-2.5,-7) -- (-5,-2) -- (0,-2) -- cycle;
		\draw[very thick, dashed, line cap=round] (2.5,-7) -- (-2.5,-7);
		\draw[very thick, line cap=round] (-2.5,-7) -- (-5,-2);
		\draw[very thick, dashed, line cap=round] (-5,-2) -- (0,-2);
		%%%
		\fill[pink] (0,-2) -- (2.5,-7) -- (0,-4) -- (-2.5,1) -- cycle;
		\draw[very thick, dashed, line cap=round] (0,-2) -- (2.5,-7);
		\draw[very thick, line cap=round] (2.5,-7) -- (0,-4);
		\draw[very thick, line cap=round] (0,-4) -- (-2.5,1);
		%%%
		\fill[pink] (0,-2) -- (-2.5,1) -- (2.5,1) -- (5,-2) -- cycle;
		\draw[very thick, dashed, line cap=round] (0,-2) -- (-2.5,1);
		\draw[very thick, line cap=round] (-2.5,1) -- (2.5,1);
		\draw[very thick, line cap=round] (2.5,1) -- (5,-2);
		\draw[very thick, dashed, line cap=round] (5,-2) -- (0,-2);
		%%%%%%
		
		\draw[very thick, line cap=round, gray] (-5,2) -- (0,-8);
		\draw[ultra thick, line cap=round, blue] (0,-8) -- (0,0);
		\draw[very thick, line cap=round, gray] (0,0) -- (5,-6);
		\draw[very thick, line cap=round, midarrow={0.55}{\whitearrow}, gray] (-5,2) -- (0,0);
		\draw[very thick, line cap=round, midarrow={0.55}{\blackarrow}, gray] (0,0) -- (5,2);
		
		\end{tikzpicture}
	\subcaption{The M\"{o}bius strip meets every axis edge.}
	\end{subfigure}
	\hspace{6pt}
	\begin{subfigure}[t]{0.3\textwidth}
	\centering
		\begin{tikzpicture}[scale=0.4]
		
		\draw[ultra thick, dashed, line cap=round, blue, midarrow={0.55}{\whitearrow\whitearrow}] (5,-6) -- (-5,-6);
		\draw[very thick, dashed, line cap=round, gray] (-5,2) -- (5,-6);
		
		\draw[very thick, line cap=round, midarrow={0.55}{\whitearrow}, gray] (-5,-6) -- (0,-8);
		\draw[very thick, line cap=round, midarrow={0.55}{\blackarrow}, gray] (0,-8) -- (5,-6);
		\draw[very thick, line cap=round, gray] (-5,-6) -- (-5,2);
		\draw[very thick, line cap=round, gray] (5,-6) -- (5,2);
		\draw[ultra thick, line cap=round, blue, midarrow={0.55}{\whitearrow\whitearrow}] (5,2) -- (-5,2);
		
		%%%%%%
		% Surface
		\fill[pink] (2.5,-7) -- (-2.5,-7) -- (-5,-2) -- (0,-2) -- cycle;
		\draw[very thick, dashed, line cap=round] (2.5,-7) -- (-2.5,-7);
		\draw[very thick, line cap=round] (-2.5,-7) -- (-5,-2);
		\draw[very thick, dashed, line cap=round] (-5,-2) -- (0,-2);
		%%%
		\fill[pink] (0,-2) -- (2.5,-7) -- (0,-4) -- (-2.5,1) -- cycle;
		\draw[very thick, dashed, line cap=round] (0,-2) -- (2.5,-7);
		\draw[very thick, line cap=round] (2.5,-7) -- (0,-4);
		\draw[very thick, line cap=round] (0,-4) -- (-2.5,1);
		%%%
		\fill[pink] (0,-2) -- (-2.5,1) -- (2.5,1) -- (5,-2) -- cycle;
		\draw[very thick, dashed, line cap=round] (0,-2) -- (-2.5,1);
		\draw[very thick, line cap=round] (-2.5,1) -- (2.5,1);
		\draw[very thick, line cap=round] (2.5,1) -- (5,-2);
		\draw[very thick, dashed, line cap=round] (5,-2) -- (0,-2);
		%%%%%%
		
		\draw[ultra thick, line cap=round, blue] (-5,2) -- (0,-8);
		\draw[very thick, line cap=round, gray] (0,-8) -- (0,0);
		\draw[ultra thick, line cap=round, blue] (0,0) -- (5,-6);
		\draw[very thick, line cap=round, midarrow={0.55}{\whitearrow}, gray] (-5,2) -- (0,0);
		\draw[very thick, line cap=round, midarrow={0.55}{\blackarrow}, gray] (0,0) -- (5,2);
		
		\end{tikzpicture}
	\subcaption{The M\"{o}bius strip never meets any of the minor edges.}
	\end{subfigure}
	\hspace{6pt}
	\begin{subfigure}[t]{0.3\textwidth}
	\centering
		\begin{tikzpicture}[scale=0.4]
		
		\draw[very thick, dashed, line cap=round, midarrow={0.55}{\whitearrow\whitearrow}, gray] (5,-6) -- (-5,-6);
		\draw[ultra thick, dashed, line cap=round, blue] (-5,2) -- (5,-6);
		
		\draw[ultra thick, line cap=round, blue, midarrow={0.55}{\whitearrow}] (-5,-6) -- (0,-8);
		\draw[ultra thick, line cap=round, blue, midarrow={0.55}{\blackarrow}] (0,-8) -- (5,-6);
		\draw[very thick, line cap=round, gray] (-5,-6) -- (-5,2);
		\draw[very thick, line cap=round, gray] (5,-6) -- (5,2);
		\draw[very thick, line cap=round, midarrow={0.55}{\whitearrow\whitearrow}, gray] (5,2) -- (-5,2);
		
		%%%%%%
		% Surface
		\fill[pink] (2.5,-7) -- (-2.5,-7) -- (-5,-2) -- (0,-2) -- cycle;
		\draw[very thick, dashed, line cap=round] (2.5,-7) -- (-2.5,-7);
		\draw[very thick, line cap=round] (-2.5,-7) -- (-5,-2);
		\draw[very thick, dashed, line cap=round] (-5,-2) -- (0,-2);
		%%%
		\fill[pink] (0,-2) -- (2.5,-7) -- (0,-4) -- (-2.5,1) -- cycle;
		\draw[very thick, dashed, line cap=round] (0,-2) -- (2.5,-7);
		\draw[very thick, line cap=round] (2.5,-7) -- (0,-4);
		\draw[very thick, line cap=round] (0,-4) -- (-2.5,1);
		%%%
		\fill[pink] (0,-2) -- (-2.5,1) -- (2.5,1) -- (5,-2) -- cycle;
		\draw[very thick, dashed, line cap=round] (0,-2) -- (-2.5,1);
		\draw[very thick, line cap=round] (-2.5,1) -- (2.5,1);
		\draw[very thick, line cap=round] (2.5,1) -- (5,-2);
		\draw[very thick, dashed, line cap=round] (5,-2) -- (0,-2);
		%%%%%%
		
		\draw[very thick, line cap=round, gray] (-5,2) -- (0,-8);
		\draw[very thick, line cap=round, gray] (0,-8) -- (0,0);
		\draw[very thick, line cap=round, gray] (0,0) -- (5,-6);
		\draw[ultra thick, line cap=round, blue, midarrow={0.55}{\whitearrow}] (-5,2) -- (0,0);
		\draw[ultra thick, line cap=round, blue, midarrow={0.55}{\blackarrow}] (0,0) -- (5,2);
		
		\end{tikzpicture}
	\subcaption{The M\"{o}bius strip meets every major edge.}
	\end{subfigure}
\caption{The unwanted M\"{o}bius strip meets every axis edge and every major edge, but never meets any minor edges.}
\label{fig:nodeMobius}
\end{figure}

The upshot is that $S$ must be built entirely out of tubes. So, for each $k\in\{0,\ldots,n-1\}$, let $T_k$ denote the tube in which $S$ meets $\mathcal{N}_k$. Because each tube $T_k$ meets two of the three boundary annuli of $\mathcal{N}_k$, we can think of $T_k$ as a path through $\mathcal{N}_k$ that joins these two annuli. Since the tubes $T_0,\ldots,T_{n-1}$ must all ``match up'' to form the connected spanning central surface $S$, we can therefore think of $S$ as a path that visits each node gadget in $\mathcal{T}_G$ exactly once, before returning to the beginning. This corresponds to a Hamiltonian cycle in $G$.
\end{itemize}
Altogether, we have given a polynomial reduction from \scap{Hamiltonian cycle} (Problem \ref{prbm:HamCycle}) to \scap{connected spanning central surface} (Problem \ref{prbm:spanningCentral}), which shows that Problem \ref{prbm:spanningCentral} is $\mathrm{NP}$-complete.

We finish this proof by showing that $\mathcal{T}_G$ always represents an orientable $3$-manifold, which implies that Problem \ref{prbm:spanningCentral} remains $\mathrm{NP}$-complete even if we restrict the input to be an orientable $3$-manifold triangulation. Orientability of $\mathcal{T}_G$ follows immediately from the following observation: in our construction of $\mathcal{T}_G$, we always identified triangular faces with opposite orientations. Thus it suffices to show that $\mathcal{T}_G$ is a $3$-manifold triangulation, and we can do this by showing that the link of every vertex in $\mathcal{T}_G$ is a sphere.

Since every vertex in $\mathcal{T}_G$ is formed by identifying some number of boundary vertices from the node gadgets $\mathcal{N}_0,\ldots,\mathcal{N}_{n-1}$, we start by taking a closer look at the surfaces that form the links of the boundary vertices of the node gadget. These surfaces, all of which are discs, are illustrated in Figure \ref{fig:nodeVertexLinks}. For each $k\in\{0,\ldots,n-1\}$ and $i\in\{0,1,2\}$, let $D_{k,i}$ denote the disc that forms the link of vertex $v_{k,i}$. As shown in Figure \ref{fig:discBdryPieces}, the disc $D_{k,i}$ meets two annuli in $\mathcal{N}_k$, thus dividing its boundary into two pieces: a curve $c_{k,i}^+=\partial D_{k,i}\cap A_{k,i+1}$, and a curve $c_{k,i}^-=\partial D_{k,i}\cap A_{k,i-1}$. We assign directions to these curves so that if an ant on the outside of the node gadget walks in the assigned direction, then the vertex $v_{k,i}$ always remains on the ant's right-hand side. These assigned directions are indicated using triple arrowheads in Figure \ref{fig:discBdryPieces}.

\begin{figure}[htbp]
\centering
	\begin{subfigure}[t]{0.32\textwidth}
	\centering
		\begin{tikzpicture}[scale=0.4]
		
		\draw[thick, dashed, line cap=round, gray] (5,-6) -- (-5,-6);
		\draw[thick, dashed, line cap=round, gray] (-5,2) -- (5,-6);
		
		\draw[thick, line cap=round, gray] (-5,-6) -- (0,-8);
		\draw[thick, line cap=round, gray] (0,-8) -- (5,-6);
		\draw[thick, line cap=round, gray] (-5,-6) -- (-5,2);
		\draw[thick, line cap=round, gray] (5,-6) -- (5,2);
		\draw[thick, line cap=round, gray] (5,2) -- (-5,2);
		
		%%%%%%
		% Surface
		\fill[pink] (-3,-6.8) -- (-5,-2.8) -- (-1,-6) -- cycle;
		\fill[pink] (-3,1.2) -- (-5,-1.2) -- (-1,2) -- cycle;
		%%%
		\draw[very thick, line cap=round] (-3,-6.8) -- (-5,-2.8);
		\draw[very thick, dashed, line cap=round] (-1,-6) -- (-5,-2.8) ;
		\draw[very thick, dashed, line cap=round] (-1,-6) -- (-3,-6.8);
		%%%
		\draw[very thick, line cap=round] (-3,1.2) -- (-5,-1.2);
		\draw[very thick, dashed, line cap=round] (-1,2) -- (-3,0.4);
		\draw[very thick, dashed, line cap=round] (-5,-1.2) -- (-3,0.4);
		\draw[very thick, line cap=round] (-1,2) -- (-3,1.2);
		\draw[very thick, dotted, line cap=round] (-3,0.4) -- (-4,0);
		\draw[very thick, dotted, line cap=round] (-3,0.4) -- (-3,1.2);
		%%%%%%
		
		\draw[thick, line cap=round, gray] (-5,2) -- (0,-8);
		\draw[thick, line cap=round, gray] (0,-8) -- (0,0);
		\draw[thick, line cap=round, gray] (0,0) -- (5,-6);
		\draw[thick, line cap=round, gray] (-5,2) -- (0,0);
		\draw[thick, line cap=round, gray] (0,0) -- (5,2);
		
		\end{tikzpicture}
	\subcaption{The link of boundary vertex $0$.}
	\end{subfigure}
	\begin{subfigure}[t]{0.32\textwidth}
	\centering
		\begin{tikzpicture}[scale=0.4]
		
		\draw[thick, dashed, line cap=round, gray] (5,-6) -- (-5,-6);
		\draw[thick, dashed, line cap=round, gray] (-5,2) -- (5,-6);
		
		\draw[thick, line cap=round, gray] (-5,-6) -- (0,-8);
		\draw[thick, line cap=round, gray] (0,-8) -- (5,-6);
		\draw[thick, line cap=round, gray] (-5,-6) -- (-5,2);
		\draw[thick, line cap=round, gray] (5,-6) -- (5,2);
		\draw[thick, line cap=round, gray] (5,2) -- (-5,2);
		
		%%%%%%
		% Surface
		\fill[pink] (-2,-7.2) -- (2,-7.2) -- (0,-4.8) -- cycle;
		\fill[pink] (-2,0.8) -- (2,0.8) -- (0,-3.2) -- cycle;
		%%%
		\draw[very thick, dashed, line cap=round] (-2,-7.2) -- (2,-7.2);
		\draw[very thick, line cap=round] (2,-7.2) -- (0,-4.8);
		\draw[very thick, line cap=round] (0,-4.8) -- (-2,-7.2);
		\draw[very thick, dotted, line cap=round] (2,-7.2) -- (-1,-6);
		%%%
		\draw[very thick, line cap=round] (-2,0.8) -- (2,0.8);
		\draw[very thick, line cap=round] (2,0.8) -- (0,-3.2);
		\draw[very thick, line cap=round] (0,-3.2) -- (-2,0.8);
		\draw[very thick, dotted, line cap=round] (-2,0.8) -- (1,-1.2);
		%%%%%%
		
		\draw[thick, line cap=round, gray] (-5,2) -- (0,-8);
		\draw[thick, line cap=round, gray] (0,-8) -- (0,0);
		\draw[thick, line cap=round, gray] (0,0) -- (5,-6);
		\draw[thick, line cap=round, gray] (-5,2) -- (0,0);
		\draw[thick, line cap=round, gray] (0,0) -- (5,2);
		
		\end{tikzpicture}
	\subcaption{The link of boundary vertex $1$.}
	\end{subfigure}
	\begin{subfigure}[t]{0.32\textwidth}
	\centering
		\begin{tikzpicture}[scale=0.4]
		
		\draw[thick, dashed, line cap=round, gray] (5,-6) -- (-5,-6);
		\draw[thick, dashed, line cap=round, gray] (-5,2) -- (5,-6);
		
		\draw[thick, line cap=round, gray] (-5,-6) -- (0,-8);
		\draw[thick, line cap=round, gray] (0,-8) -- (5,-6);
		\draw[thick, line cap=round, gray] (-5,-6) -- (-5,2);
		\draw[thick, line cap=round, gray] (5,-6) -- (5,2);
		\draw[thick, line cap=round, gray] (5,2) -- (-5,2);
		
		%%%%%%
		% Surface
		\fill[pink] (3,-6.8) -- (5,-2.8) -- (1,-6) -- cycle;
		\fill[pink] (3,1.2) -- (5,-1.2) -- (1,2) -- cycle;
		%%%
		\draw[very thick, line cap=round] (3,-6.8) -- (5,-2.8);
		\draw[very thick, dashed, line cap=round] (5,-2.8) -- (1,-6);
		\draw[very thick, dashed, line cap=round] (1,-6) -- (3,-6.8);
		\draw[very thick, dotted, line cap=round] (3,-4.4) -- (4,-4.8);
		\draw[very thick, dotted, line cap=round] (3,-4.4) -- (3,-6.82);
		%%%
		\draw[very thick, line cap=round] (3,1.2) -- (5,-1.2);
		\draw[very thick, dashed, line cap=round] (1,2) -- (3,0.4);
		\draw[very thick, dashed, line cap=round] (5,-1.2) -- (3,0.4);
		\draw[very thick, line cap=round] (1,2) -- (3,1.2);
		%%%%%%
		
		\draw[thick, line cap=round, gray] (-5,2) -- (0,-8);
		\draw[thick, line cap=round, gray] (0,-8) -- (0,0);
		\draw[thick, line cap=round, gray] (0,0) -- (5,-6);
		\draw[thick, line cap=round, gray] (-5,2) -- (0,0);
		\draw[thick, line cap=round, gray] (0,0) -- (5,2);
		
		\end{tikzpicture}
	\subcaption{The link of boundary vertex $2$.}
	\end{subfigure}
\caption{The three vertex-linking discs in the node gadget.}
\label{fig:nodeVertexLinks}
\end{figure}
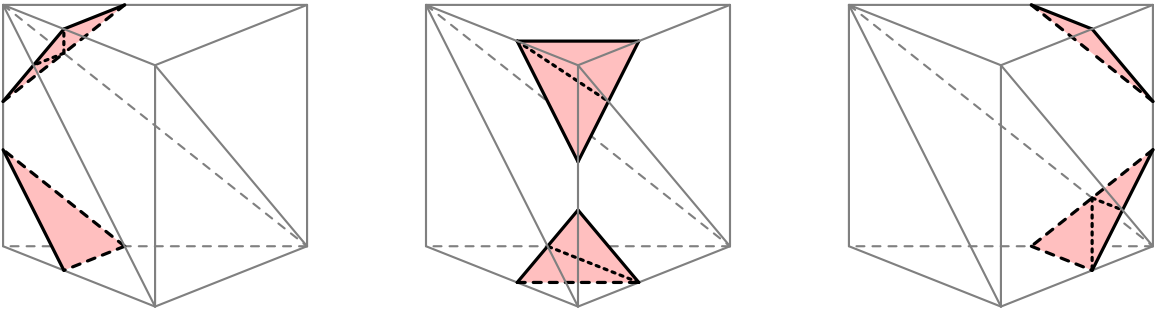

\begin{figure}[htbp]
\centering
	\begin{tikzpicture}[scale=0.45]
	
	\draw[thick, dashed, line cap=round, gray] (5,-6) -- (-5,-6);
	\draw[thick, dashed, line cap=round, gray] (-5,2) -- (5,-6);
	
	\draw[thick, line cap=round, gray] (-5,-6) -- (0,-8);
	\draw[thick, line cap=round, gray] (0,-8) -- (5,-6);
	\draw[thick, line cap=round, gray] (-5,-6) -- (-5,2);
	\draw[thick, line cap=round, gray] (5,-6) -- (5,2);
	\draw[thick, line cap=round, gray] (5,2) -- (-5,2);
	
	%%%%%%
	% Surface
	\fill[pink] (-2,-7.2) -- (2,-7.2) -- (0,-4.8) -- cycle;
	\fill[pink] (-2,0.8) -- (2,0.8) -- (0,-3.2) -- cycle;
	%%%
	\draw[very thick, dashed, line cap=round] (-2,-7.2) -- (2,-7.2);
	\draw[ultra thick, line cap=round, blue, midarrow={0.8}{\dirarrows}] (0,-4.8) -- (2,-7.2);
	\draw[ultra thick, line cap=round, red, midarrow={0.8}{\dirarrows}] (-2,-7.2) -- (0,-4.8);
	\draw[very thick, dotted, line cap=round] (2,-7.2) -- (-1,-6);
	%%%
	\draw[very thick, line cap=round] (-2,0.8) -- (2,0.8);
	\draw[ultra thick, line cap=round, blue, midarrow={0.7}{\dirarrows}] (2,0.8) -- (0,-3.2);
	\draw[ultra thick, line cap=round, red, midarrow={0.7}{\dirarrows}] (0,-3.2) -- (-2,0.8);
	\draw[very thick, dotted, line cap=round] (-2,0.8) -- (1,-1.2);
	%%%%%%
	
	\draw[thick, line cap=round, gray] (-5,2) -- (0,-8);
	\draw[thick, line cap=round, gray] (0,-8) -- (0,0);
	\draw[thick, line cap=round, gray] (0,0) -- (5,-6);
	\draw[thick, line cap=round, gray] (-5,2) -- (0,0);
	\draw[thick, line cap=round, gray] (0,0) -- (5,2);
	
	\end{tikzpicture}
\caption{The boundary of the disc $D_{k,i}$ is divided into the curves $c_{k,i}^+$ (red) and $c_{k,i}^-$ (blue). An ant walking along the boundary of $D_{k,i}$ in the assigned direction will always have vertex $v_{k,i}$ to its right.}
\label{fig:discBdryPieces}
\end{figure}
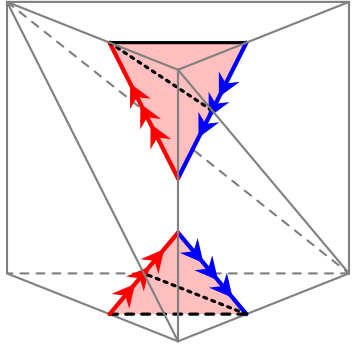

With all this in mind, consider any particular vertex $v$ in $\mathcal{T}_G$. Fix $k_0\in\{0,\ldots,n-1\}$ and $i_0\in\{0,1,2\}$ such that $v_{k_0,i_0}$ is one of the vertices that has been identified to form $v$. By construction of $\mathcal{T}_G$, the annulus $A_{k_0,i_0-1}$ is identified with some other annulus $A_{k_1,i_1+1}$, which causes the vertices $v_{k_0,i_0}$ and $v_{k_1,i_1}$ to be identified. Observe that as a result of this identification, the vertex-linking discs $D_{k_0,i_0}$ and $D_{k_1,i_1}$ get glued together in such a way that the curves $c_{k_0,i_0}^-$ and $c_{k_1,i_1}^+$ are identified with \emph{opposite} directions.

Continuing inductively, for each $m\in\{0,1,\ldots\}$, the annulus $A_{k_m,i_m-1}$ is identified with some other annulus $A_{k_{m+1},i_{m+1}-1}$, which causes the vertices $v_{k_m,i_m}$ and $v_{k_{m+1},i_{m+1}}$ to be identified. This identification subsequently causes the vertex-linking discs $D_{k_m,i_m}$ and $D_{k_{m+1},i_{m+1}}$ to be glued together in such a way that the curves $c_{k_m,i_m}^-$ and $c_{k_{m+1},i_{m+1}}^+$ are identified with opposing directions. Altogether, we get a sequence $D_{k_0,i_0},D_{k_1,i_1},\ldots$ of discs that are glued together in the manner shown in Figure \ref{fig:gluingVertLinks}.

\begin{figure}[htbp]
\centering
	\begin{tikzpicture}[scale=0.4]
	
	%%%%%%
	% Arrows and labels
	%%%
	
	\draw[thick, <->] (3,4) -- (5,4);
	\draw[thick, <->] (11,4) -- (13,4);
	\node at (20,4) {\LARGE $\cdots$};
	\draw[ultra thick, -{Latex}] (22,4) -- (26.5,4);
	\node at (36,4) {\LARGE $\cdots$};
	
	%%%
	%%%%%%
	
	%%%%%%
	% Unglued disc 0
	%%%
	
	% Fill and label
	\fill[pink] (0,0) to[out=135, in=270] (-2,3) -- (-2,5) to[out=90, in=225] (0,8) to[out=-45, in=90] (2,5) -- (2,3) to[out=270, in=45] (0,0);
	\node[text=red] at (0,4) {\Large $D_{k_0,i_0}$};
	
	% Edges
	\draw[ultra thick, line cap=round, midarrow={0.59}{\dirarrows}] (0,0) to[out=135, in=270] (-2,3) -- (-2,5) to[out=90, in=225] (0,8);
	\draw[ultra thick, line cap=round, midarrow={0.59}{\dirarrows}] (0,8) to[out=-45, in=90] (2,5) -- (2,3) to[out=270, in=45] (0,0);
	
	%%%
	%%%%%%
	
	%%%%%%
	% Unglued disc 1
	\begin{scope}[shift={(8,0)}]
	%%%
	
	% Fill and label
	\fill[cyan] (0,0) to[out=135, in=270] (-2,3) -- (-2,5) to[out=90, in=225] (0,8) to[out=-45, in=90] (2,5) -- (2,3) to[out=270, in=45] (0,0);
	\node[text=blue] at (0,4) {\Large $D_{k_1,i_1}$};
	
	% Edges
	\draw[ultra thick, line cap=round, midarrow={0.59}{\dirarrows}] (0,0) to[out=135, in=270] (-2,3) -- (-2,5) to[out=90, in=225] (0,8);
	\draw[ultra thick, line cap=round, midarrow={0.59}{\dirarrows}] (0,8) to[out=-45, in=90] (2,5) -- (2,3) to[out=270, in=45] (0,0);
	
	%%%
	\end{scope}
	%%%%%%
	
	%%%%%%
	% Unglued disc 1
	\begin{scope}[shift={(16,0)}]
	%%%
	
	% Fill and label
	\fill[lime] (0,0) to[out=135, in=270] (-2,3) -- (-2,5) to[out=90, in=225] (0,8) to[out=-45, in=90] (2,5) -- (2,3) to[out=270, in=45] (0,0);
	\node[text=olive] at (0,4) {\Large $D_{k_2,i_2}$};
	
	% Edges
	\draw[ultra thick, line cap=round, midarrow={0.59}{\dirarrows}] (0,0) to[out=135, in=270] (-2,3) -- (-2,5) to[out=90, in=225] (0,8);
	\draw[ultra thick, line cap=round, midarrow={0.59}{\dirarrows}] (0,8) to[out=-45, in=90] (2,5) -- (2,3) to[out=270, in=45] (0,0);
	
	%%%
	\end{scope}
	%%%%%%
	
	%%%%%%
	% Glued discs
	\begin{scope}[shift={(31,0)}]
	%%%
	
	% Fill
	\fill[pink] (0,0) to[out=150, in=270] (-3,3) -- (-3,5) to[out=90, in=210] (0,8) to[out=240, in=90] (-1,5) -- (-1,3) to[out=270, in=120] cycle;
	\fill[cyan] (0,0) to[out=120, in=270] (-1,3) -- (-1,5) to[out=90, in=240] (0,8) to[out=-60, in=90] (1,5) -- (1,3) to[out=270, in=60] cycle;
	\fill[lime] (0,0) to[out=60, in=270] (1,3) -- (1,5) to[out=90, in=-60] (0,8) to[out=-30, in=90] (3,5) -- (3,3) to[out=270, in=30] (0,0);
	
	% Edges
	\draw[ultra thick, line cap=round, midarrow={0.59}{\dirarrows}] (0,0) to[out=150, in=270] (-3,3) -- (-3,5) to[out=90, in=210] (0,8);
	\draw[ultra thick, line cap=round, midarrow={0.59}{\dirarrows}] (0,8) to[out=-30, in=90] (3,5) -- (3,3) to[out=270, in=30] (0,0);
	\draw[ultra thick, line cap=round] (0,0) to[out=120, in=270] (-1,3) -- (-1,5) to[out=90, in=240] (0,8);
	\draw[ultra thick, line cap=round] (0,0) to[out=60, in=270] (1,3) -- (1,5) to[out=90, in=-60] (0,8);
	
	%%%
	\end{scope}
	%%%%%%
	
	\end{tikzpicture}
\caption{Each disc in the sequence $D_{k_0,i_0},D_{k_1,i_1},\ldots$ is glued to the next, in such a way that for all $m\in\{0,1,\ldots\}$, the curves $c_{k_m,i_m}^-$ and $c_{k_{m+1},i_{m+1}}^+$ are identified with opposing directions.}
\label{fig:gluingVertLinks}
\end{figure}
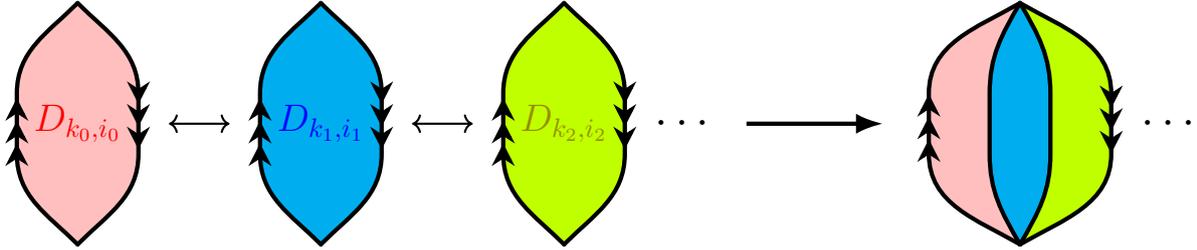

Eventually, there must be some $M$ such that $A_{k_M,i_M-1}$ is identified with $A_{k_0,i_0+1}$. The link of $v$ is therefore precisely the surface formed by gluing together all the discs $D_{k_0,i_0},\ldots,D_{k_M,i_M}$. Since the curves $c_{k_M,i_M}^-$ and $c_{k_0,i_0}^+$ are identified with opposing directions, observe that this vertex-linking surface is a sphere.

The same argument applies to every vertex of $\mathcal{T}_G$, so we conclude that $\mathcal{T}_G$ is a $3$-manifold triangulation. As we mentioned earlier, this shows that Problem \ref{prbm:spanningCentral} remains $\mathrm{NP}$-complete even if we restrict the input to be an orientable $3$-manifold triangulation.
\end{proof}

\section{Discussion}\label{sec:discussion}
As discussed in section \ref{sec:intro}, our underlying motivation for studying Problems \ref{prbm:absNormConOpt}, \ref{prbm:splittingSurf} and \ref{prbm:spanningCentral} was to gain insight into the computational complexity of the problem of finding a non-trivial normal sphere or disc. In particular, our proofs of $\mathrm{NP}$-hardness for Problems \ref{prbm:absNormConOpt} and \ref{prbm:spanningCentral} illustrate two possible approaches for proving that finding a non-trivial normal sphere or disc is $\mathrm{NP}$-hard; indeed, our two $\mathrm{NP}$-hardness proofs are by-products of our investigations into each of these approaches. Here, we briefly discuss some of the obstacles that we would need to overcome to make either of these approaches successful.

First, recall that Problem \ref{prbm:absNormConOpt} is an abstraction of the concrete problem of finding a non-trivial normal sphere or disc. As discussed in section \ref{sec:absNormConOpt}, our proof that the abstract problem is $\mathrm{NP}$-hard relies heavily on constructions that violate some of the geometric and topological restrictions that are inherent to the concrete problem. One potential strategy for proving that the concrete problem is also $\mathrm{NP}$-hard would therefore be to come up with new constructions that do not violate these geometric and topological restrictions. However, with these much stricter requirements, we can reasonably expect that any such constructions would need to be much more intricate than what we used in section \ref{sec:absNormConOpt}.

Alternatively, we could try to prove that finding a non-trivial normal sphere or disc is $\mathrm{NP}$-hard by coming up with a ``gadget proof'', similar to how we proved that finding a connected spanning central surface is $\mathrm{NP}$-hard. However, spanning central surfaces are much easier to work with in this context because they are ``determined locally'', in the following sense: a normal surface intersecting several gadgets is a spanning central surface (within the whole triangulation) if and only if the intersection with each gadget is itself a spanning central surface (within the gadget). This allows us to construct a triangulation with relatively little concern about what it will end up looking like globally, since we only ever need to analyse what happens locally. In contrast, spheres and discs are \emph{not} determined locally. We know that if a sphere or disc intersects multiple gadgets, then the intersection with each gadget must itself be a surface with genus $0$. But the converse does not hold: it is very easy to get a surface of positive genus from piecing together a collection of genus $0$ surfaces. For this reason, a ``gadget proof'' for the $\mathrm{NP}$-hardness of finding a non-trivial normal sphere or disc would probably need to be significantly more complicated than the one we used in section \ref{sec:splitting}.

%%
%% Bibliography
%%
\bibliography{HardnessNormalRefs}

\end{document}